\DeclareMathOperator{\mat}{Mat}
\DeclareMathOperator{\tr}{tr}
\DeclareMathOperator{\Umath}{U}
\DeclareMathOperator{\herm}{Herm}
\DeclareMathOperator{\pos}{Pos}
\DeclareMathOperator{\proj}{Proj}
\DeclareMathOperator{\den}{Den}
\DeclareMathOperator{\T}{T}
\DeclareMathOperator{\chan}{Chan}
\DeclarePairedDelimiter\bra{\langle}{\rvert}
\DeclarePairedDelimiter\ket{\lvert}{\rangle}
\DeclarePairedDelimiterX\braket[2]{\langle}{\rangle}{#1 \delimsize\vert #2}
\DeclareMathAlphabet{\mathbbm}{U}{bbold}{m}{n}
\theoremstyle{plain}
\newtheorem{theorem}{Theorem}
\newtheorem{corollary}[theorem]{Corollary}
\newtheorem{lemma}[theorem]{Lemma}
\newtheorem{proposition}[theorem]{Proposition}
\theoremstyle{remark}
\newtheorem*{remark}{Remark}
\theoremstyle{definition}
\newtheorem{definition}[theorem]{Definition}
\theoremstyle{definition}
\newtheorem{openProb}{Open Problem}
\title{Eliminating Intermediate Measurements in Space-Bounded Quantum Computation}
\author{Bill Fefferman}
\author{Zachary Remscrim\thanks{Corresponding author, remscrim@uchicago.edu; portions of this research were completed while a member of the Department of Mathematics at MIT.}}
\affil{Department of Computer Science, The University of Chicago}
\date{}
\begin{document}
	
	\maketitle
	
	\begin{abstract}
		
		A foundational result in the theory of quantum computation, known as the ``principle of safe storage,'' shows that it is always possible to take a quantum circuit and produce an equivalent circuit that makes all measurements at the end of the computation.  While this procedure is time efficient, meaning that it does not introduce a large overhead in the number of gates, it uses extra ancillary qubits, and so is not generally space efficient.  It is quite natural to ask whether it is possible to eliminate intermediate measurements without increasing the number of ancillary qubits.
		
		We give an affirmative answer to this question by exhibiting a procedure to eliminate all intermediate measurements that is simultaneously space efficient and time efficient. In particular, this shows that the definition of a space-bounded quantum complexity class is robust to allowing or forbidding intermediate measurements.  A key component of our approach, which may be of independent interest, involves showing that the well-conditioned versions of many standard linear-algebraic problems may be solved by a quantum computer in less space than seems possible by a classical computer.
	\end{abstract}
	
	\section{Introduction}\label{sec:intro}
	Quantum computation has the potential to obtain dramatic speedups for important problems such as quantum simulation (see, e.g., \cite{feynman,lloyd}) and integer factorization \cite{shor1994algorithms}. While fault-tolerant, fully scalable quantum computers may still be far from fruition, we have now entered an exciting period in which impressive but resource constrained quantum experiments are being implemented in many academic and industrial labs.  As the field transitions from ``proof of principle'' demonstrations of provable quantum advantage to solving useful problems on near-term experiments, it is particularly critical to characterize the algorithmic power of feasible models of quantum computations that have restrictive resources, such as ``time'' (i.e., the number of gates in the circuit) and ``space'' (i.e., the number of qubits on which the circuit operates), and to understand how these resources can be traded-off.
	
	A foundational question in this area asks if it is possible to \textit{space-efficiently} eliminate intermediate measurements in a quantum computation (see e.g.,  \cite{watrous1999space,watrous2001quantum,watrous2003complexity,melkebeek2012time,ta2013inverting,fefferman2018complete,jozsa2010matchgate,perdrix2006classically}).  While a classic result known as the ``principle of safe storage''\footnote{The ``principle of deferred measurement'' is another common name for this result.} states that it is always possible to \textit{time-efficiently} defer intermediate measurements to the end of a computation \cite{aharonov1998quantum, nielsen2002quantum}, this procedure uses extra ancilla qubits, and so is not generally space efficient.  More specifically, if a quantum circuit $Q$ acts on $s$ qubits and performs $m$ intermediate measurements, the circuit $Q'$ constructed using this principle operates on $s+poly(m)$ qubits; if, for example, $s=O(\log t)$ and $m=\Theta(t)$, this entails an \textit{exponential blowup} in the amount of needed \textit{space}.
	
	Our main result solves this problem.  We show that every problem solvable by a ``general" quantum algorithm, which may make arbitrary use of quantum measurements, can also be solved, using the same amount of space, by a ``unitary'' quantum algorithm, which may not perform any intermediate measurements. As an immediate corollary, this shows that, in the space-bounded setting, unitary quantum algorithms are at least as powerful as probabilistic algorithms, resolving a longstanding open problem \cite{watrous2001quantum,melkebeek2012time}. 
	
	In the process of proving this result, we also obtain the following result, which is likely of independent interest: approximating the solution of the ``well-conditioned" versions of various standard linear-algebraic problems, such as the determinant problem, the matrix inversion problem, or the matrix powering problem, is complete for the class of bounded-error logspace quantum algorithms. These are a new class of natural problems on which quantum computers seem to outperform their classical counterparts.
	
	\subsection{Eliminating Intermediate Measurements}\label{sec:intro:elimMeasurements}
	
	Before proceeding further, it is worthwhile to briefly discuss why it is desirable to be able to eliminate intermediate measurements. Firstly, quantum measurements are a natural resource, much as time and space are. In addition to the general desirability of using as few resources as possible in any sort of computational task, it is especially desirable to avoid intermediate measurements, due to the technical challenges involved in implementing such measurements and resetting qubits to their initial states (for a discussion of these issues from an experimental perspective see, e.g., \cite{divincenzo}). Secondly, unitary computations are \textit{reversible}, whereas quantum measurement is an inherently irreversible process. The ability to ``undo" a unitary subroutine, by running it in reverse, is routinely used in the design and analysis of quantum algorithms (see, for instance, \cite{bennett1997strengths,marriott2005quantum,watrous2009zero,fefferman2016space,fefferman2018complete,shor2008estimating,nagaj2009fast}). Moreover, reversible computations may be performed without generating heat \cite{landauer1961irreversibility}. Thirdly, by demonstrating that unitary quantum space and general quantum space are equivalent in power, we show that the definition of quantum space is quite \textit{robust}. Allowing intermediate measurements, or even general quantum operations, does not provide any additional power in the space-bounded setting.
	
	Let $\mathsf{BQ_U SPACE}(s)$ (resp. $\mathsf{BQSPACE}(s)$) denote the class of promise problems recognizable with two-sided bounded-error by a uniform family of unitary (resp. general) quantum circuits, where, for each input of length $n$, there is a corresponding circuit that operates on $O(s(n))$ qubits and has $2^{O(s(n))}$ gates. Note that it is standard to require that the running time of a computation is at most exponential in its space bound; see, for instance, \cite{watrous1999space,watrous2003complexity,melkebeek2012time,saks1996randomization} for the importance of this restriction in quantum and/or probabilistic space-bounded computation. Furthermore, let $\mathsf{Q_UMASPACE}(s)$ (resp. $\mathsf{QMASPACE}(s)$) denote those promise problems recognized by a unitary (resp. general) quantum Merlin-Arthur protocol that operates in space $O(s(n))$ and time $2^{O(s(n))}$. An equivalent definition of these complexity classes may be given using quantum Turing machines; see \Cref{sec:prelim:quantumSpace} for further details. 
	
	Our main result is:
	
	\begin{restatable}{theorem}{restateBquspaceEqualsBqspace}\label{thm:intro:bquspaceEqualsBqspace}
		For any space-constructible function $s: \mathbb{N} \rightarrow \mathbb{N}$, where $s(n)=\Omega(\log n)$, we have $$\mathsf{BQ_U SPACE}(s)=\mathsf{BQSPACE}(s)=\mathsf{Q_UMASPACE}(s)=\mathsf{QMASPACE}(s).$$ 
	\end{restatable}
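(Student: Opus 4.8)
Since a unitary circuit is a special case of a general one, and since an empty proof turns any (unitary or general) quantum algorithm into a (unitary or general) quantum Merlin--Arthur protocol, we immediately have $\mathsf{BQ_USPACE}(s)\subseteq\mathsf{BQSPACE}(s)\subseteq\mathsf{QMASPACE}(s)$ and $\mathsf{BQ_USPACE}(s)\subseteq\mathsf{Q_UMASPACE}(s)\subseteq\mathsf{QMASPACE}(s)$. Hence the whole theorem follows from the single inclusion $\mathsf{QMASPACE}(s)\subseteq\mathsf{BQ_USPACE}(s)$, and a standard padding/translation argument (using that $s$ is space-constructible and $s=\Omega(\log n)$) further reduces this to the case $s=\log n$; write $\mathsf{QMAL}$ and $\mathsf{BQ_UL}$ for the corresponding classes. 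Thus the goal is $\mathsf{QMAL}\subseteq\mathsf{BQ_UL}$.

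The plan is to route every computation through the ``well-conditioned'' linear-algebraic problems. The first step is to prove that the well-conditioned matrix powering problem $\mathrm{WMP}$ --- given a $\mathrm{poly}(n)$-dimensional matrix $M$ that is power-bounded ($\|M^{\ell}\|\le\mathrm{poly}(n)$ for all $\ell$), an integer $k\le\mathrm{poly}(n)$ in binary, and indices $i,j$, with the promise that $\mathrm{Re}\,(M^{k})_{ij}$ is either $\ge 2/3$ or $\le 1/3$, decide which --- and the analogous well-conditioned matrix inversion problem $\mathrm{WMI}$ are $\mathsf{BQ_UL}$-complete. Hardness is obtained by encoding a unitary logspace circuit $U_{T}\cdots U_{1}$ via a \emph{cyclic} clock unitary $W=\sum_{t}\ket{t{+}1\bmod T}\bra{t}\otimes U_{t}$: $W$ is unitary, hence power-bounded with $\|W^{\ell}\|=1$, and entries of $W^{T}$ recover the acceptance amplitude, so the acceptance probability is a $\mathrm{WMP}$ quantity. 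Membership of $\mathrm{WMP}/\mathrm{WMI}$ in $\mathsf{BQ_UL}$ is the central technical component: one gives a unitary quantum algorithm that, after Hermitizing $M$, performs a phase-estimation / eigenvalue-transformation procedure (raising the estimated singular values to the $k$th power, or inverting them), implementable in $O(\log n)$ space and $\mathrm{poly}(n)$ time, in the spirit of Ta-Shma's logspace matrix-inversion algorithm but pushed to handle non-normal, merely power-bounded matrices; the polynomial power-bound is exactly what makes the required $1/\mathrm{poly}(n)$ precision affordable in logarithmic space.

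The second step reduces $\mathsf{QMAL}$ to $\mathrm{WMP}$ in two layers. \emph{(a) Eliminating intermediate measurements.} A general quantum logspace computation is a composition $\Phi_{T}\circ\cdots\circ\Phi_{1}$ of quantum channels, with acceptance probability $\tr\!\big(\Pi_{\mathrm{acc}}(\Phi_{T}\circ\cdots\circ\Phi_{1})(\rho_{0})\big)$; passing to the natural (transfer-matrix) representation $K(\cdot)$ and folding the time index into a cyclic clock register turns this into $v^{\dagger}K(\Phi)^{T}u$ for a single channel $\Phi$. The point is that every partial power $\Phi^{\ell}$ is again a channel, so $\|K(\Phi)^{\ell}\|=\|K(\Phi^{\ell})\|\le\sqrt{\dim}=\mathrm{poly}(n)$ for all $\ell$ --- a legitimate $\mathrm{WMP}$ instance --- whence general quantum logspace, i.e.\ $\mathsf{BQL}$, is contained in $\mathsf{BQ_UL}$. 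In particular quantum logspace becomes closed under composition with unitary logspace subroutines (the subroutines can now be uncomputed). \emph{(b) Removing the prover.} For a $\mathsf{QMAL}$ protocol, the maximal acceptance probability over all proof states equals $\lambda_{\max}(R)$, where $R$ is the positive-semidefinite contraction on the proof register obtained by propagating $\Pi_{\mathrm{acc}}$ backward through the verifier's channel and tracing out the workspace; each entry of $R$ is a general-quantum-logspace acceptance probability, hence by (a) computable in $\mathsf{BQ_UL}$ to $1/\mathrm{poly}(n)$ accuracy. The bounded-error promise says $\lambda_{\max}(R)\ge 2/3$ or $\le 1/3$, so it suffices to examine a high power $R^{k}$ with $k=\Theta(\log n)$ (power iteration: $(2/3)^{k}$ and $\dim\cdot(1/3)^{k}$ separate); since $R$ is power-bounded with entries supplied by a $\mathsf{BQ_UL}$ subroutine, this is an instance of $\mathrm{WMP}$ with oracle entries, solvable in $\mathsf{BQ_UL}$ using the closure from (a) together with the error-robustness of the $\mathrm{WMP}$ algorithm. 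Composing layers (a) and (b) gives $\mathsf{QMAL}\subseteq\mathsf{BQ_UL}$, and with the trivial inclusions this proves the theorem.

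I expect the principal obstacle to be the $\mathsf{BQ_UL}$-membership of the well-conditioned matrix problems: implementing phase estimation and matrix-function application \emph{unitarily}, in $O(\log n)$ space, with guaranteed inverse-polynomial accuracy, and robustly enough to tolerate approximately specified inputs (needed for the nested reduction in layer (b)); handling non-normal but power-bounded matrices --- such as the transfer matrix of a channel, which is not a norm-contraction --- is the delicate case. A secondary difficulty is the book-keeping of the condition/power bounds: one must verify that the clock constructions and the channel norm bound $\|K(\Phi^{\ell})\|\le\sqrt{\dim}$ keep all relevant quantities polynomial in $n$, and that error amplification across the composed reductions stays within the logspace budget.
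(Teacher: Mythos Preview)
Your overall architecture matches the paper's almost exactly: reduce to logspace by padding, show well-conditioned matrix powering/inversion is $\mathsf{BQ_UL}$-complete, reduce $\mathsf{BQL}$ to it via the natural representation $K(\Phi)$ of the verifier's channels together with the bound $\lVert K(\Phi^{\ell})\rVert\le\sqrt{\dim}$, and handle $\mathsf{QMAL}$ by passing to the positive-semidefinite ``verifier matrix'' $R$ on the proof register whose top eigenvalue equals the optimal acceptance probability, with entries of $R$ supplied by a $\mathsf{BQ_UL}$ subroutine. These are precisely the paper's steps (Lemmas~10--12 and the padding in the Appendix).

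Two differences are worth flagging. First, for the $\mathsf{QMAL}$ step you propose power iteration on $R$; the paper instead notes that the gap promise on $\lambda_{\max}(R)$ is itself a $\mathsf{Q_UMAL}$ instance (Merlin sends the top eigenvector), then uses $\mathsf{Q_UMAL}=\mathsf{BQ_UL}$ from \cite{fefferman2018complete} and self-lowness of $\mathsf{BQL}$ to absorb the oracle. Your route is fine but requires the extra robustness analysis you mention; the paper's route sidesteps it.

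Second, and more substantively, your sketch of $\mathrm{WMP}\in\mathsf{BQ_UL}$ --- ``Hermitize $M$, then raise the estimated singular values to the $k$th power'' --- does not work as stated: for a non-normal $M$ the singular values of $M$ do not determine $M^{k}$, so a singular-value transformation of $M$ cannot recover $(M^{k})_{ij}$. You correctly flag this as the principal obstacle. The paper resolves it not by strengthening the quantum algorithm but by an explicit chain of block-matrix reductions that preserve the well-conditioned promise, $\mathsf{MATPOW}\to\mathsf{MATINV}\to\mathsf{MATINV}^{+}$ (Lemmas~8--9), landing on a \emph{positive-definite} instance to which the unitary phase-estimation algorithm of \cite{fefferman2018complete} applies directly. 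That reduction, rather than a direct non-normal eigenvalue procedure, is the missing concrete idea in your plan.
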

	
	\begin{remark}
		Note that $\mathsf{BPSPACE}(s) \subseteq \mathsf{BQ_U SPACE}(s)$ was not previously known to hold, where $\mathsf{BPSPACE}(s)$ denotes the class of problems recognizable by a \textit{probabilistic} algorithm in space $O(s)$ (and time $2^{O(s)}$); see, e.g., \cite{watrous2001quantum,melkebeek2012time} for discussion of this question. As one would expect quantum computation to generalize probabilistic computation, the lack of a proof of this containment was unfortunate. Since it is clear that $\mathsf{BPSPACE}(s) \subseteq \mathsf{BQSPACE}(s)$, we have $\mathsf{BPSPACE}(s) \subseteq \mathsf{BQ_U SPACE}(s)$, resolving this question.
	\end{remark}  
	
	\begin{remark}
		To further clarify the parameters of our result, given a \textit{general} quantum algorithm that operates in space $s$ and time $t$, we produce a \textit{unitary} quantum algorithm that operates in space $O(s +\log t)$ and time $poly(t 2^{s})$. Note that these parameters coincide with that of the space-efficient simulation of a deterministic algorithm by a (classical) reversible algorithm \cite{lange2000reversible}. Further note that in the extreme (but natural) setting in which $t=2^{O(s)}$ (e.g. quantum logspace), this procedure is simultaneously space-efficient and time-efficient. On the other hand, in the opposite extreme setting in which $t=poly(s)$, this procedure is no longer time-efficient; of course, in this setting, the standard ``principle of deferred measurement" is simultaneously space-efficient and time-efficient. Between these two extremes, one has time-space tradeoffs analogous to those of the simulation of deterministic algorithms by reversible algorithms \cite{buhrman2001time}. 
	\end{remark}  
	
	We also study the one-sided (bounded-error and unbounded-error) analogues of the aforementioned two-sided bounded-error space-bounded quantum complexity classes. We establish analogous results concerning the relationship between the unitary and general versions of these classes; see \Cref{sec:singular} for a formal statement of these results.
	
	\subsection{Exact and Approximate Linear Algebra}\label{sec:intro:linAlg}
	
	Let $\mathsf{intDET}$ denote the problem of computing the determinant of an $n \times n$ integer-valued matrix, and, following its original definition by Cook \cite{cook1985taxonomy}, let $\mathsf{DET}^*$ denote the class of problems $\mathsf{NC}^1$ (Turing) reducible to $\mathsf{intDET}$. Let $\mathsf{BQ_U L}=\mathsf{BQ_U SPACE}(\log(n))$, $\mathsf{BQL}=\mathsf{BQSPACE}(\log(n))$, and $\mathsf{BPL}=\mathsf{BPSPACE}(\log(n))$ denote the bounded-error quantum and probabilistic logspace classes. Before our work, the following relationships were known \cite{watrous1999space,watrous2003complexity}: $\mathsf{BQ_U L} \subseteq \mathsf{BQL} \subseteq \mathsf{DET}^*$ and $\mathsf{BPL} \subseteq \mathsf{BQL}$. Many natural linear-algebraic problems are $\mathsf{DET}^*$-complete, such as $\mathsf{intDET}$, $\mathsf{intMATINV}$ (the problem of computing a single entry of the inverse of a matrix), and $\mathsf{intITMATPROD}$ (the problem of computing a single entry of the product of polynomially-many matrices). It seems rather unlikely that any such $\mathsf{DET}^*$-complete problem is in the class $\mathsf{BQL}$, as this would imply $\mathsf{BQL}=\mathsf{DET}^*$, and, therefore, $\mathsf{NL} \subseteq \mathsf{BQL}$. 
	
	We next consider the problem of \textit{approximating} the answer to such linear-algebraic problems. Let $poly\text{-conditioned-}\mathsf{MATINV}$ denote the problem of approximating, to additive $1/poly(n)$ accuracy, a single entry of the inverse of an $n \times n$ matrix $A$ with \textit{condition number} $\kappa(A)=poly(n)$ (see \Cref{sec:determinant} for a precise definition). Ta-Shma \cite{ta2013inverting}, building on the landmark result of Harrow, Hassidim, and Lloyd \cite{harrow2009quantum}, showed $poly\text{-conditioned-}\mathsf{MATINV} \in \mathsf{BQL}$. Fefferman and Lin \cite{fefferman2018complete} improved upon this result by presenting a \textit{unitary} quantum logspace algorithm and proving a matching $\mathsf{BQ_U L}$-hardness result, thereby exhibiting the first known natural $\mathsf{BQ_U L}$-complete (promise) problem. We further extend this line of research by proving the following theorem, which demonstrates an intriguing relationship between $\mathsf{BQ_U L}$ and $\mathsf{DET}^*$: 
	
	\begin{restatable}{theorem}{restateBqulComplete}\label{thm:bqulComplete}
		All of the $poly$-conditioned versions of the ``standard" $\mathsf{DET}^*$-complete problems, given in \Cref{def:detPromiseProblems:detAndInv,def:detPromiseProblems:powAndItprod} are $\mathsf{BQ_U L}$-complete.
	\end{restatable}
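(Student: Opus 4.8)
The plan is to establish, for each of the four $poly$-conditioned problems (determinant, matrix inversion, matrix powering, iterated matrix product), both membership in $\mathsf{BQ_U L}$ and $\mathsf{BQ_U L}$-hardness at once, by proving that all four are \emph{logspace-interreducible via promise-preserving reductions} and then invoking two facts already in hand: the Fefferman--Lin theorem \cite{fefferman2018complete} that $poly\text{-conditioned-}\mathsf{MATINV}$ is $\mathsf{BQ_U L}$-complete, and our main result \Cref{thm:intro:bquspaceEqualsBqspace}, in particular $\mathsf{BQ_U L}=\mathsf{BQL}$. The latter matters here for two reasons. First, $\mathsf{BQL}$ is easily seen to be closed under (deterministic) logspace reductions --- the reduction can be computed on demand whenever the quantum algorithm queries an input bit --- so once we have the interreducibility, membership and hardness for all four problems follow immediately from the single known $\mathsf{BQ_U L}$-complete problem. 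Second, it frees us, when designing the reductions, from having to worry about reversibility: any ordinary deterministic logspace (or $\mathsf{NC}^1$) transformation suffices.

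For the reductions into $poly\text{-conditioned-}\mathsf{MATINV}$: a single entry of an iterated product $M_1\cdots M_m$ (and, as a special case, of a power $A^k$, $k=poly(n)$) appears as a block of the inverse of the block upper-bidiagonal matrix $L$ with identity blocks on the diagonal and $-M_i$ (suitably normalized so that $\|M_i\|\le 1/2$) on the superdiagonal; since $L^{-1}$ is the finite Neumann sum $\sum_{j\ge 0}(I-L)^j$ with geometrically decaying terms, $\kappa(L)=O(1)$, so this is a legal $poly$-conditioned-inversion instance, and unwinding the normalization converts an additive approximation of the chosen entry of $L^{-1}$ into one of the corresponding entry of the product. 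For the determinant, I would use Jacobi's formula, $\tfrac{d}{dt}\log\det(A+tB)=\tr\!\big((A+tB)^{-1}B\big)$: choosing a convenient base point whose determinant is known and integrating numerically along a straight-line path to $A$ expresses $\log\det A$ (hence the appropriately rescaled determinant) in terms of traces of inverses of a polynomially conditioned family of matrices, each estimable given the $poly\text{-conditioned-}\mathsf{MATINV}$ algorithm; alternatively one may use $\log\det A=-\sum_{k\ge 1}\tfrac1k\tr\!\big((I-A)^k\big)$ after normalizing so that $0\prec A\preceq I$, where $\kappa(A)=poly(n)$ guarantees that $poly(n)$ terms suffice and each trace of a power of a contraction is estimable in $\mathsf{BQL}$. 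Conversely, reducing $poly\text{-conditioned-}\mathsf{MATINV}$ to the determinant is Cramer's rule --- an entry of $A^{-1}$ is a ratio of (signed) minors of $A$, whose condition numbers are polynomially related to $\kappa(A)$ --- and reducing it to iterated products/powering is again the Neumann series for $A=I-N$ with $\|N\|$ bounded away from $1$; these give the remaining directions of the interreducibility.

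The step I expect to be the main obstacle is the careful bookkeeping of the promise through these reductions, and in particular pinning down the right normalization in the determinant case. The determinant of an entry-bounded, polynomially conditioned matrix can still be exponentially small or large, so ``additive $1/poly(n)$ approximation'' of the raw determinant is either trivial or impossible; one must instead work with a rescaled determinant (equivalently, an additive approximation of $\log|\det|$), and the subtlety is to choose this normalization so that it is simultaneously the natural one used in the problem definition (\Cref{def:detPromiseProblems:detAndInv,def:detPromiseProblems:powAndItprod}) and exactly the quantity that the Jacobi's-formula reduction produces. A secondary technical point, needed to make the hardness direction rigorous, is \emph{honesty}: the reductions must never output an instance violating the target promise (e.g.\ a non-invertible or ill-conditioned matrix, or a path point $A+tB$ that is singular), on which the target algorithm's behavior would be unconstrained; this is handled by a judicious choice of base point, path, and scaling parameters, together with padding to absorb the polynomial blow-ups in dimension and condition number.
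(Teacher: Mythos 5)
Your overall strategy---start from the Fefferman--Lin completeness of $poly\text{-conditioned-}\mathsf{MATINV}$ and close a cycle of promise-preserving logspace reductions among the linear-algebraic problems---is exactly the paper's, and two of your constructions (the block-bidiagonal Neumann reduction taking iterated products and powers into matrix inversion, and the truncated series $\ln\det(H)=-\sum_k \tr((I-H)^k)/k$ for the determinant) coincide with \Cref{lemma:reduction:iterMatProdToMatPow,lemma:reduction:matPowToMatInv,lemma:reduction:posDetToSumIterMatProd}. (The appeal to \Cref{thm:intro:bquspaceEqualsBqspace} is unnecessary, though not circular: $\mathsf{BQ_U L}$ is already closed under $\leq_{\mathsf{L}}^m$ reductions because the classical uniformity machine, not the quantum circuit, computes the reduction, so no reversibility issue arises.) The serious gap is the hardness direction for the determinant. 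Your reduction from $\mathsf{MATINV}$ to $\mathsf{DET}$ via Cramer's rule fails: the minors of a well-conditioned matrix are \emph{not} well-conditioned. Indeed the relevant minor of $A$ is singular exactly when $A^{-1}[s,t]=0$ (already for $A=I_n$ and $s\neq t$ it has a zero column), so your reduction outputs instances violating the $poly$-conditioned promise precisely on the ``small'' side of the $\mathsf{MATINV}$ promise, and the claimed polynomial relation between the minors' condition numbers and $\kappa(A)$ is false. The paper avoids this (\Cref{lemma:reduction:nonnegIterMatProdToDet}) by applying the matrix determinant lemma to a rank-one update $C=B+\ket{nm+t}\bra{s}$ of the block-bidiagonal matrix $B$ with $\det(B)=1$, so that $\det(C)=1+A_{1,m}[s,t]$ and both $\sigma_1(C)$ and $\sigma_1(C^{-1})$ are explicitly controlled; this in turn requires first passing to the nonnegative variant $\mathsf{ITMATPROD}^{\geq 0}$ (\Cref{lemma:reduction:iterMatProdToNonnegIterMatProd}) so that $\lvert 1+A_{1,m}[s,t]\rvert$ determines $A_{1,m}[s,t]$ and $C$ stays well-conditioned.

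Two further points need repair. First, both of your series arguments (the Neumann series for $A^{-1}$ and the logarithmic series for $\det A$) require the spectrum of $I-A$ to lie strictly inside the unit disk, which fails for a general well-conditioned contraction, where $\sigma_1(I-A)$ can approach $2$; ``normalizing so that $0\prec A\preceq I$'' is not achievable by scaling alone. The paper passes to positive-definite instances via $A\mapsto AA^{\dagger}$ for the determinant (\Cref{lemma:reduction:detToPosDet}, using $\lvert\det(A)\rvert^2=\det(AA^{\dagger})$) and via the Hermitian block dilation of \Cref{lemma:reduction:matInvToPosMatInv} for inversion; these are exactly why the variants $\mathsf{DET}^{+}$ and $\mathsf{MATINV}^{+}$ appear as intermediate stops, and your argument must include them (as it must include $\mathsf{SUMITMATPROD}$ and the other variants of \Cref{def:detPromiseProblems:detAndInv,def:detPromiseProblems:powAndItprod}, since the theorem asserts completeness for all of them). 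Second, normalizing each factor $M_i$ so that $\lVert M_i\rVert\leq 1/2$ scales the product entry by $2^{-m}$, so an additive $1/poly$ approximation of the scaled entry gives no information about the original entry; the correct move (\Cref{lemma:reduction:matPowToMatInv}) is to leave the factors alone, bound $\sigma_1(L^{-1})\leq 1+m\kappa(n)$ directly from the promise on partial products, and rescale the single matrix $L$ once by a polynomial factor.
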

	
	This shows that several natural linear-algebraic problems are in $\mathsf{BQ_U L}$, and, moreover, are not in $\mathsf{BPL}$ (unless $\mathsf{BQ_U L}=\mathsf{BPL}$). In particular, the above theorem shows $poly\text{-conditioned-}\mathsf{ITMATPROD} \in \mathsf{BQ_U L}$. We also show that this problem is $\mathsf{BQL}$-hard, which implies $\mathsf{BQL}=\mathsf{BQ_U L}$; \Cref{thm:intro:bquspaceEqualsBqspace}, which states the more general equivalence for any larger space bound, then follows from a standard padding argument.
	
	We next exhibit several other applications of this theorem. Firstly, in \Cref{sec:determinant:approximateCounting}, we consider \textit{fully logarithmic approximation schemes}, whose study was initiated by Doron and Ta-Shma \cite{doron2015randomization}. Using the preceding theorem, we show that the $\mathsf{BQL}$ vs. $\mathsf{BPL}$ question is equivalent to several distinct questions involving the relative power of quantum and probabilistic fully logarithmic approximation schemes.	Secondly, consider $\kappa(n)\text{-conditioned-}\mathsf{DET}$, the problem of approximating, to within a \textit{multiplicative} factor $1+1/poly(n)$, the determinant of an $n \times n$ matrix with condition number $\kappa(n)$. Boix-Adser\`{a}, Eldar, and Mehraban \cite{boix2019approximating} recently showed that $\kappa(n)\text{-conditioned-}\mathsf{DET}\in \mathsf{DSPACE}(\log(n) \log(\kappa(n))  poly(\log\log n))$. They also raised the following question: is $poly\text{-conditioned-}\mathsf{DET}$  $\mathsf{BQL}$-complete? As an immediate consequence of \Cref{thm:bqulComplete}, we answer their question in the affirmative.
	
	\begin{corollary}\label{prop:introPolyCondDetIsBQULcomplete}
		$poly\text{-conditioned-}\mathsf{DET}$ is $\mathsf{BQL}(=\mathsf{BQ_U L})$-complete.
	\end{corollary}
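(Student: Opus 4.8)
The plan is to obtain this as a direct corollary of \Cref{thm:bqulComplete}. That theorem asserts that every $poly$-conditioned version of the standard $\mathsf{DET}^*$-complete problems listed in \Cref{def:detPromiseProblems:detAndInv,def:detPromiseProblems:powAndItprod} --- in particular $poly\text{-conditioned-}\mathsf{DET}$ --- is $\mathsf{BQ_U L}$-complete. Combined with the equality $\mathsf{BQL}=\mathsf{BQ_U L}$, which is itself deduced from \Cref{thm:bqulComplete} (via the membership $poly\text{-conditioned-}\mathsf{ITMATPROD}\in\mathsf{BQ_U L}$ together with its $\mathsf{BQL}$-hardness), this immediately yields that $poly\text{-conditioned-}\mathsf{DET}$ is $\mathsf{BQL}$-complete, answering the question of Boix-Adser\`{a}, Eldar, and Mehraban.

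The one point that needs checking is that the problem $poly\text{-conditioned-}\mathsf{DET}$ as posed in \cite{boix2019approximating} --- approximate $\det A$ to within a multiplicative factor $1+1/poly(n)$ when $\kappa(A)=poly(n)$ --- is the same promise problem, up to logspace-computable reductions, as the one shown complete in \Cref{thm:bqulComplete}. I would do this by passing to logarithms: with $\sigma_1(A)\ge\cdots\ge\sigma_n(A)>0$ the singular values of $A$, one has $\log\lvert\det A\rvert=\sum_{i=1}^{n}\log\sigma_i(A)$, and the condition-number promise $\kappa(A)=\sigma_1(A)/\sigma_n(A)=poly(n)$, together with the normalization $\|A\|\le 1$ (or any $poly(n)$ bound, after rescaling) fixed in \Cref{def:detPromiseProblems:detAndInv}, forces $\lvert\log\lvert\det A\rvert\rvert=O(n\log n)$. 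Hence a multiplicative $(1+1/poly(n))$-approximation of $\det A$ translates, with only a polynomial change in the accuracy parameter, into an additive $1/poly(n)$-approximation of the (appropriately normalized) quantity computed in \Cref{def:detPromiseProblems:detAndInv}, and conversely; the sign of $\det A$ is either fixed by the promise or recoverable by a separate logspace computation. Since all of these translations are computable in logspace (indeed in $\mathsf{NC}^1$), they preserve $\mathsf{BQ_U L}$-completeness.

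I do not expect a genuine obstacle here: essentially all of the content is in \Cref{thm:bqulComplete}, and the corollary reduces to confirming that two natural formulations of the same well-conditioned determinant problem coincide up to logspace reductions. The one mildly delicate point is that one needs the condition-number promise to bound $\log\lvert\det A\rvert$ from \emph{both} sides --- this is precisely why the hypothesis $\kappa(A)=poly(n)$ (rather than merely a polynomial bound on $\|A^{-1}\|$) is the right one to make the multiplicative and additive formulations interchangeable.
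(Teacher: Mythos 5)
Your proposal is correct and follows essentially the same route as the paper: the corollary is obtained there as an immediate consequence of \Cref{thm:bqulComplete} together with $\mathsf{BQL}=\mathsf{BQ_U L}$ (\Cref{thm:bqlEqualsBqulEqualsQmal}). The reconciliation with the multiplicative formulation is already built into \Cref{def:detPromiseProblems:detAndInv,def:polyConditionedDetPromiseProblems} (an $e^{\pm\epsilon/2}$ multiplicative approximation of $\lvert\det(A)\rvert$ is exactly an additive $\epsilon/2$ approximation of $\ln\lvert\det(A)\rvert$), so your extra translation step --- in particular the remark about recovering the sign of $\det(A)$ by a separate logspace computation, which is not obviously possible and fortunately not needed since the problem concerns $\lvert\det(A)\rvert$ --- can be omitted.
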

	
	To see the significance of the previous corollary, recall the well-known ``dequantumization" result given by Watrous \cite{watrous2003complexity}: $\mathsf{BQL} \subseteq \mathsf{DSPACE}(\log^2 n)$. It is natural to ask if a stronger upper bound on $\mathsf{BQL}$ can be established. We note that the strongest currently known ``derandomization" result of this type, given by Saks and Zhou \cite{saks1999bphspace}, states $\mathsf{BPL} \subseteq \mathsf{DSPACE}(\log^{\frac{3}{2}} n)$. Note that the statement $\mathsf{BQL} \subseteq \mathsf{DSPACE}(\log^{2-\epsilon} n)$ would follow from either a small improvement in the result of Boix-Adser\`{a}, Eldar, and Mehraban (i.e., proving a stronger upper bound on the needed amount of deterministic space in terms of $\kappa(n)$) or from a small improvement in our result (i.e., proving $\kappa(n)\text{-conditioned-}\mathsf{DET}$ remains $\mathsf{BQL}$-hard for \textit{subpolynomial} $\kappa(n)$). Therefore, if $\mathsf{BQL} \not \subseteq \mathsf{DSPACE}(\log^{2-\epsilon} n)$, $\forall \epsilon>0$, then both our result and their result are essentially optimal (in terms of the dependence on $\kappa(n)$).
	
	Moreover, we note that our paper and that of Boix-Adser\`{a}, Eldar, and Mehraban used similar power series techniques to produce space-efficient algorithms for $\kappa(n)\text{-conditioned-}\mathsf{DET}$. However, our quantum algorithm can make use of a power series with an \textit{exponentially} larger number of terms than seems possible for their (or any other) classical algorithm. This suggests a possible mechanism for explaining the supposed advantage of quantum computers over classical computers in the space-bounded setting.
	
	In \Cref{sec:singular}, we study well-conditioned versions of the ``standard" $\mathsf{C_{=}L}$-complete problems. We establish a result, very much analogous to \Cref{thm:bqulComplete}, which shows that these problems are complete for the \textit{one-sided} error versions of space-bounded quantum complexity classes. This enables us to prove results, analogous to \Cref{thm:intro:bquspaceEqualsBqspace}, concerning the relative power of unitary and general quantum space in the one-sided error cases. We conclude by stating several open problems related to our work in \Cref{sec:discussion}. 
	
	\subsection{Techniques}\label{sec:intro:techniques}
	
	We now briefly discuss the techniques used to prove \Cref{thm:bqulComplete}, which states that the $poly$-conditioned versions of the ``standard" $\mathsf{DET}^*$-complete problems are $\mathsf{BQ_U L}$-complete. As discussed earlier, Fefferman and Lin \cite{fefferman2018complete} showed that $poly\text{-conditioned-}\mathsf{MATINV}$ is $\mathsf{BQ_U L}$-complete. In order to establish the $\mathsf{BQ_U L}$-completeness of the other $poly$-conditioned problems, we exhibit a long cycle of reductions through these problems. We note that reductions between the standard versions of these problems (i.e., where there is no assumption of being well-conditioned) are well-known \cite{cook1985taxonomy,allender1996relationships,valiant1992boolean,vinay1991counting,toda1991counting,mahajan1997determinant,damm1991det,berkowitz1984computing}. However, these reductions, generally, \textit{do not} preserve the property of being $poly$-conditioned. Therefore, we must exhibit reductions that are rather different from the ``standard" reductions. 
	
	As a motivating example, consider $poly\text{-conditioned-}\mathsf{DET}^{+}$ and $poly\text{-conditioned-}\mathsf{SUMITMATPROD}$. While Berkowitz's algorithm \cite{berkowitz1984computing} provides a reduction from $\mathsf{DET}^{+}$ to $\mathsf{SUMITMATPROD}$, this reduction does not preserve the property of being $poly$-conditioned. We now provide a brief sketch of a reduction which does preserve this property; see \Cref{lemma:reduction:posDetToSumIterMatProd} for a formal proof. Consider a positive definite $n \times n$ matrix $H$, with $\sigma_1(H) \leq 1$ and $\kappa(H)=poly(n)$. We wish to obtain an additive $1/poly(n)$ approximation of $\ln(\det(H))$. By Jacobi's formula, $\ln(\det(H))=\tr(\ln(H))$, where $\ln(H)$ denotes the matrix logarithm. We have $\sigma_1(I-H) \leq 1-1/poly(n) <1$, which implies that the series $-\sum\limits_{k=1}^{\infty} \frac{(I-H)^k}{k}$ converges to $\ln(H)$. Therefore, $\ln(\det(H))=-\sum\limits_{k=1}^{\infty} \frac{\tr((I-H)^k)}{k}$. Moreover, as $\sigma_1(I-H) \leq 1-1/poly(n)$, the aforementioned series converges ``quickly," which implies that, for some $m=poly(n)$, the quantity $-\sum\limits_{k=1}^{m} \frac{\tr((I-H)^k)}{k}$ is a sufficiently good approximation of $\ln(\det(H))$. Estimating this quantity corresponds to an instance of $poly\text{-conditioned-}\mathsf{SUMITMATPROD}$.
	
	In \Cref{sec:determinant:bqulComplete}, we exhibit a collection of reductions between these various linear-algebraic problems, which use a variety of techniques to preserve the property of being $poly$-conditioned.

	\subsection{Related Work}\label{sec:intro:relatedWork}
	
	Simultaneously and independently of our work, Girish, Raz, and Zhan \cite{girish2020quantum} proved the following weaker version of our \Cref{thm:bqulComplete}: $contraction\text{-}\mathsf{MATPOW} \in \mathsf{BQ_U L}$, where $contraction\text{-}\mathsf{MATPOW}$ is a special case of our $poly\text{-conditioned-}\mathsf{MATPOW}$. We note that the techniques used in their proof differed substantially from ours. As a consequence of this result, they then obtain the following weaker version of our \Cref{thm:intro:bquspaceEqualsBqspace}: $\mathsf{BQ_U L}=\mathsf{BQ_Q L}$, where $\mathsf{BQ_Q L} \subseteq \mathsf{BQL}$ is a version of quantum logspace that allows a special type of intermediate measurements to be performed, but does not allow using the (classical) result of earlier measurements to control later steps of the computation.

	\section{Preliminaries}\label{sec:prelim}
	
	\subsection{General Notation and Definitions}\label{sec:prelim:notation}
	
	Let $\mat(n)=\mathbb{C}^{n \times n}$ denote the set of all $n \times n$ complex matrices and let $\herm(n)=\{A \in \mat(n):A=A^{\dagger}\}$ denote the set of all $n \times n$ Hermitian matrices. For $A \in \mat(n)$, let $\sigma_1(A) \geq \cdots \geq \sigma_n(A) \geq 0$ denote its singular values, and let $\lambda_1(A),\ldots,\lambda_n(A) \in \mathbb{C}$ denote its eigenvalues (with multiplicity); if $A \in \herm(n)$, then $\lambda_j(A) \in \mathbb{R}, \forall j$, and we order the eigenvalues such that $\lambda_1(A) \geq \cdots \geq \lambda_n(A)$. Let $I_n$ denote the $n \times n$ identity matrix, $\pos(n)=\{A \in \herm(n):\lambda_n(A) \geq 0\}$ denote the $n \times n$ positive semidefinite matrices, $\proj(n)=\{A \in \pos(n):A^2=A\}$ denote the $n \times n$ projection matrices, $\Umath(n)=\{A \in \mat(n):AA^{\dagger}=I_n\}$ denote the $n \times n$ unitary matrices, and $\den(n)=\{A \in \pos(V):\tr(A)=1\}$ denote the $n \times n$ density matrices. Let $\mathbb{Q}[i]_n=\{\frac{r+ci}{d}:r,c,d \in \mathbb{Z}, \lvert r \rvert, \lvert c \rvert, \lvert d \rvert \leq 2^{O(n)}\}$ denote the set of all $O(n)$-bit Gaussian rationals and let $\widehat{\mat}(n)$ (resp. $\widehat{\herm}(n),\widehat{\pos}(n),$ etc.) denote the subset of $\mat(n)$ (resp. $\herm(n),\pos(n),$ etc.) consisting of those matrices whose entries are all in $\mathbb{Q}[i]_n$. We define $\mat(n,c,d)=\{A \in \mat(n):c \leq \sigma_n(A) \leq \sigma_1(A) \leq d\}$ and we also analogously define $\widehat{\mat}(n,c,d)$, $\herm(n,c,d)$, etc. Let $[n]=\{1,\ldots,n\}$.
	
	We assume that the reader has familiarity with quantum computation and the theory of quantum information; see, for instance, \cite{nielsen2002quantum,watrous2018theory,kitaev1997quantum} for an introduction. A quantum system, on $s$ qubits, that is in a \textit{pure state} is described by a unit vector $\ket{\psi}$ in the $2^s$-dimensional Hilbert space $\mathbb{C}^{2^s}$. A \textit{mixed state} of the same system is described by some ensemble $\{(p_i,\ket{\psi_i}):i \in I \}$, for some index set $I$, where $p_i \in [0,1]$ denotes the probability of the system being in the pure state $\ket{\psi_i} \in\mathbb{C}^{2^s}$, and $\sum_i p_i=1$. This ensemble corresponds to the density matrix $A=\sum_i p_i \ket{\psi_i}\bra{\psi_i} \in \den(2^s)$. 
	
	Let $\T(n,m)$ denote the set of all superoperators of the form $\Phi:\mat(n) \rightarrow \mat(m)$ (i.e., $\Phi$ is a $\mathbb{C}$-linear map from the $\mathbb{C}$-vector space $\mat(n)$ to the $\mathbb{C}$-vector space $\mat(m)$). Let $\T(n)=\T(n,n)$ and let $\mathbbm{1}_{n} \in \T(n)$ denote the identity operator. Consider some $\Phi \in \T(n,m)$. We say that $\Phi$ is \textit{positive} if, $\forall A \in \pos(n)$, we have $\Phi(A) \in \pos(m)$. We say that $\Phi$ is \textit{completely positive} if $\Phi \otimes \mathbbm{1}_r$ is positive, $\forall r \in \mathbb{N}$, where $\otimes$ denotes the tensor product. We say that $\Phi$ is \textit{trace-preserving} if $\tr(\Phi(A))=\tr(A)$, $\forall A \in \mat(n)$. If $\Phi$ is both completely positive and trace-preserving, then we say $\Phi$ is a \textit{quantum channel}; let $\chan(n,m)=\{\Phi \in \T(n,m):\Phi \text{ is a quantum channel}\}$ denote the set of all such channels, and let $\chan(n)=\chan(n,n)$. Let $\text{vec}$ denote the usual vectorization map that takes a matrix $A \in \mat(n)$ to the vector $\text{vec}(A) \in \mathbb{C}^{n^2}$ consisting of the entries of $A$ (in some fixed order). For $\Phi \in \T(n)$, let $K(\Phi) \in \mat(n^2)$ denote the \textit{natural representation} of $\Phi$, which is defined to be the (unique) matrix for which $\text{vec}(\Phi(A))=K(\Phi)\text{vec}(A)$, $\forall A \in \mat(n)$. 
	
	We consider parameterized promise problems of the form $\mathcal{P}=(\mathcal{P}_{n,f_1,\ldots,f_h})_{n \in \mathbb{N}}$, for functions $f_1,\ldots,f_h:\mathbb{N} \rightarrow \mathbb{R}$; $\mathcal{P}_{n,f_1,\ldots,f_h}$ consists of instances of size $n$ which satisfy various conditions expressed in terms of $f_1(n),\ldots,f_h(n)$. For a promise problem $\mathcal{P}$ defined over some alphabet $\Sigma$, we, by slight abuse of notation, also write $\mathcal{P}$ to denote the subset of $\Sigma^*$ that satisfies the promise; analogously, we write $\mathcal{P}_{n,f_1,\ldots,f_h}$ to denote those instances of size $n$ that satisfy the promise. For $\langle X \rangle \in \mathcal{P}$, let $\mathcal{P}(\langle X \rangle)\in \{0,1\}$ denote the desired output on input $X$. We also use the notation $\mathcal{P}=(\mathcal{P}_1,\mathcal{P}_0)$, where $\mathcal{P}_j=\{\langle X \rangle \in \mathcal{P}:\mathcal{P}(\langle X \rangle)=j\}$. 	
	
	\subsection{Space-Bounded Quantum Computation}\label{sec:prelim:quantumSpace}
	
	We briefly recall the definitions of several needed space-bounded quantum complexity classes. We begin by noting that, in many of the previous papers that considered space-bounded quantum computation \cite{watrous1999space,watrous2001quantum,watrous2003complexity,watrous2009encyclopedia,ta2013inverting,melkebeek2012time,perdrix2006classically,jozsa2010matchgate}, \textit{quantum Turing machines} were used to define the various complexity classes of interest. Arguably, this is the ``natural" model to be used to define these classes. However, as the (equivalent) model of uniformly generated \textit{quantum circuits} are more familiar to quantum complexity theorists and physicists, we state these definitions using quantum circuits. We emphasize that all of the results in this paper apply to both quantum circuits and quantum TMs. 
	
	\begin{definition}\label{def:unitaryQuantumCircuit}
		A (unitary) \textit{quantum circuit} is a sequence of quantum gates, each of which is a member of some fixed set of gates that is universal for quantum computation (e.g., $\{\text{H,CNOT,T}\}$).  We say that a family of quantum circuits $\{Q_w:w \in \mathcal{P}\}$ is $\mathsf{DSPACE}(s)$-uniform if there is a deterministic TM that, on any input $w \in \mathcal{P}$, runs in space $O(s(\lvert w \rvert))$ (and hence time $2^{O(s(\lvert w \rvert))}$), and outputs a description of $Q_w$. 
	\end{definition}
	
	\begin{definition}\label{def:unitaryQuantumSpace}
		Consider functions $c,k:\mathbb{N} \rightarrow [0,1]$ and $s:\mathbb{N} \rightarrow \mathbb{N}$, with $s(n)=\Omega(\log n)$, all of which are computable in $\mathsf{DSPACE}(s)$. Let $\mathsf{Q_U SPACE}(s)_{c,k}$ denote the collection of all promise problems $\mathcal{P}=(\mathcal{P}_1,\mathcal{P}_0)$ such that there is a $\mathsf{DSPACE}(s)$-uniform family of (unitary) quantum circuits $\{Q_w:w \in \mathcal{P}\}$, where $Q_w$ acts on $h_w=O(s(\lvert w \rvert))$ qubits and has $2^{O(s(\lvert w \rvert))}$ gates, which has the following properties. The circuit $Q_w$ is applied to $h_w$ qubits that were initialized in the all-zeros state $\ket{0^{h_w}}$, after which the first qubit is measured in the standard basis. If the result is $1$, then we say $Q_w$ \textit{accepts} $w$; if, instead, the result is $0$, then we say $Q_w$ \textit{rejects} $w$. We require that the following conditions hold: 
		\begin{description}
			\item[Completeness:] $w \in \mathcal{P}_1 \Rightarrow \Pr[Q_w \text{ accepts } w] \geq c(\lvert w \rvert)$.
			\item[Soundness:] $w \in \mathcal{P}_0 \Rightarrow \Pr[Q_w \text{ accepts } w] \leq k(\lvert w \rvert)$.
		\end{description}	
		Let $\mathsf{BQ_U SPACE}(s)=\mathsf{Q_U SPACE}(s)_{\frac{2}{3},\frac{1}{3}}$ denote (two-sided) bounded-error \textit{unitary} quantum space $s$ and let $\mathsf{BQ_U L}=\mathsf{BQ_U SPACE}(\log n)$ denote unitary quantum \textit{logspace}.
	\end{definition}
	
	Note that, in the preceding definitions, $Q_w$ has $2^{O(s(\lvert w \rvert))}$ gates (this requirement is forced by the uniformity condition). That is to say, in our definition of quantum space $s(n)$, we require that the computation also runs in time $2^{O(s(\lvert w \rvert))}$; see the excellent survey paper by Saks \cite{saks1996randomization} for a thorough discussion of the desirability of requiring that space-bounded \textit{probabilistic} computations run in time at most exponential in their space bound, as well as, e.g., \cite{watrous1999space,watrous2003complexity,melkebeek2012time} for discussions of the analogous issue for \textit{quantum} computations. 	Note that the particular choice of universal gate set does not affect the definition of $\mathsf{BQ_U SPACE}(s)$ due to the space-efficient version \cite{melkebeek2012time} of the Solovay-Kitaev theorem. Furthermore, the class $\mathsf{BQ_U L}$ would remain the same if defined as $\mathsf{Q_U SPACE}(\log n)_{c,k}$ for any $c,k$ for which $c(n)=1-\frac{1}{poly(n)}$, $k(n)=\frac{1}{poly(n)}$, and $\exists q:\mathbb{N} \rightarrow \mathbb{N}_{>0}$, where $q(n)=poly(n)$, such that $c(n)-k(n)\geq \frac{1}{q(n)}$, $\forall n$ \cite{fefferman2016space}.  
	
	We next consider general space-bounded quantum computation. Most fundamentally, we wish to define a model of quantum computation that allows \textit{intermediate quantum measurements}. That is to say, rather than considering a purely unitary quantum computation in which a single measurement is performed at the end of the computation, we now allow measurements to be performed throughout the computation, and for the results of those measurements to be used to control the computation. As we wish for our main result (the fact that unitary and general space-bounded quantum computation are equivalent in power) to be as strong as possible, we wish to define a model of general space-bounded quantum computation that is as powerful as possible. To that end, we will define a space-bounded variant of the general quantum circuit model considered in the classic paper of Aharonov, Kitaev, and Nisan \cite{aharonov1998quantum}, in which gates are now arbitrary \textit{quantum channels}.
	
	\begin{definition}\label{def:generalQuantumCircuit}
		A \textit{general quantum circuit} on $h$ qubits is a sequence $\Phi=(\Phi_1,\ldots,\Phi_t)$ of quantum channels, where each $\Phi_j \in \chan(2^h)$. By slight abuse of notation, we use $\Phi$ to denote the element $\Phi_t \circ \cdots \circ \Phi_1 \in \chan(2^h)$ obtained by composing the individual gates of the circuit in order. We say that a family of general quantum circuits $\{\Phi_w=(\Phi_{w,1},\ldots,\Phi_{w,t_w}):w \in \mathcal{P}\}$ is $\mathsf{DSPACE}(s)$-uniform if there is a deterministic Turing machine that, on any input $w \in \mathcal{P}$, runs in space $O(s(\lvert w \rvert))$ (and hence time $2^{O(s(\lvert w \rvert))}$), and outputs a description of $\Phi_w$; to be precise, a description of $\Phi_w$ consists of the entries of each $K(\Phi_{w,j})$, where we require that $K(\Phi_{w,j}) \in \widehat{\mat}(2^{2h})$. 
	\end{definition}
	
	The operation of applying a unitary transformation is a special case of a quantum channel, and so the general quantum circuit model extends the unitary quantum circuit model. Moreover, the process of performing any (partial or full) quantum measurement is described by a quantum channel, and the preceding definition allows the results of intermediate measurements to be used to control which operations are applied at later stages of the computation (this can be accomplished by using a subset of the qubits as classical bits to store the results of earlier measurements, thereby making these results available to gates that appear later in the computation). It is necessary to establish some reasonable restriction on the complexity of computing a description of each gate of the circuit, as we do not wish to unreasonably increase the power of the model by allowing, for example, non-computable numbers to be used in defining each gate (see, e.g., \cite{watrous2003complexity,melkebeek2012time,remscrim2019lower,remscrim2019power} for further discussion of this issue). We note that this general quantum circuit model is equivalent to the space-bounded (general) quantum Turing machine model of Watrous \cite{watrous2003complexity}. We also note that the results of this paper would apply to any ``reasonable" variant of space-bounded quantum computation that is classically controlled, which includes all of the ``standard" variants that have been considered (e.g., \cite{watrous2003complexity,melkebeek2012time,ta2013inverting,perdrix2006classically}). We refer the reader to \cite[Section 2]{melkebeek2012time} for a thorough discussion of the various models of space-bounded quantum computation, and, in particular, of the reasonableness of requiring classical control.	 
	
	\begin{definition}\label{def:generalQuantumSpace}
		Consider functions $c,k:\mathbb{N} \rightarrow [0,1]$ and $s:\mathbb{N} \rightarrow \mathbb{N}$, with $s(n)=\Omega(\log n)$, all of which are computable in $\mathsf{DSPACE}(s)$. Let $\mathsf{Q SPACE}(s)_{c,k}$ denote the collection of all promise problems $\mathcal{P}=(\mathcal{P}_1,\mathcal{P}_0)$ such that there is a $\mathsf{DSPACE}(s)$-uniform family of general quantum circuits $\{\Phi_w:w \in \mathcal{P}\}$, where $\Phi_w$ acts on $h_w=O(s(\lvert w \rvert))$ qubits and has $2^{O(s(\lvert w \rvert))}$ gates, that has the following properties. The circuit $\Phi_w$ is applied to $h_w$ qubits that were initialized in the all-zeros state $\ket{0^{h_w}}$, after which the first qubit is measured in the standard basis. If the result is $1$, then $\Phi_w$ \textit{accepts} $w$; otherwise, $\Phi_w$ \textit{rejects} $w$. We require that $w \in \mathcal{P}_1 \Rightarrow \Pr[\Phi_w \text{ accepts } w] \geq c(\lvert w \rvert)$ and $w \in \mathcal{P}_0 \Rightarrow \Pr[\Phi_w \text{ accepts } w] \leq k(\lvert w \rvert)$. We define \textit{general} quantum space $\mathsf{BQSPACE}(s)=\mathsf{QSPACE}(s)_{\frac{2}{3},\frac{1}{3}}$ and $\mathsf{BQL}=\mathsf{BQSPACE}(\log n)$.
	\end{definition}
		
	We next define space-bounded quantum Merlin-Arthur proof systems, essentially following \cite{fefferman2018complete}. 
	
	\begin{definition}\label{def:spaceBoundedQMA}
		Consider functions $c,k:\mathbb{N} \rightarrow [0,1]$ and $s:\mathbb{N} \rightarrow \mathbb{N}$, with $s(n)=\Omega(\log n)$, all of which are computable in $\mathsf{DSPACE}(s)$. Let $\mathsf{Q_U MASPACE}(s)_{c,k}$ (resp. $\mathsf{QMASPACE}(s)_{c,k}$) denote the collection of all promise problems $\mathcal{P}=(\mathcal{P}_1,\mathcal{P}_0)$ such that there is a $\mathsf{DSPACE}(s)$-uniform family of unitary (resp. general) quantum circuits $\{V_w:w \in \mathcal{P}\}$, where $V_w$ acts on $m_w+h_w=O(s(\lvert w \rvert))$ qubits and has $2^{O(s(\lvert w \rvert))}$ gates, that has the following properties. Let $\Psi_{m_w}$ denote the set of $m_w$-qubit states. For each $w \in \mathcal{P}$, the verification circuit $V_w$ is applied to the state $\ket{\psi} \otimes \ket{0^{h_w}}$, where $\ket{\psi} \in \Psi_{m_w}$ is a (purported) proof of the fact that $w \in \mathcal{P}_1$. Then, the first qubit is measured in the standard basis. If the result is $1$, then $w$ is \textit{accepted}; otherwise, $w$ is \textit{rejected}. We require that $w \in \mathcal{P}_1 \Rightarrow \exists \ket{\psi} \in \Psi_{m_w}, \Pr[V_w \text{ accepts } w,\ket{\psi}] \geq c(\lvert w \rvert)$ and $w \in \mathcal{P}_0 \Rightarrow \forall \ket{\psi} \in \Psi_{m_w}, \Pr[V_w \text{ accepts } w,\ket{\psi}] \leq k(\lvert w \rvert)$.	We then define $\mathsf{Q_U MASPACE}(s)=\mathsf{Q_U MASPACE}(s)_{\frac{2}{3},\frac{1}{3}}$, $\mathsf{QMASPACE}(s)=\mathsf{QMASPACE}(s)_{\frac{2}{3},\frac{1}{3}}$, $\mathsf{Q_U MAL}=\mathsf{Q_U MASPACE}(\log n)$, and $\mathsf{QMAL}=\mathsf{QMASPACE}(\log n)$.
	\end{definition}
	
	Lastly we define analogues of the preceding classes for other error types. One-sided bounded-error: $\mathsf{RQ_U SPACE}(s)=\mathsf{Q_U SPACE}(s)_{\frac{1}{2},0}$, $\mathsf{RQMASPACE}(s)=\mathsf{QMASPACE}(s)_{\frac{1}{2},0}$, etc. One-sided unbounded-error: $\mathsf{NQSPACE}(s)=\bigcup\limits_{c:\mathbb{N} \rightarrow (0,1]} \mathsf{QSPACE}(s)_{c,0}$, $\mathsf{NQMASPACE}(s)=\bigcup\limits_{c:\mathbb{N} \rightarrow (0,1]} \mathsf{QMASPACE}(s)_{c,0}$, etc. Note that $\mathsf{RQMASPACE}$ and $\mathsf{NQMASPACE}$ have perfect soundness. We define $\mathsf{QMASPACE}$ with perfect completeness: $\mathsf{QMASPACE_1}(s)=\mathsf{QMASPACE}(s)_{1,\frac{1}{2}}$ and $\mathsf{PreciseQMASPACE_1}(s)=\bigcup\limits_{k:\mathbb{N} \rightarrow [0,1)} \mathsf{QMASPACE}(s)_{1,k}$. 
	
	\section{Well-Conditioned Determinant}\label{sec:determinant}
	
	We define the following well-conditioned versions of the standard $\mathsf{DET}^*$-complete problems \cite{cook1985taxonomy}. We first define well-conditioned versions of $\mathsf{DET}$ and $\mathsf{MATINV}$. The input to each problem consists of a matrix $A$ (among other values) which is promised to be well-conditioned (among other promises): $A \in \widehat{\mat}(n,1/\kappa(n),1)$. Recall that, by definition, this means that $\sigma_1(A) \leq 1$ and $\sigma_n(A) \geq 1/\kappa(n)$, which implies $A$ has condition number at most $\kappa(n)$.
	
	\begin{definition}\label{def:detPromiseProblems:detAndInv}
		Consider functions $\kappa:\mathbb{N} \rightarrow \mathbb{R}_{\geq 1}$ and $\epsilon:\mathbb{N} \rightarrow \mathbb{R}_{>0}$.
		\newline $\mathsf{DET}_{n,\kappa,\epsilon^{-1}}$
		\newline\hspace*{5pt}\textit{Input}: $A \in \widehat{\mat}(n)$, $b \in \mathbb{R}_{\leq 0}$
		\newline\hspace*{5pt}\textit{Promise}: $A \in \widehat{\mat}(n,1/\kappa(n),1)$, $\lvert \det(A) \rvert \in (0,e^{b-\epsilon(n)}] \cup [e^b,1]$
		\newline\hspace*{5pt}\textit{Output}: $1$ if $\lvert \det(A) \rvert \geq e^b$, $0$ otherwise	
		\newline $\mathsf{DET}_{n,\kappa,\epsilon^{-1}}^{+}$
		\newline\hspace*{5pt}\textit{Input}: $A \in \widehat{\mat}(n)$, $b \in \mathbb{R}_{\leq 0}$
		\newline\hspace*{5pt}\textit{Promise}: $A \in \widehat{\pos}(n,1/\kappa(n),1)$, $\det(A) \in (0,e^{b-\epsilon(n)}] \cup [e^b,1]$
		\newline\hspace*{5pt}\textit{Output}: $1$ if $\det(A) \geq e^b$, $0$ otherwise		 	
		\newline $\mathsf{MATINV}_{n,\kappa,\epsilon^{-1}}$
		\newline\hspace*{5pt}\textit{Input}: $A \in \widehat{\mat}(n)$, $s,t \in [n]$, $b \in \mathbb{R}_{\geq 0}$
		\newline\hspace*{5pt}\textit{Promise}: $A \in \widehat{\mat}(n,1/\kappa(n),1)$, $\lvert A^{-1}[s,t] \rvert \in [0,b-\epsilon(n)] \cup [b,\kappa(n)]$ 
		\newline\hspace*{5pt}\textit{Output}: $1$ if $\lvert A^{-1}[s,t]  \rvert \geq b$, $0$ otherwise	 	
		\newline $\mathsf{MATINV}_{n,\kappa,\epsilon^{-1}}^{+}$
		\newline\hspace*{5pt}\textit{Input}: $A \in \widehat{\mat}(n)$, $s,t \in [n]$, $b \in \mathbb{R}_{\geq 0}$
		\newline\hspace*{5pt}\textit{Promise}: $A \in \widehat{\pos}(n,1/\kappa(n),1)$, $A^{-1}[s,t] \in [0,b-\epsilon(n)] \cup [b,\kappa(n)]$ 
		\newline\hspace*{5pt}\textit{Output}: $1$ if $A^{-1}[s,t] \geq b$, $0$ otherwise
	\end{definition} 
	
	We next define well-conditioned versions of the various matrix multiplication problems; here, ``well-conditioned" has a somewhat different definition. For a sequence of matrices $A_1,\ldots,A_m$, and for indices $j_1,j_2$, where $1 \leq j_1 \leq j_2 \leq m$, let $A_{j_1,j_2}=\prod\limits_{j=j_1}^{j_2} A_j$. We require that all partial products have small singular values $\sigma_1(A_{j_1,j_2}) \leq \kappa(n)$.
	
	\begin{definition}\label{def:detPromiseProblems:powAndItprod}
		Consider functions $m:\mathbb{N} \rightarrow \mathbb{N}$, $\kappa:\mathbb{N} \rightarrow \mathbb{R}_{\geq 1}$, and $\epsilon:\mathbb{N} \rightarrow \mathbb{R}_{>0}$. 
		\newline $\mathsf{MATPOW}_{n,m,\kappa,\epsilon^{-1}}$
		\newline\hspace*{5pt}\textit{Input}: $A \in \widehat{\mat}(n)$, $s,t \in [n]$, $b \in \mathbb{R}_{\geq 0}$ 
		\newline\hspace*{5pt}\textit{Promise}: $\sigma_1(A^j) \leq \kappa(n) \, \forall j \in [m]$, $\lvert A^m[s,t] \rvert \in [0,b-\epsilon(n)] \cup [b,\kappa(n)]$
		\newline\hspace*{5pt}\textit{Output}: $1$ if $\lvert A^m[s,t] \rvert \geq b$, $0$ otherwise
		\newline $\mathsf{ITMATPROD}_{n,m,\kappa,\epsilon^{-1}}$
		\newline\hspace*{5pt}\textit{Input}: $A_1,\ldots,A_m \in \widehat{\mat}(n)$, $s,t \in [n]$, $b \in \mathbb{R}_{\geq 0}$ 
		\newline\hspace*{5pt}\textit{Promise}: $\sigma_1(A_{j_1,j_2}) \leq \kappa(n) \, \forall 1 \leq j_1 \leq j_2 \leq m$, $\left\lvert A_{1,m}[s,t] \right\rvert \in [0,b-\epsilon(n)] \cup [b,\kappa(n)]$
		\newline\textit{Output}: $1$ if $\left\lvert A_{1,m}[s,t] \right\rvert \geq b$, $0$ otherwise
		\newline $\mathsf{ITMATPROD}_{n,m,\kappa,\epsilon^{-1}}^{\geq 0}$
		\newline\hspace*{5pt}\textit{Input}: $A_1,\ldots,A_m \in \widehat{\mat}(n)$, $s,t \in [n]$, $b \in \mathbb{R}_{\geq 0}$ 
		\newline\hspace*{5pt}\textit{Promise}: $\sigma_1(A_{j_1,j_2}) \leq \kappa(n) \, \forall 1 \leq j_1 \leq j_2 \leq m$, $A_{1,m}[s,t] \in [0,b-\epsilon(n)] \cup [b,\kappa(n)]$ 
		\newline\hspace*{5pt}\textit{Output}: $1$ if $A_{1,m}[s,t] \geq b$, $0$ otherwise
		\newline $\mathsf{SUMITMATPROD}_{n,m,\kappa,\epsilon^{-1}}$
		\newline\hspace*{5pt}\textit{Input}: $A_1,\ldots,A_m \in \widehat{\mat}(n)$, $E \subseteq [n]^2$, $b \in \mathbb{R}_{\geq 0}$ 
		\newline\hspace*{5pt}\textit{Promise}: $\sigma_1(A_{j_1,j_2}) \leq \kappa(n) \, \forall 1 \leq j_1 \leq j_2 \leq m$, $\left\lvert \sum\limits_{(s,t) \in E} A_{1,m}[s,t] \right\rvert \in [0,b-\epsilon(n)] \cup [b,\lvert E \rvert \kappa(n)]$ 
		\newline\hspace*{5pt}\textit{Output}: $1$ if $\left\lvert \sum\limits_{(s,t) \in E} A_{1,m}[s,t] \right\rvert \geq b$, $0$ otherwise	
	\end{definition} 
	
	With the exception of the problem $\mathsf{DET}$ (and $\mathsf{DET}^{+}$), each of the above problems are defined such that they correspond to approximating some quantity with \textit{additive} error $\epsilon/2$; for example, $\mathsf{MATINV}$ involves determining if $\lvert A^{-1}[s,t]\rvert \leq b-\epsilon$ or  $\lvert A^{-1}[s,t]\rvert \geq b$. To clarify our definition of $\mathsf{DET}$, this problem corresponds to a $e^{\pm \frac{\epsilon}{2}}$ \textit{multiplicative} approximation of $\lvert \det(A) \rvert$, which is equivalent to an approximation of $\ln(\lvert \det(A) \rvert)$ with \textit{additive} error $\epsilon/2$. As we will see in \Cref{sec:determinant:bqulComplete} and \Cref{sec:determinant:approximateCounting}, this is the ``correct" definition of $\mathsf{DET}$, in the sense that it is the version of the determinant problem that most closely corresponds to the other linear-algebraic problems (matrix powering, matrix inversion, etc.). Moreover, these problems, defined as they are above, are somewhat ``over parameterized." For example, if $\langle A,s,t,b \rangle \in \mathsf{MATINV}_{n,\kappa(n),\epsilon^{-1}(n)}$, then $\mathsf{MATINV}(\langle A,s,t,b \rangle)=\mathsf{MATINV}(\langle \epsilon(n)A,s,t,\epsilon^{-1}(n)b \rangle)$ and $\langle \epsilon(n)A,s,t,\epsilon^{-1}(n)b \rangle \in \mathsf{MATINV}_{n,\kappa(n)\epsilon^{-1}(n),1}$. These additional parameters are convenient as they allow us to express certain results more cleanly. 
	
	\begin{definition}\label{def:polyConditionedDetPromiseProblems}
		For each promise problem $\mathcal{P}_{n,\kappa,\epsilon^{-1}}$ (resp. $\mathcal{P}_{n,m,\kappa,\epsilon^{-1}}$) in \Cref{def:detPromiseProblems:detAndInv,def:detPromiseProblems:powAndItprod}, we define $poly\text{-conditioned-}\mathcal{P}$ to be the promise problem $\mathcal{P}_{n,n^{O(1)},n^{O(1)}}$ (resp. $\mathcal{P}_{n,n^{O(1)},n^{O(1)},n^{O(1)}}$). For example, 
		\newline $poly\text{-conditioned-}\mathsf{DET}$
		\newline\hspace*{5pt}\textit{Input}: $A \in \widehat{\mat}(n)$ and $b \in \mathbb{R}_{\leq 0}$
		\newline\hspace*{5pt}\textit{Promise}: $A \in \widehat{\mat}(n,n^{-O(1)},1)$, $\lvert \det(A) \rvert \in (0,e^{b-n^{-O(1)}}] \cup [e^b,1]$
		\newline\hspace*{5pt}\textit{Output}: $1$ if $\lvert \det(A) \rvert \geq e^b$, $0$ otherwise
	\end{definition}
	
	We say that $\mathcal{P}=(\mathcal{P}_{n,f_1,\ldots,f_h})_{n \in \mathbb{N}}$ is (many-one) reducible to $\mathcal{P}'=(\mathcal{P}'_{m,f'_1,\ldots,f'_{h'}})_{m \in \mathbb{N}}$ if $\exists p_0,\ldots,p_{h'}$, where each $p_j$ is a real $(h+1)$-variate polynomial, such that $\forall n \in \mathbb{N}$, $\exists g_n:\mathcal{P}_{n,f_1,\ldots,f_h} \rightarrow \mathcal{P}'_{m,f'_1,\ldots,f'_{h'}}$ such that the following conditions hold: (1) $\mathcal{P}(\langle X \rangle)=\mathcal{P}'(g_n(\langle X \rangle))$, $\forall \langle X \rangle \in \mathcal{P}_{n,f_1,\ldots,f_h}$, (2) $m=p_0(n,f_1(n),\ldots,f_h(n))$, and (3) $f'_j(m)=p_j(n,f_1(n),\ldots,f_h(n))$, $\forall j$. If $(g_n)_{n \in \mathbb{N}}$ is computable in deterministic logspace (resp. uniform $\mathsf{NC}^1$, uniform $\mathsf{AC}^0$), we write $\mathcal{P} \leq_{\mathsf{L}}^m \mathcal{P}'$ (resp. $\mathcal{P} \leq_{\mathsf{NC}^1}^m \mathcal{P}'$, $\mathcal{P} \leq_{\mathsf{AC}^0}^m \mathcal{P}'$). For a complexity class $\mathsf{C}$, we say that $\mathcal{P}'$ is $\mathsf{C}$-complete if (1) $\mathcal{P}' \in \mathsf{C}$ and (2) $\mathcal{P} \leq_{\mathsf{L}}^m \mathcal{P}'$, $\forall \mathcal{P} \in \mathsf{C}$.
	
	Fefferman and Lin \cite{fefferman2018complete} showed that $poly\text{-conditioned-}\mathsf{MATINV}$ is $\mathsf{BQ_U L}$-complete. We extend their result by showing that by showing that all of the above $poly$-conditioned problems are $\mathsf{BQ_U L}$-complete. Along the way to proving this result, we also show that $\mathsf{BQ_U L}=\mathsf{BQL}$. To accomplish this, we will prove several lemmas that exhibit reductions between particular problems in \Cref{def:detPromiseProblems:detAndInv,def:detPromiseProblems:powAndItprod}. The proofs of these lemmas share a common structure: for a pair of promise problems $\mathcal{P},\mathcal{P}'$, we show how to transform an instance $w \in \mathcal{P}$ to an instance $f(w) \in \mathcal{P}'$ such that the reduction function $f$ preserves the answer (i.e., $\mathcal{P}(w)=\mathcal{P}'(f(w))$) and also preserves the property of being well-conditioned. Note that $\mathcal{P}  \leq_{\mathsf{L}}^m \mathcal{P}' \Rightarrow poly\text{-conditioned-}\mathcal{P} \leq_{\mathsf{L}}^m poly\text{-conditioned-}\mathcal{P}'$. 
	
	\subsection{Eliminating Intermediate Measurements}\label{sec:determinant:bqulBql}
	
	In this section, we show that intermediate measurements may be eliminated without any increase in needed space. We begin by showing that $poly\text{-conditioned-}\mathsf{ITMATPROD}$ is $\mathsf{BQL}$-hard and is in $\mathsf{BQ_U L}$, which implies $\mathsf{BQL}=\mathsf{BQ_U L}$; the general equivalence then follows from a standard padding argument. In the following, we assume that $m(n)$, $\kappa(n)$, and $\epsilon(n)^{-1}$ can be computed to $O(\log n)$ bits of precision in uniform $\mathsf{AC}^0$. For $m \in \mathbb{N}_{\geq 1}$ and $r,c \in [m]$, define $F_{m,r,c} \in \widehat{\mat}(m)$ such that $F_{m,r,c}[r,c]=1$ and $F_{m,r,c}[r',c']=0$, $\forall (r',c') \neq (r,c)$.
	
	\begin{lemma}\label{lemma:reduction:iterMatProdToMatPow}
		$\mathsf{ITMATPROD} \leq_{\mathsf{AC}^0}^m \mathsf{MATPOW}$.
	\end{lemma}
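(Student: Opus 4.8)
The plan is to reduce an instance of iterated matrix product on $m$ matrices $A_1,\ldots,A_m \in \widehat{\mat}(n)$ to a single matrix power $B^{m'}$ for an appropriate $n' \times n'$ matrix $B$ and exponent $m'$, via the standard "block companion matrix" trick, while checking that all the side conditions (well-conditionedness of partial products, the entry being queried, the threshold $b$, and the $\mathsf{AC}^0$-computability of the transformation) carry through. The classical idea is to take $B$ to be the $(mn) \times (mn)$ block matrix that is block-cyclic: the block in position $(r,r+1)$ is $A_{r+1}$ (indices mod $m$), and all other blocks are $0$. Then $B^m$ is block-diagonal, and the $(1,1)$ block of $B^m$ is exactly $A_1 A_2 \cdots A_m = A_{1,m}$; more generally, powering $B$ "rotates and multiplies" the blocks, so that for any $j$ the nonzero blocks of $B^j$ are the partial products $A_{r+1}A_{r+2}\cdots A_{r+j}$ (cyclically). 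To extract the single entry $A_{1,m}[s,t]$ I would place $B$ inside $\widehat{\mat}(n')$ with $n' = mn$ (or $n' = mn+O(1)$ if padding to a power of two or to accommodate $s,t$ is convenient) and set the query indices in the $\mathsf{MATPOW}$ instance to be the coordinates corresponding to row $s$ of block $1$ and column $t$ of block $1$ after $m$ steps; the matrices $F_{m,r,c}$ defined just before the lemma are presumably the bookkeeping gadgets for selecting the right block.

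First I would write down $B$ explicitly as $\sum_{r=1}^{m} (E_{r,r+1 \bmod m} \otimes A_{r+1})$ (using elementary matrices $E$ on the block index), verify by a direct computation that $B^j$ has $(r, r+j \bmod m)$-block equal to the cyclic partial product $A_{r+1}\cdots A_{r+j}$, and in particular that $B^m$ is block diagonal with $(r,r)$-block $A_{r+1}A_{r+2}\cdots A_{r+m}$ (indices mod $m$), so the $(1,1)$ block is $A_{1,m}$. Second, I would check the well-conditionedness promise: since $B^j$ is block-structured with a single nonzero block in each block-row equal to a (cyclic) partial product $A_{j_1,j_2}$, one has $\sigma_1(B^j) = \max_r \sigma_1(A_{r+1}\cdots A_{r+j}) \le \kappa(n)$ by the $\mathsf{ITMATPROD}$ promise $\sigma_1(A_{j_1,j_2}) \le \kappa(n)$ for all $1 \le j_1 \le j_2 \le m$ — this is exactly why the $\mathsf{ITMATPROD}$ promise was stated for all partial products rather than just the full product. (One subtlety: the cyclic products $A_{r+1}\cdots A_{m}\cdots$ that "wrap around" are not literally of the form $A_{j_1,j_2}$ with $j_1 \le j_2$; I would handle this either by only needing $j \le m$, where for $r=0$ we get genuine prefixes $A_{1,j}$ and for general $r$ we get products $A_{r+1,\min(r+j,m)}$ possibly times a wrapped tail — so I may instead pad the sequence with identity matrices, or take $B$ to be block \emph{upper-triangular} rather than cyclic, i.e. $B = \sum_{r=1}^{m-1} E_{r,r+1}\otimes A_{r+1}$, which is nilpotent and for which every nonzero block of $B^j$ is a genuine $A_{j_1,j_2}$ with $j_1 \le j_2 \le m$. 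This triangular version is cleaner and I would use it.) Third, I would confirm that the threshold $b$ and the gap $\epsilon(n)$ transfer unchanged (the queried entry of $B^m$ equals the queried entry of $A_{1,m}$ exactly), set the new parameters $n' = p_0(n,m) = mn$, $m'=m$, $\kappa'(n') = \kappa(n)$, $\epsilon'(n')^{-1}=\epsilon(n)^{-1}$ as polynomials in the input data, and verify that constructing $B$ from $A_1,\ldots,A_m$ is a purely local rearrangement of entries — computable in uniform $\mathsf{AC}^0$ — using the assumed $\mathsf{AC}^0$-computability of $m(n),\kappa(n),\epsilon(n)^{-1}$.

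The main obstacle, and the only place requiring care, is the wrap-around issue just flagged: with a naive cyclic $B$, the partial products appearing in $B^j$ for $j$ near $m$ are cyclic rotations of the sequence, which are not covered by the $\mathsf{ITMATPROD}$ promise and could fail to be well-conditioned. Using the nilpotent block-upper-triangular $B$ fixes this: $B^j = 0$ for $j \ge m$, and for $1 \le j \le m-1$ the only nonzero blocks of $B^j$ sit in positions $(r,r+j)$ with block value $A_{r+1}A_{r+2}\cdots A_{r+j} = A_{r+1,r+j}$, a genuine partial product with $1 \le r+1 \le r+j \le m$, so $\sigma_1(B^j) = \max_{1 \le r \le m-j}\sigma_1(A_{r+1,r+j}) \le \kappa(n)$. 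But then $B^m = 0$, which loses the target entry; the fix is to append one more block-row/column so that $B^m$ has a single nonzero block equal to $A_{1,m}$ — concretely, take $B$ on $(m+1)n$ coordinates with $B = \sum_{r=1}^{m} E_{r,r+1}\otimes A_r$, so that $B^m$'s unique nonzero block, in position $(1,m+1)$, is $A_1 A_2 \cdots A_m = A_{1,m}$, while for $j \le m$ every nonzero block of $B^j$ is again a genuine partial product $A_{r,r+j-1} = A_{r, r+j-1}$ with $1 \le r \le r+j-1 \le m$. I would then verify the promise $\sigma_1(B^j) \le \kappa(n)$ for all $j \in [m]$, set $n' = (m+1)n$, and read off $A_{1,m}[s,t]$ as the $(s, n\cdot m + t)$ entry of $B^m$; the query indices $s',t'$ and the whole description of $B$ are assembled by an $\mathsf{AC}^0$ circuit. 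Everything else is routine.
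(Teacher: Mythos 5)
Your final construction is exactly the paper's: the block matrix $\sum_{r=1}^{m} F_{m+1,r,r+1}\otimes A_r$ on $(m+1)n$ coordinates, query indices $(s,\,nm+t)$, unchanged threshold $b$, and the bound $\sigma_1(\widehat{A}^j)=\max_r\sigma_1(A_{r,r+j-1})\le\kappa(n)$ since each power is a permuted direct sum of genuine partial products. The proposal is correct and takes essentially the same approach as the paper (your cyclic-matrix detour is discarded in favor of the paper's construction anyway).
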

	\begin{proof}
		Consider $\langle A_1,\ldots,A_m,s,t,b \rangle \in \mathsf{ITMATPROD}_{n,m,\kappa,\epsilon^{-1}}$. Following \cite{cook1985taxonomy}, let $\widehat{A}=\sum\limits_{r=1}^m F_{m+1,r,r+1} \otimes A_r \in \widehat{\mat}(nm+n)$ consist of $n \times n$ blocks, where the blocks immediately above the main diagonal blocks are given by $A_1,\ldots,A_m$, and all other entries are $0$. For $j \in [m]$, we have $$\widehat{A}^j=\sum_{r=1}^{m+1-j} F_{m+1,r,r+j} \otimes A_{r,r+j-1}.$$ 
		Let $\widehat{s}=s$, $\widehat{t}=nm+t$, $\widehat{b}=b$. Then $\widehat{A}^m[\widehat{s},\widehat{t}]=A_{1,m}[s,t]$, which implies $\mathsf{ITMATPROD}(\langle A_1,\ldots,A_m,s,t,b \rangle) =\mathsf{MATPOW}(\langle \widehat{A},\widehat{s},\widehat{t},\widehat{b} \rangle)$. 	
		Moreover, $\forall j \in [m]$, we have $$\sigma_1(\widehat{A}^j)=\sigma_1\left(\sum_{r=1}^{m+1-j} F_{m+1,r,r+j} \otimes A_{r,r+j-1}\right)=\sigma_1\left(\bigoplus_{r=1}^{m+1-j} A_{r,r+j-1}\right)=\max_{r} \sigma_1(A_{r,r+j-1}) \leq \kappa(n).$$		
		Therefore, $\langle \widehat{A},\widehat{s},\widehat{t},\widehat{b} \rangle \in \mathsf{MATPOW}_{nm+n,m,\kappa(n),\epsilon^{-1}(n)}$.
	\end{proof}
	
	\begin{lemma}\label{lemma:reduction:matPowToMatInv}
		$\mathsf{MATPOW}\leq_{\mathsf{AC}^0}^m \mathsf{MATINV}$.
	\end{lemma}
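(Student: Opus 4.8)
The plan is to adapt the classical reduction that realizes matrix powers as a single block of a matrix inverse \cite{cook1985taxonomy}, while carefully tracking singular values so that the well-conditioning hypothesis of \Cref{def:detPromiseProblems:powAndItprod} is preserved. Given $\langle A,s,t,b\rangle \in \mathsf{MATPOW}_{n,m,\kappa,\epsilon^{-1}}$, I would reuse the ``superdiagonal'' construction of \Cref{lemma:reduction:iterMatProdToMatPow} with every block equal to $A$: set $\widehat{A}=\sum_{r=1}^m F_{m+1,r,r+1}\otimes A \in \widehat{\mat}(nm+n)$, which is nilpotent with $\widehat{A}^{m+1}=0$ and $\widehat{A}^j=\sum_{r=1}^{m+1-j} F_{m+1,r,r+j}\otimes A^j$ for $0\le j\le m$. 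Then $M:=I_{nm+n}-\widehat{A}$ is invertible with $M^{-1}=\sum_{j=0}^m \widehat{A}^j$, whose $(1,m+1)$ block equals $A^m$ (lower powers contribute $0$ to that block); hence $M^{-1}[s,\,nm+t]=A^m[s,t]$. This already gives the answer-preserving part after a rescaling below.

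The step I expect to be the main obstacle is the conditioning bookkeeping, which does not follow from the ``standard'' reduction. For the top singular value, the block-shift structure gives $\sigma_1(\widehat{A}^j)=\sigma_1\!\left(\bigoplus_{r=1}^{m+1-j}A^j\right)=\sigma_1(A^j)\le\kappa(n)$ for $j\in[m]$ (using the $\mathsf{MATPOW}$ promise), and $\sigma_1(\widehat{A}^0)=1$; so $\sigma_1(M)\le 1+\sigma_1(\widehat{A})\le 1+\kappa(n)$ and, by the triangle inequality, $\sigma_1(M^{-1})\le\sum_{j=0}^m\sigma_1(\widehat{A}^j)\le 1+m\kappa(n)$, whence $\sigma_{nm+n}(M)\ge 1/(1+m\kappa(n))$. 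Rescaling $M':=M/(1+\kappa(n))$ then puts $M'\in\widehat{\mat}\!\left(nm+n,\,1/\kappa'(nm+n),\,1\right)$ with $\kappa'(nm+n)=(1+\kappa(n))(1+m\kappa(n))$, a polynomial in the source parameters $n,m,\kappa$; note $M'$ still has entries in $\mathbb{Q}[i]_{nm+n}$ since $1+\kappa(n)$ has $O(\log n)$ bits.

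It then remains to transport the output and the gap promise. Taking $\widehat{s}=s$, $\widehat{t}=nm+t$, and $\widehat{b}=(1+\kappa(n))\,b$, we have $(M')^{-1}[\widehat{s},\widehat{t}]=(1+\kappa(n))\,A^m[s,t]$, so $\mathsf{MATPOW}(\langle A,s,t,b\rangle)=\mathsf{MATINV}(\langle M',\widehat{s},\widehat{t},\widehat{b}\rangle)$; and since $1+\kappa(n)\ge 1$, multiplying the interval $[0,b-\epsilon(n)]\cup[b,\kappa(n)]$ by $1+\kappa(n)$ lands inside $[0,\widehat{b}-\epsilon(n)]\cup[\widehat{b},(1+\kappa(n))\kappa(n)]\subseteq[0,\widehat{b}-\epsilon(n)]\cup[\widehat{b},\kappa'(nm+n)]$, so the target instance lies in $\mathsf{MATINV}_{nm+n,\kappa',\epsilon^{-1}}$ with $\epsilon$ (hence $\epsilon^{-1}$) unchanged. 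Finally, the block pattern of $M'$ is trivial and every entry is $0$, $\pm A[i,j]/(1+\kappa(n))$, or $1/(1+\kappa(n))$; since $\kappa(n)$ is computable to $O(\log n)$ bits in uniform $\mathsf{AC}^0$, the map $\langle A,s,t,b\rangle\mapsto\langle M',\widehat{s},\widehat{t},\widehat{b}\rangle$ is computable in uniform $\mathsf{AC}^0$, establishing $\mathsf{MATPOW}\leq_{\mathsf{AC}^0}^m\mathsf{MATINV}$.
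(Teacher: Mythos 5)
Your proof is correct and essentially identical to the paper's: the same superdiagonal construction $Z=I_{nm+n}-G_1\otimes A$ with $Z^{-1}=\sum_{j=0}^m G_j\otimes A^j$, the same singular-value bounds $\sigma_1(Z)\le 1+\kappa(n)$ and $\sigma_1(Z^{-1})\le 1+m\kappa(n)$, and the same rescaling and index bookkeeping. The only cosmetic difference is that the paper divides by $\lceil 1+\kappa(n)\rceil$ rather than $1+\kappa(n)$, which guarantees the scaled entries remain in $\mathbb{Q}[i]_{O(n)}$ even though $\kappa(n)$ is only promised to be computable to $O(\log n)$ bits of precision; you should likewise scale by an exactly representable $O(\log n)$-bit upper bound for $1+\kappa(n)$.
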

	\begin{proof}
		Consider $\langle A,s,t,b \rangle \in \mathsf{MATPOW}_{n,m,\kappa,\epsilon^{-1}}$. Following \cite{cook1985taxonomy}, let $G_j=\sum\limits_{r=1}^{m+1-j} F_{m+1,r,r+j} \in \widehat{\mat}(m+1)$, $\forall j \in [m]$. Let $Y=G_1 \otimes A \in \widehat{\mat}(nm+n)$ consist of $n \times n$ blocks, where the blocks immediately above the main diagonal blocks are all given by $A$. Let $Z=I_{nm+n}-Y \in \widehat{\mat}(nm+n)$ and observe that $$Z^{-1}=\sum_{j=0}^m G_j \otimes A^j.$$ Let $\widehat{s}=s$ and $\widehat{t}=nm+t$. Then $Z^{-1}[\widehat{s},\widehat{t}]=A^m[s,t]$. We have $\sigma_1(Z)\leq \sigma_1(I_{nm+n})+\sigma_1(Y)\leq 1+\kappa(n)$ and $$\sigma_1(Z^{-1})=\sigma_1\left(\sum_{j=0}^m G_j \otimes A^j\right)\leq \sum_{j=0}^m \sigma_1(G_j \otimes A^j) \leq 1+\sum_{j=1}^m \sigma_1(A^j)\leq 1+\sum_{j=1}^m \kappa(n)\leq 1+m\kappa(n).$$ This implies $\sigma_{nm+n}(Z)=\sigma_1(Z^{-1})^{-1} \geq (1+m\kappa(n))^{-1}$. Let $\widehat{Z}=\frac{1}{\lceil 1+\kappa(n) \rceil} Z  \in \widehat{\mat}(nm+n)$ and $\widehat{b}=\lceil 1+\kappa(n) \rceil b$. We then conclude that $\langle \widehat{Z},\widehat{s},\widehat{t},\widehat{b} \rangle \in \mathsf{MATINV}_{nm+n,(1+m\kappa(n))\lceil 1+\kappa(n)\rceil,\lceil 1+\kappa(n)\rceil^{-1} \epsilon^{-1}(n)}$ and $\mathsf{MATPOW}(\langle A,s,t,b \rangle) = \mathsf{MATINV}(\langle \widehat{Z},\widehat{s},\widehat{t},\widehat{b} \rangle)$.
	\end{proof}
	
	\begin{lemma}\label{lemma:reduction:matInvToPosMatInv}
		$\mathsf{MATINV} \leq_{\mathsf{NC}^1}^m \mathsf{MATINV}^{+}$.
	\end{lemma}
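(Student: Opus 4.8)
The plan is to reduce an instance $\langle A,s,t,b\rangle$ of $\mathsf{MATINV}$ — which concerns the magnitude $\lvert A^{-1}[s,t]\rvert$ of a general well-conditioned matrix — to an instance of $\mathsf{MATINV}^{+}$ in two moves: first realize $\lvert A^{-1}[s,t]\rvert^{2}$ as a \emph{single} entry of the inverse of an auxiliary matrix $B$, then symmetrize $B$ to a positive definite matrix $M$ without disturbing the relevant entry of the inverse.

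For the first move, let $e_t$ denote the $t$-th standard basis vector and set $B=\begin{pmatrix} A & -e_t e_t^{\dagger} \\ 0 & A^{\dagger}\end{pmatrix}\in\widehat{\mat}(2n)$. Being block upper-triangular with invertible diagonal blocks, $B$ is invertible, and a direct computation gives $B^{-1}=\begin{pmatrix} A^{-1} & A^{-1}e_te_t^{\dagger}A^{-\dagger} \\ 0 & A^{-\dagger}\end{pmatrix}$, so the entry $B^{-1}[s,n+s]$ equals $\big(A^{-1}e_te_t^{\dagger}A^{-\dagger}\big)[s,s]=A^{-1}[s,t]\,\overline{A^{-1}[s,t]}=\lvert A^{-1}[s,t]\rvert^{2}$. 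The point of the rank-one off-diagonal block of $B^{-1}$ is exactly that this entry is $\lvert A^{-1}[s,t]\rvert^{2}$ on the nose, with no spurious summands; more naive candidates, such as the diagonal entries of $(A^{\dagger}A)^{-1}=A^{-1}(A^{-1})^{\dagger}$, instead produce $\sum_k\lvert A^{-1}[s,k]\rvert^2$. Bounding $\sigma_1(B)\le \sigma_1(A)+1\le 2$ and $\sigma_{2n}(B)=\lVert B^{-1}\rVert^{-1}\ge\big(\sigma_n(A)^{-1}+\sigma_n(A)^{-2}\big)^{-1}\ge\big(2\kappa(n)^{2}\big)^{-1}$ shows $B$ stays polynomially well-conditioned.

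For the second move, I use the identity — valid for any invertible $B$ and verified by direct multiplication —
\[
\begin{pmatrix} I & -B \\ -B^{\dagger} & 2B^{\dagger}B\end{pmatrix}^{-1} = \begin{pmatrix} 2I & (B^{\dagger})^{-1} \\ B^{-1} & B^{-1}(B^{\dagger})^{-1}\end{pmatrix}.
\]
Call the left-hand matrix $M$, of size $4n\times 4n$. It is Hermitian; its Schur complement with respect to the top-left block $I$ is $2B^{\dagger}B-B^{\dagger}B=B^{\dagger}B\succ 0$, so $M\succ 0$; and $M^{-1}[2n+s,n+s]=B^{-1}[s,n+s]=\lvert A^{-1}[s,t]\rvert^{2}$. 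The displayed form of $M^{-1}$ yields $\sigma_1(M)\le\max(1,2\sigma_1(B)^2)+\sigma_1(B)\le 10$ and $\sigma_{4n}(M)=\lVert M^{-1}\rVert^{-1}\ge\big(\max(2,\lVert B^{-1}\rVert^{2})+\lVert B^{-1}\rVert\big)^{-1}\ge\big(6\kappa(n)^{4}\big)^{-1}$. Rescaling $\widehat M=M/10$ (so $\sigma_1(\widehat M)\le 1$) and, since the entries of $B^{\dagger}B$ are $O(n^2)$-bit Gaussian rationals, padding with an identity block so the ambient dimension is $m=\Theta(n^2)$ — which changes neither positive definiteness, nor the relevant entry of the inverse, nor the condition number by more than a polynomial factor — I obtain $\widehat M\in\widehat{\pos}\big(m,1/poly(n),1\big)$ with $\widehat M^{-1}[\widehat s,\widehat t]=10\,\lvert A^{-1}[s,t]\rvert^{2}$ for $\widehat s=2n+s$, $\widehat t=n+s$. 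This entry is real and nonnegative, so it meets the promise of $\mathsf{MATINV}^{+}$ even though $\widehat s\neq\widehat t$. Taking $\widehat b=10 b^{2}$ gives $\lvert A^{-1}[s,t]\rvert\ge b\iff\widehat M^{-1}[\widehat s,\widehat t]\ge\widehat b$, and the promise $\lvert A^{-1}[s,t]\rvert\in[0,b-\epsilon(n)]\cup[b,\kappa(n)]$ translates — using $b^{2}-(b-\epsilon)^{2}\ge\epsilon^{2}$ when $b\ge\epsilon$, while $b<\epsilon$ is handled by the same formulas — to the $\mathsf{MATINV}^{+}$ promise with gap parameter $\Theta(\epsilon(n)^{2})$ and condition parameter $\Theta(\kappa(n)^{4})$, both polynomial in the original parameters. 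The reduction map is computable in uniform $\mathsf{TC}^0\subseteq\mathsf{NC}^1$, the only nontrivial step being the iterated sums of products that form $B^{\dagger}B$.

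I expect the main obstacle to be the first move: extracting $\lvert A^{-1}[s,t]\rvert^{2}$ as a single entry (rather than a sum of squared entries of a row of $A^{-1}$) is what forces the asymmetric, non-Hermitian matrix $B$, which in turn necessitates the separate symmetrization step and the matrix product $B^{\dagger}B$ — and this product is why the reduction is $\leq^m_{\mathsf{NC}^1}$ rather than $\leq^m_{\mathsf{AC}^0}$ as in the two preceding lemmas. The rest — verifying $M\succ 0$ and the two condition-number bounds, and checking that the transformed parameters and matrix entries stay of polynomial size — is routine given the care noted above.
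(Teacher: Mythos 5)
Your proof is correct. The core of your second move --- the Hermitian block matrix whose inverse contains $B^{-1}$ as an off-diagonal block --- is, up to permuting block rows and columns, exactly the identity the paper uses: the paper sets $\widehat H=\frac13\begin{pmatrix}A^\dagger A& -A^\dagger\\ -A& 2I\end{pmatrix}$, notes $\widehat H^{-1}=3\begin{pmatrix}2(A^\dagger A)^{-1}& A^{-1}\\ (A^\dagger)^{-1}& I\end{pmatrix}$, and reads off $\widehat H^{-1}[s,t+n]=3A^{-1}[s,t]$ in a single step with $\widehat b=3b$. What you add is the preliminary gadget $B$ that plants $\lvert A^{-1}[s,t]\rvert^{2}$ as a single entry of $B^{-1}$, and this addition is substantive rather than redundant: the $\mathsf{MATINV}^{+}$ promise requires the designated entry to lie in $[0,b-\epsilon(n)]\cup[b,\kappa(n)]$ and its output condition carries no absolute value, whereas the $\mathsf{MATINV}$ promise only constrains $\lvert A^{-1}[s,t]\rvert$, which may be the modulus of a negative or non-real entry. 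The paper's one-step map sends such an instance to one with $\widehat H^{-1}[\widehat s,\widehat t]=3A^{-1}[s,t]\notin\mathbb{R}_{\geq 0}$, i.e.\ outside the $\mathsf{MATINV}^{+}$ promise --- harmless in the paper's actual use of the lemma, where the quantity in question originates as an acceptance probability and is therefore a nonnegative real, but a genuine gloss in the lemma as stated, which your squaring gadget repairs at the cost of $\epsilon\mapsto\Theta(\epsilon^{2})$ and $\kappa\mapsto\Theta(\kappa^{4})$, both polynomial, so poly-conditioning is preserved. Your bookkeeping (the Schur-complement argument for $M\succ 0$, the singular-value bounds, the $b<\epsilon$ corner case, and the padding needed because the entries of $B^{\dagger}B$ are $O(n^{2})$-bit) all checks out; the padding remark is in fact a point the paper's own proof elides for $A^{\dagger}A$.
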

	\begin{proof}
		Consider $\langle A,s,t,b \rangle \in \mathsf{MATINV}_{\kappa,\epsilon^{-1}}$. We define $\widehat{H} =\frac{1}{3}\begin{pmatrix}
		A^{\dagger} A & - A^{\dagger} \\
		-A & 2I
		\end{pmatrix} \in \widehat{\pos}(2n)$. Note that $\widehat{H}^{-1} =3\begin{pmatrix}
		2(A^{\dagger} A)^{-1} & A^{-1} \\
		(A^{\dagger})^{-1} &  I
		\end{pmatrix}$. Moreover, $\sigma_1(\widehat{H}) \leq 1$ and $\sigma_{2n}(\widehat{H}) \geq \frac{1}{9} (\sigma_n(A))^2 \geq (3 \kappa(n))^{-2}$. Let $\widehat{s}=s$, $\widehat{t}=t+n$, and $\widehat{b}=3b$. Then $\widehat{H}^{-1}[\widehat{s},\widehat{t}]=3A^{-1}[s,t]$. Therefore, $\mathsf{MATINV}(\langle A,s,t,b \rangle) =  \mathsf{MATINV}(\langle H,\widehat{s},\widehat{t},\widehat{b} \rangle)$ and $\langle \widehat{H},\widehat{s},\widehat{t},\widehat{b} \rangle \in \mathsf{MATINV}_{2n,(3\kappa(n))^2,(3\epsilon(n))^{-1}}^{+}$.
	\end{proof}
	
	\begin{lemma}\label{lemma:itMatProdInBqul}
		$poly\text{-conditioned-}\mathsf{ITMATPROD} \in \mathsf{BQ_U L}$. 
	\end{lemma}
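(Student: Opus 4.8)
The plan is to obtain this membership result not by designing a quantum algorithm from scratch, but by chaining together the reductions already established above with the known result of Fefferman and Lin on well-conditioned matrix inversion. Concretely, \Cref{lemma:reduction:iterMatProdToMatPow} gives $\mathsf{ITMATPROD} \leq_{\mathsf{AC}^0}^m \mathsf{MATPOW}$ and \Cref{lemma:reduction:matPowToMatInv} gives $\mathsf{MATPOW} \leq_{\mathsf{AC}^0}^m \mathsf{MATINV}$, and in both cases the new dimension, condition number, and accuracy parameters are polynomials in the old ones (as one reads off from the explicit singular-value bounds in those proofs), so a $poly$-conditioned instance is mapped to a $poly$-conditioned instance. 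Using the observation recorded above that $\mathcal{P} \leq_{\mathsf{L}}^m \mathcal{P}'$ implies $poly\text{-conditioned-}\mathcal{P} \leq_{\mathsf{L}}^m poly\text{-conditioned-}\mathcal{P}'$, together with transitivity of $\leq_{\mathsf{L}}^m$, we get $poly\text{-conditioned-}\mathsf{ITMATPROD} \leq_{\mathsf{L}}^m poly\text{-conditioned-}\mathsf{MATINV}$ (and, if one prefers the positive-definite form, one further step to $poly\text{-conditioned-}\mathsf{MATINV}^{+}$ via \Cref{lemma:reduction:matInvToPosMatInv}).

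It then remains to combine two facts. First, Fefferman and Lin \cite{fefferman2018complete} showed $poly\text{-conditioned-}\mathsf{MATINV} \in \mathsf{BQ_U L}$. Second, $\mathsf{BQ_U L}$ is closed under $\leq_{\mathsf{L}}^m$ reductions: given a $\mathsf{DSPACE}(\log n)$-uniform family of unitary circuits deciding $poly\text{-conditioned-}\mathsf{MATINV}$ and a deterministic logspace transducer computing the reduction $g$, one composes the circuit-generating machine with $g$, recomputing bits of $g(w)$ on demand instead of storing them; since $\lvert g(w)\rvert = poly(\lvert w\rvert)$, the composed family is still $\mathsf{DSPACE}(\log \lvert w\rvert)$-uniform and still runs in time $2^{O(\log\lvert w\rvert)} = poly(\lvert w\rvert)$. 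Combining these yields $poly\text{-conditioned-}\mathsf{ITMATPROD} \in \mathsf{BQ_U L}$.

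The real content, then, lives in the two reduction lemmas being invoked, and in particular in the (non-obvious, and explicitly contrasted in the introduction with the standard Cook-style reductions) fact that these reductions preserve well-conditioning: for \Cref{lemma:reduction:iterMatProdToMatPow} because $\widehat{A}^j$ is a direct sum of the partial products $A_{r,r+j-1}$, so $\sigma_1(\widehat{A}^j) = \max_r \sigma_1(A_{r,r+j-1}) \leq \kappa(n)$; for \Cref{lemma:reduction:matPowToMatInv} because $(I - G_1\otimes A)^{-1} = \sum_{j=0}^m G_j \otimes A^j$ has operator norm at most $1 + m\kappa(n)$, so after rescaling by $\lceil 1+\kappa(n)\rceil$ the condition number stays polynomial. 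The only routine point to check is the uniformity bookkeeping for the composed reduction, which is immediate because $m(n), \kappa(n), \epsilon(n)^{-1}$ are assumed $\mathsf{AC}^0$-computable and the auxiliary matrices $\widehat{A}, \widehat{Z}$ are assembled by elementary block placements.

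Finally, if one instead wanted a direct unitary quantum logspace algorithm, the natural route would be to block-encode a suitable contraction, expand its resolvent as a polynomial-length linear combination of unitaries, and extract the target entry by a Hadamard test repeated $poly(n)$ times so that the tally register fits in $O(\log n)$ space. The one genuine obstacle there is normalization: naively rescaling each $A_j$ to a contraction introduces a factor $\kappa(n)^m$, which is exponentially small and thus beyond the $1/poly(n)$ precision achievable, so one is essentially forced to route the computation through the inverse of the polynomially-well-conditioned matrix $\widehat{Z}$ of \Cref{lemma:reduction:matPowToMatInv} — which is exactly what the reduction chain above does, and is precisely where the quantum advantage lies.
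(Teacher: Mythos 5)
Your proposal is correct and follows essentially the same route as the paper: chain \Cref{lemma:reduction:iterMatProdToMatPow,lemma:reduction:matPowToMatInv,lemma:reduction:matInvToPosMatInv} to get $\mathsf{ITMATPROD} \leq^m \mathsf{MATINV}^{+}$, note that these reductions preserve the $poly$-conditioned property, and invoke \cite[Theorem 13]{fefferman2018complete} for $poly\text{-conditioned-}\mathsf{MATINV}^{+} \in \mathsf{BQ_U L}$. The only small caveat is that Fefferman--Lin's membership result is stated for $\mathsf{MATINV}^{+}$ (not $\mathsf{MATINV}$), so the step through \Cref{lemma:reduction:matInvToPosMatInv} is actually required rather than optional as you suggest.
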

	\begin{proof}
		By \Cref{lemma:reduction:iterMatProdToMatPow,lemma:reduction:matPowToMatInv,lemma:reduction:matInvToPosMatInv}, $\mathsf{ITMATPROD} \leq_{\mathsf{NC}^1}^m \mathsf{MATINV}^{+}$. Recall $\mathcal{P}  \leq_{\mathsf{NC}^1}^m \mathcal{P}' \Rightarrow poly\text{-conditioned-}\mathcal{P} \leq_{\mathsf{NC}^1}^m poly\text{-conditioned-}\mathcal{P}'$. By \cite[Theorem 13]{fefferman2018complete}, $poly\text{-conditioned-}\mathsf{MATINV}^{+} \in \mathsf{BQ_U L}$.
	\end{proof}
	
	\begin{lemma}\label{lemma:itMatProdIsBqlHard}
		$poly\text{-conditioned-}\mathsf{ITMATPROD}$ is $\mathsf{BQL}$-hard. 
	\end{lemma}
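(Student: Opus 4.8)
The plan is to simulate a general (measurement-allowing) quantum logspace computation by a well-conditioned iterated matrix product, in such a way that the acceptance probability of the computation is encoded (up to the required additive $1/poly(n)$ accuracy) as a single designated entry of the product of polynomially many $poly(n)$-conditioned matrices. First I would take an arbitrary promise problem $\mathcal{P} \in \mathsf{BQL}$ and a $\mathsf{DSPACE}(\log n)$-uniform family of general quantum circuits $\Phi_w = (\Phi_{w,1},\ldots,\Phi_{w,t})$ witnessing it, with $h_w = O(\log n)$ qubits and $t = 2^{O(\log n)} = poly(n)$ gates. The natural representation $K(\Phi_{w,j}) \in \widehat{\mat}(2^{2h_w})$ of each channel is a $poly(n) \times poly(n)$ matrix whose entries are $O(\log n)$-bit Gaussian rationals, and, crucially, since each $\Phi_{w,j}$ is trace-preserving and completely positive, $K(\Phi_{w,j})$ is a contraction in the appropriate norm (e.g. $\sigma_1(K(\Phi_{w,j})) \le 1$, or at worst bounded by a fixed constant after normalization, using that trace-preserving completely positive maps do not increase trace norm of Hermitian inputs). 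This boundedness is what makes the resulting product well-conditioned: every partial product $K(\Phi_{w,j_2}) \cdots K(\Phi_{w,j_1})$ also has singular values bounded by $1$ (or by a $poly(n)$ factor after the normalizations forced by working with $\widehat{\mat}$), so $\sigma_1$ of every partial product is at most $\kappa(n) = poly(n)$.

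The key steps, in order, are: (1) compute, in $\mathsf{AC}^0$ or logspace from $w$, the matrices $N_j = K(\Phi_{w,j})$, possibly rescaled by a fixed constant so their entries lie in $\widehat{\mat}(n')$ with $n' = poly(n)$ and so that all partial products stay bounded by $\kappa(n)=poly(n)$ in operator norm; (2) observe that the final density matrix is $\mathrm{vec}^{-1}(N_t \cdots N_1 \, \mathrm{vec}(\ket{0^{h_w}}\bra{0^{h_w}}))$, and the acceptance probability $\Pr[\Phi_w\text{ accepts }w]$ is the sum of the two diagonal entries of this matrix corresponding to the first qubit being $1$ — i.e. a fixed linear functional, with $0/1$ coefficients, of the entries of the product; (3) fold the fixed input vector $\mathrm{vec}(\ket{0^{h_w}}\bra{0^{h_w}})$ into the product as an extra rank-one factor $N_0$ on the right (and, if one wants a single entry rather than a sum, use an extra left factor that performs the summation, landing exactly in the shape of $\mathsf{ITMATPROD}$ — or, more cleanly, reduce to $\mathsf{SUMITMATPROD}$ and then invoke the reductions among the problems); (4) set the threshold $b = 2/3$ and use the completeness/soundness gap $2/3$ vs. $1/3$ of $\mathsf{BQL}$ (which, by the robustness remarks in the preliminaries, can be taken as $1 - 1/poly(n)$ vs. $1/poly(n)$) to satisfy the promise gap $[0, b-\epsilon(n)] \cup [b, \kappa(n)]$ with $\epsilon(n) = 1/poly(n)$; (5) check that the reduction map $g_n$ is computable in deterministic logspace, using the $\mathsf{DSPACE}(\log n)$-uniformity of the circuit family and the fact that each $K(\Phi_{w,j})$ has a $\mathsf{DSPACE}(\log n)$-computable description, and that $m = poly(n)$, $\kappa = poly(n)$, $\epsilon^{-1} = poly(n)$ as required by the definition of a reduction between parameterized promise problems. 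Since every $\mathcal{P} \in \mathsf{BQL}$ reduces this way to $poly\text{-conditioned-}\mathsf{ITMATPROD}$, the latter is $\mathsf{BQL}$-hard.

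I expect the main obstacle to be step (1): verifying that the natural representations $K(\Phi_{w,j})$, and all of their partial products, are genuinely well-conditioned in the precise sense demanded by \Cref{def:detPromiseProblems:powAndItprod} — namely $\sigma_1$ of every partial product bounded by $poly(n)$ — while simultaneously having entries that are $O(\log n')$-bit Gaussian rationals. The subtlety is that $K(\Phi)$ for a channel $\Phi$ need not be a contraction in the operator ($\sigma_1$) norm even though $\Phi$ is trace-non-increasing in trace norm; one must either argue the correct norm bound directly (the induced norm of $K(\Phi)$ on the subspace of vectorized Hermitian matrices, together with the fact that the evolved state always remains a subnormalized density matrix, keeps all relevant partial products bounded), or conjugate the whole product by a fixed change of basis (e.g. to the Kraus/Choi picture or to a Stinespring dilation restricted appropriately) so that boundedness becomes transparent. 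A secondary, more routine obstacle is bookkeeping the constant-factor rescalings needed to keep everything inside $\widehat{\mat}$ with $O(\log n)$-bit entries while tracking how these rescalings propagate into the threshold $b$ and the parameter $\kappa$; these are exactly the kind of manipulations already illustrated in \Cref{lemma:reduction:matPowToMatInv,lemma:reduction:matInvToPosMatInv}, so the argument should go through with the "over-parameterization" slack deliberately built into the problem definitions.
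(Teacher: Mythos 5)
Your proposal takes essentially the same route as the paper: pass to the natural representations $K(\Phi_{w,j})$, use the fact that compositions of channels are channels to bound $\sigma_1$ of every partial product by $n^{O(1)}$, and read the acceptance probability off the iterated product, with logspace computability coming from the uniformity of the circuit family. The one point you (correctly) flag as the obstacle is settled in the paper by citing the bound $\sigma_1(K(\Phi)) \le \sqrt{d}$ for a channel whose natural representation acts on $\mathbb{C}^{d}$ --- so your parenthetical guess that $K(\Phi)$ is an operator-norm contraction or $O(1)$-bounded is false, but the true bound $\sqrt{d_w}=n^{O(1)}$ is all that is needed --- and where you sum diagonal entries (via $\mathsf{SUMITMATPROD}$ or extra rank-one factors, which also works), the paper instead assumes WLOG that the circuit cleans up its workspace at the end so that the acceptance probability is the single entry $A(w)_{1,t_w}[x_w,y_w]$.
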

	\begin{proof}
		Suppose $\mathcal{P}=(\mathcal{P}_1,\mathcal{P}_0) \in \mathsf{BQL}$. By definition, there is a uniform family of general quantum circuits $\{\Phi_w=(\Phi_{w,1},\ldots,\Phi_{w,t_w}):w \in \mathcal{P}\}$, where $\Phi_w$ acts on $h_w=O(\log \lvert w \rvert)$ qubits and has $t_w=\lvert w \rvert^{O(1)}$ gates, such that if $w \in \mathcal{P}_1$, then $\Pr[\Phi_w \text{ accepts } w]\geq \frac{2}{3}$, and if $w \in \mathcal{P}_0$, then $\Pr[\Phi_w \text{ accepts } w]\leq \frac{1}{3}$. Without loss of generality we may, for convenience, assume that $\Phi_w$ ``cleans-up" its workspace at the end of the computation, by measuring the first qubit in the computational basis, and then forcing every other qubit to the state $\ket{0}$ (by measuring each such qubit in the computational basis and, if the result $1$ is obtained, flipping its value). 
		
		Let $d_w=2^{2h_w}=\lvert w \rvert^{O(1)}$. For each $j \in [t_w]$, we define $A(w)_{j}=K(\Phi_{w,t_w-j+1})$; note that, by \Cref{def:generalQuantumCircuit}, $A(w)_{j} \in \widehat{\mat}(d_w)$ and $A(w)_{j}$ can be constructed in $\mathsf{DSPACE}(\log (\lvert w \rvert))$. Moreover, as $\Phi_{w,j}\in \chan(2^{h_w})$, $\forall j \in [t_w]$, we have $\Phi_{w,t_w-j_2+1} \circ \cdots \circ \Phi_{w,t_w-j_1+1}\in\chan(2^{h_w})$, which by \cite[Theorem 1]{roga2013entropic} implies the following bound on the largest singular value of any partial product of the $A(w)_j$, $$\sigma_1(A(w)_{j_1,j_2})=\sigma_1\left(\prod\limits_{j=j_1}^{j_2}  A(w)_{j}\right)=\sigma_1(K(\Phi_{w,t_w-j_2+1} \circ \cdots \circ \Phi_{w,t_w-j_1+1}))\leq \sqrt{d_w}=n^{O(1)}.$$ 
		Let $x_w=\ket{10^{h_w-1}}\bra{10^{h_w-1}}$ and $y_w=\ket{0^{h_w}}\bra{0^{h_w}}$. By \Cref{def:generalQuantumSpace}, $$\Pr[\Phi_w \text{ accepts } w]=\left(\prod\limits_{j=1}^{t_w} A(w)_{j}\right)[x_w,y_w]=A(w)_{1,t_w}[x_w,y_w].$$ 
		Therefore, $\mathsf{ITMATPROD}(\langle A(w)_{1},\ldots,A(w)_{t_w},x_w,y_w,\frac{2}{3} \rangle)= \mathcal{P}(w)$ and $\langle A(w)_{1},\ldots,A(w)_{t_w},x_w,y_w,\frac{2}{3} \rangle \in poly\text{-conditioned-}\mathsf{ITMATPROD}$.
	\end{proof}
	
	For the sake of completeness, in \Cref{sec:appendix:turingMachineVersionOfBqlEqualsBqul}, we also prove a version of the preceding lemma for the (equivalent) version of quantum logspace that is defined using quantum Turing machines.
	
	\begin{lemma}\label{lemma:qmalInQumalBqul}
		$\mathsf{QMAL} \subseteq \mathsf{Q_U MAL}^{\mathsf{BQ_U L}}$.
	\end{lemma}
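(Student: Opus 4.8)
The plan is to express the acceptance probability of the general verifier as a quadratic form $\bra{\psi} M_w \ket{\psi}$ in the quantum proof $\ket{\psi}$, where $M_w$ is a \emph{polynomially sized} positive semidefinite contraction whose entries can be approximated by a $\mathsf{BQ_U L}$ computation; a unitary logspace verifier can then ``measure'' the two-outcome POVM $\{M_w, I - M_w\}$ on $\ket{\psi}$ by loading a block-encoding of $M_w$ through oracle queries. First I would fix $\mathcal{P} = (\mathcal{P}_1, \mathcal{P}_0) \in \mathsf{QMAL}$, witnessed by a uniform family of general quantum circuits $\{V_w\}$ with $V_w$ acting on $m_w + h_w = O(\log|w|)$ qubits, having $|w|^{O(1)}$ gates, and reserving its first $m_w$ qubits for the proof; write $V_w = \Phi^{(w)}_{t_w} \circ \cdots \circ \Phi^{(w)}_1 \in \chan(2^{m_w + h_w})$. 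The dual map $(\Phi^{(w)})^{*} = (\Phi^{(w)}_1)^{*} \circ \cdots \circ (\Phi^{(w)}_{t_w})^{*}$ is unital and completely positive, hence carries the order interval $[0, I]$ into itself, so the compression $M_w := \bra{0^{h_w}}_W\, (\Phi^{(w)})^{*}(\ket{1}\bra{1} \otimes I)\, \ket{0^{h_w}}_W \in \pos(2^{m_w})$ satisfies $0 \preceq M_w \preceq I$, and for every $m_w$-qubit state $\ket{\psi}$, $\Pr[V_w \text{ accepts } w, \ket{\psi}] = \tr\!\big[(\ket{1}\bra{1} \otimes I)\, \Phi^{(w)}(\ket{\psi}\bra{\psi} \otimes \ket{0^{h_w}}\bra{0^{h_w}})\big] = \bra{\psi} M_w \ket{\psi}$. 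Since $2^{m_w} = |w|^{O(1)}$, $M_w$ is polynomially sized.

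Next I would show there is $L \in \mathsf{BQ_U L}$ such that a deterministic logspace procedure with oracle $L$ outputs, on input $(w,a,b,k)$, the $k$-th bit of a canonical $1/|w|^{O(1)}$-accurate estimate $\widetilde{M}_w[a,b]$ of $M_w[a,b]$, arranged so that $\widetilde{M}_w$ is Hermitian with $\|\widetilde{M}_w\| \le 1$ and $\|\widetilde{M}_w - M_w\| \le 1/|w|^{O(1)}$. By polarization, $M_w[a,b]$ is an $O(1)$-term combination of values $\bra{\phi} M_w \ket{\phi} = \Pr[V_w \text{ accepts } w, \ket{\phi}]$ with $\ket{\phi} \propto \ket{a} + i^{\ell}\ket{b}$ (rounding amplitudes to $O(\log|w|)$ bits costs only $1/|w|^{O(1)}$ additive error), and each such value equals $\text{vec}(\ket{1}\bra{1} \otimes I)^{\dagger}\, K(\Phi^{(w)}_{t_w}) \cdots K(\Phi^{(w)}_1)\, \text{vec}(\ket{\phi}\bra{\phi} \otimes \ket{0^{h_w}}\bra{0^{h_w}})$; exactly as in the proof of \Cref{lemma:itMatProdIsBqlHard} (via \cite{roga2013entropic}), every partial product of the natural representations $K(\Phi^{(w)}_j)$ has largest singular value at most $\sqrt{2^{2(m_w + h_w)}} = |w|^{O(1)}$, while the two bounding vectors have polynomial norm and $O(\log|w|)$-bit entries, so absorbing them as rank-one leftmost/rightmost factors turns this quantity into an instance of $poly\text{-conditioned-}\mathsf{ITMATPROD}$, which is in $\mathsf{BQ_U L}$ by \Cref{lemma:itMatProdInBqul}; extracting an approximate value from the promise-threshold problem and then symmetrizing and rescaling are routine.

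The $\mathsf{Q_U MAL}^{\mathsf{BQ_U L}}$ verifier $V'_w$ (with oracle $L$) then operates on the $m_w$-qubit proof $\ket{\psi}$ together with a fresh register of $a_w = O(\log|w|)$ qubits initialized to $\ket{0^{a_w}}$: it runs the standard ``classically described matrix to logspace block-encoding'' circuit (as used in \cite{fefferman2018complete}) for $\widetilde{M}_w$ --- a $|w|^{O(1)}$-gate unitary $U_{\widetilde{M}_w}$ on these $O(\log|w|)$ qubits whose data-loading gates are single-qubit rotations by angles computed in logspace (via $\arccos$) from individual entries supplied by $L$-queries and compiled into the universal gate set by the space-efficient Solovay--Kitaev theorem \cite{melkebeek2012time} --- yielding a $(\gamma_w, a_w, 0)$-block-encoding of $\widetilde{M}_w$ with $\gamma_w = |w|^{O(1)}$; then $V'_w$ reversibly writes the indicator $[\text{the }a_w\text{-register equals }0^{a_w}]$ into a fresh output qubit and measures it. Hence $\Pr[V'_w \text{ accepts } w, \ket{\psi}] = \|(\bra{0^{a_w}} \otimes I)\, U_{\widetilde{M}_w}\, (\ket{0^{a_w}} \otimes \ket{\psi})\|^2 = \bra{\psi} \widetilde{M}_w^2 \ket{\psi} / \gamma_w^2$. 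Using $\|\widetilde{M}_w^2 - M_w^2\| \le 1/|w|^{O(1)}$ together with $M_w^2 \preceq M_w$ and the Cauchy--Schwarz bound $(\bra{\psi} M_w \ket{\psi})^2 \le \bra{\psi} M_w^2 \ket{\psi} \le \bra{\psi} M_w \ket{\psi}$, one gets that if $w \in \mathcal{P}_1$ then the good proof yields $\Pr[V'_w \text{ accepts}] \ge (\tfrac{4}{9} - 1/|w|^{O(1)})/\gamma_w^2$, while if $w \in \mathcal{P}_0$ then every proof yields $\Pr[V'_w \text{ accepts}] \le (\tfrac{1}{3} + 1/|w|^{O(1)})/\gamma_w^2$, so the completeness--soundness gap is $\Omega(1/\gamma_w^2) = 1/|w|^{O(1)}$. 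As $V'_w$ uses $O(\log|w|)$ qubits, $|w|^{O(1)}$ gates and queries, and is $\mathsf{DSPACE}(\log|w|)$-uniform, space-efficient $\mathsf{QMASPACE}$ error reduction \cite{fefferman2018complete,marriott2005quantum} then boosts the gap to $(\tfrac{2}{3}, \tfrac{1}{3})$, giving $\mathcal{P} \in \mathsf{Q_U MAL}^{\mathsf{BQ_U L}}$.

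The hard part will be making the middle step fully rigorous: packaging $\bra{\phi} M_w \ket{\phi}$ as a bona fide $poly$-conditioned iterated matrix product, and extracting genuine approximate values from a promise-threshold oracle while handling borderline instances; and invoking the block-encoding primitive so that the subnormalization $\gamma_w$ and every gate parameter are logspace-computable from the oracle answers with no ancilla build-up --- this is precisely the point at which the argument must stay within logspace instead of retreating to the space-costly principle of deferred measurement.
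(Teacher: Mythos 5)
Your proposal is correct and follows essentially the same route as the paper: both define the effective POVM element $M_w=(I\otimes\bra{0^{h_w}})\Phi_w^{\dagger}(\Pi_1)(I\otimes\ket{0^{h_w}})$ so that acceptance probability becomes $\bra{\psi}M_w\ket{\psi}$, reduce estimating entries of $M_w$ to $poly\text{-conditioned-}\mathsf{ITMATPROD}\in\mathsf{BQ_U L}$ via the singular-value bound on partial products of the natural representations, and then decide the resulting largest-eigenvalue promise problem with a unitary $\mathsf{Q_U MAL}$ verifier. The only difference is that you explicitly build the verifier via a block-encoding of $\widetilde{M}_w$ and amplify, whereas the paper dismisses that final step as obvious.
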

	\begin{proof}
		This follows by an argument similar to that of the proof of \Cref{lemma:itMatProdIsBqlHard}. Suppose $\mathcal{P}=(\mathcal{P}_1,\mathcal{P}_0) \in \mathsf{QMAL}$. There is a uniform family of general quantum circuits $\{\Phi_w=(\Phi_{w,1},\ldots,\Phi_{w,t_w}):w \in \mathcal{P}\}$, where $\Phi_w$ acts on $m_w+h_w=O(s(\lvert w \rvert))$ qubits and has $2^{O(s(\lvert w \rvert))}$ gates, that has the following properties. Let $\Pi_1=\ket{1}\bra{1} \otimes I_{2^{m_w+h_w-1}}$ and let $\Psi_{m_w}$ denote the set of $m_w$-qubit states. For each $w \in \mathcal{P}$, the verification circuit $\Phi_w$ is applied to the state $\ket{\psi} \otimes \ket{0^{h_w}}$, where $\ket{\psi} \in \Psi_{m_w}$ is a (purported) proof of the fact that $w \in \mathcal{P}_1$. Then, the first qubit is measured in the standard basis. If the result is $1$, then $w$ is \textit{accepted}; otherwise, $w$ is \textit{rejected}. If $w \in \mathcal{P}_1$, then $\exists \ket{\psi} \in \Psi_{m_w}, \Pr[\Phi_w \text{ accepts } w,\ket{\psi}] \geq \frac{2}{3}$, and if $w \in \mathcal{P}_0$, then $\forall \ket{\psi} \in \Psi_{m_w}, \Pr[\Phi_w \text{ accepts } w,\ket{\psi}] \leq \frac{1}{3}$.	
		
		We have $\Pr[\Phi_w \text{ accepts } w,\ket{\psi}]= \langle \Pi_1, \Phi_w (\ket{\psi}\bra{\psi} \otimes \ket{0^{h_w}}\bra{0^{h_w}}) \rangle$ $= \langle \Phi_w^{\dagger}(\Pi_1), \ket{\psi}\bra{\psi} \otimes \ket{0^{h_w}}\bra{0^{h_w}} \rangle$. Let $M=(I \otimes \bra{0^{h_w}}) \Phi_w^{\dagger}(\Pi_1) (I \otimes \ket{0^{h_w}}) \in \widehat{\pos}(2^{m_w})$. Then $w \in \mathcal{P}_1 \Leftrightarrow \lambda_1(M) \geq \frac{2}{3}$ and $w \in \mathcal{P}_0 \Leftrightarrow \lambda_1(M) \leq \frac{1}{3}$. Given access to $M$, the problem, of determining if $\lambda_1(M) \geq \frac{2}{3}$ or $\lambda_1(M) \leq \frac{1}{3}$, is obviously in $\mathsf{Q_U MAL}$. By the same argument as in the proof of \Cref{lemma:itMatProdIsBqlHard}, estimating an entry of $M$ (to $1/poly(n)$ precision) corresponds to an instance of $poly\text{-conditioned-}\mathsf{ITMATPROD}$, which, by \Cref{lemma:itMatProdInBqul}, is in $\mathsf{BQ_U L}$. Therefore, $\mathcal{P} \in \mathsf{Q_U MAL}^{\mathsf{BQ_U L}}$.	 
	\end{proof}
	
	\begin{lemma}\label{thm:bqlEqualsBqulEqualsQmal}
		$\mathsf{BQ_U L} = \mathsf{BQL} =\mathsf{Q_U MAL} = \mathsf{QMAL}$.
	\end{lemma}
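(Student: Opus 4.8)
The plan is to close a short cycle of inclusions among the four classes. Several of them are immediate: a unitary quantum circuit is a special general quantum circuit, so $\mathsf{BQ_U L}\subseteq\mathsf{BQL}$ and $\mathsf{Q_U MAL}\subseteq\mathsf{QMAL}$; and a space-bounded quantum computation is a Merlin--Arthur protocol with empty proof ($m_w=0$), so $\mathsf{BQ_U L}\subseteq\mathsf{Q_U MAL}$ and $\mathsf{BQL}\subseteq\mathsf{QMAL}$. For $\mathsf{BQL}\subseteq\mathsf{BQ_U L}$: by \Cref{lemma:itMatProdIsBqlHard} every $\mathcal{P}\in\mathsf{BQL}$ satisfies $\mathcal{P}\leq_{\mathsf{L}}^m poly\text{-conditioned-}\mathsf{ITMATPROD}$, by \Cref{lemma:itMatProdInBqul} that problem lies in $\mathsf{BQ_U L}$, and $\mathsf{BQ_U L}$ is closed under $\leq_{\mathsf{L}}^m$ (compose the logspace reduction with the uniformity machine; this stays logspace since the target instance has polynomial size). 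Hence $\mathsf{BQ_U L}=\mathsf{BQL}$, and in particular $poly\text{-conditioned-}\mathsf{ITMATPROD}$ is $\mathsf{BQ_U L}$-complete.

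Next I would eliminate the oracle in \Cref{lemma:qmalInQumalBqul} to obtain $\mathsf{QMAL}\subseteq\mathsf{Q_U MAL}$. Since $poly\text{-conditioned-}\mathsf{ITMATPROD}$ is $\mathsf{BQ_U L}$-complete and, by \Cref{lemma:itMatProdInBqul}, is solved by a fixed unitary logspace circuit $R$, every $\mathsf{BQ_U L}$ query may be taken to this problem and answered by $R$. A unitary verifier replaces each query by the standard ``clean'' subroutine: run $R$, copy its output qubit onto a fresh qubit with a $\text{CNOT}$, run $R^{-1}$ to restore the borrowed ancilla, and then continue using the copied bit as a control. By the robustness of $\mathsf{BQ_U L}$ to $(1-1/poly(n))$-vs-$1/poly(n)$ error (the discussion following \Cref{def:unitaryQuantumSpace}, citing \cite{fefferman2016space}), $R$ may be made to err with any desired probability $1/poly(n)$, and a standard analysis of clean subroutines (as in \cite{fefferman2018complete}) shows that, with this error chosen small enough, all $poly(n)$ inlined queries together perturb the global state---and hence the acceptance probability---by only $o(1)$. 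Thus $\mathsf{Q_U MAL}^{\mathsf{BQ_U L}}=\mathsf{Q_U MAL}$, so $\mathsf{Q_U MAL}=\mathsf{QMAL}$; taking the empty proof in the same argument also gives $\mathsf{BQ_U L}^{\mathsf{BQ_U L}}=\mathsf{BQ_U L}$, which I use below.

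It remains to show $\mathsf{QMAL}\subseteq\mathsf{BQ_U L}$, i.e., that Merlin's proof is useless in quantum logspace. Given $\mathcal{P}\in\mathsf{QMAL}$ with verifier $\Phi_w$ on $m_w+h_w=O(\log|w|)$ qubits, the construction in the proof of \Cref{lemma:qmalInQumalBqul} produces a positive semidefinite $M=M_w$ with $\sigma_1(M)\leq1$ of dimension $d=2^{m_w}=poly(|w|)$ such that $\max_{\ket\psi}\Pr[\Phi_w\text{ accepts }w,\ket\psi]=\lambda_1(M)$, and each entry of $M$ is approximable to additive $1/poly(|w|)$ in $\mathsf{BQ_U L}$ (it is an instance of $poly\text{-conditioned-}\mathsf{ITMATPROD}$). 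Hence $w\in\mathcal{P}_1\Rightarrow\lambda_1(M)\geq\tfrac23$ and $w\in\mathcal{P}_0\Rightarrow\lambda_1(M)\leq\tfrac13$. Taking $k=\lceil2\log_2 d\rceil+1$ and using $\lambda_i(M)\geq0$, we have $\lambda_1(M)^k\leq\tr(M^k)=\sum_i\lambda_i(M)^k\leq d\,\lambda_1(M)^k$, so $\tr(M^k)\geq(2/3)^k$ in the yes-case while $\tr(M^k)\leq d(1/3)^k\leq(2/3)^k/d$ in the no-case; both endpoints and their gap are $1/poly(|w|)$. A transfer-matrix identity expresses $\tr(M^k)$ as a fixed entry of the $r$-th power of an explicit $(d^2+2)\times(d^2+2)$ matrix $\widehat P$ built from $M$, where $r=O(\log d)$ and every partial power $\widehat P^{j}$ ($j\leq r$) has $\sigma_1(\widehat P^{j})=O(d)=poly(|w|)$; thus deciding the above threshold is an instance of $poly\text{-conditioned-}\mathsf{MATPOW}$, which by \Cref{lemma:reduction:matPowToMatInv,lemma:reduction:matInvToPosMatInv} and \cite[Theorem 13]{fefferman2018complete} lies in $\mathsf{BQ_U L}$; as the entries of $\widehat P$ are themselves produced by $\mathsf{BQ_U L}$ computations, composing via $\mathsf{BQ_U L}^{\mathsf{BQ_U L}}=\mathsf{BQ_U L}$ puts $\mathcal{P}$ in $\mathsf{BQ_U L}$, so $\mathsf{QMAL}\subseteq\mathsf{BQ_U L}$. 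Combining: $\mathsf{BQ_U L}\subseteq\mathsf{BQL}\subseteq\mathsf{QMAL}\subseteq\mathsf{BQ_U L}$ and $\mathsf{BQ_U L}\subseteq\mathsf{Q_U MAL}\subseteq\mathsf{QMAL}$, so all four classes coincide.

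The main obstacle is the coherent oracle-elimination of the second paragraph: unlike a general verifier, a unitary one cannot measure to read off and then discard a $\mathsf{BQ_U L}$ subroutine's answer, so everything hinges on having a unitary logspace algorithm for $poly\text{-conditioned-}\mathsf{ITMATPROD}$ whose single output bit can be copied out and whose remainder can be coherently uncomputed, with error pushed down to $1/poly$ while preserving this structure. Granting that, the rest is the elementary spectral estimate above together with the $\mathsf{DET}^*$-style reductions of this section.
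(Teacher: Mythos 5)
Your first step ($\mathsf{BQL}=\mathsf{BQ_U L}$ via \Cref{lemma:itMatProdIsBqlHard,lemma:itMatProdInBqul}) is exactly the paper's. For the remaining collapse you take a genuinely different route, and the comparison is instructive. The paper cites the relativizing equality $\mathsf{Q_U MAL}^{\mathcal{O}}=\mathsf{BQ_U L}^{\mathcal{O}}$ from \cite[Theorem 18]{fefferman2018complete} and then performs the oracle composition entirely in the \emph{measured} model, where self-lowness is easy (run the subroutine, measure, take a majority vote): $\mathsf{QMAL}\subseteq\mathsf{Q_U MAL}^{\mathsf{BQ_U L}}=\mathsf{BQ_U L}^{\mathsf{BQ_U L}}\subseteq\mathsf{BQL}^{\mathsf{BQL}}=\mathsf{BQL}=\mathsf{BQ_U L}$. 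You instead attack the composition coherently, and the step you correctly flag as ``the main obstacle'' --- uncomputing a bounded-error unitary subroutine for a \emph{promise} problem, including on queries that may violate the promise, to get $\mathsf{BQ_U L}^{\mathsf{BQ_U L}}=\mathsf{BQ_U L}$ and $\mathsf{Q_U MAL}^{\mathsf{BQ_U L}}=\mathsf{Q_U MAL}$ --- is precisely what the paper's detour through $\mathsf{BQL}^{\mathsf{BQL}}$ is designed to avoid. That step is believed and provable, but it is the hardest part of your write-up and is only sketched; note that you do not actually need it, since once $\mathsf{BQL}=\mathsf{BQ_U L}$ is in hand you may do every composition in the measured model and convert back, which makes your second paragraph (and the composition step of your third) essentially free. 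Your third paragraph is a nice bonus: the estimate $\lambda_1(M)^k\leq\tr(M^k)\leq d\,\lambda_1(M)^k$ with $k=\Theta(\log d)$, reducing the $\lambda_1(M)$ threshold to a $poly$-conditioned trace-of-power instance (all partial powers have $\sigma_1\leq 1$ since $0\preceq M\preceq I$), gives a self-contained, elementary replacement for the citation of \cite[Theorem 18]{fefferman2018complete}, at the cost of having to track how the $1/poly$ additive errors in the entries of $M$ propagate through $\tr(M^k)$ (routine, but worth a sentence). In short: your proof is correct in outline and more self-contained, but does avoidable extra work at its most delicate point; the paper's proof is shorter because it leans on the black-box relativized equality and on the easy self-lowness of the measured class.
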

	\begin{proof}
		Clearly, $\mathsf{BQ_U L} \subseteq \mathsf{BQL}$. By \Cref{lemma:itMatProdIsBqlHard}, $poly\text{-conditioned-}\mathsf{ITMATPROD}$ is $\mathsf{BQL}$-hard, and, by \Cref{lemma:itMatProdInBqul}, $poly\text{-conditioned-}\mathsf{ITMATPROD} \in \mathsf{BQ_U L}$; this implies $\mathsf{BQL} \subseteq \mathsf{BQ_U L}$. Therefore, $\mathsf{BQL} = \mathsf{BQ_U L}$. By \cite[Theorem 18]{fefferman2018complete} $\mathsf{Q_U MAL} = \mathsf{BQ_U L}$ (in fact, their argument shows $\mathsf{Q_U MAL}^{\mathcal{O}} = \mathsf{BQ_U L}^{\mathcal{O}}$, for any oracle $\mathcal{O}$). Clearly, $\mathsf{BQL} \subseteq \mathsf{QMAL}$, and so it suffices to show $\mathsf{QMAL} \subseteq \mathsf{BQL}$. By \Cref{lemma:qmalInQumalBqul}, and the (straightforward) fact that $\mathsf{BQL}$ is self-low, we have 
		\[\mathsf{QMAL} \subseteq \mathsf{Q_U MAL}^{\mathsf{BQ_U L}}=\mathsf{BQ_U L}^{\mathsf{BQ_U L}}\subseteq \mathsf{BQL}^{\mathsf{BQL}}=\mathsf{BQL}.\qedhere\]
	\end{proof}
	
	We now prove \Cref{thm:intro:bquspaceEqualsBqspace} from \Cref{sec:intro:elimMeasurements}, which we restate here for convenience.
	\restateBquspaceEqualsBqspace*
	\begin{proof}
		Clearly, $\mathsf{BQ_U SPACE}(s) \subseteq \mathsf{BQSPACE}(s)$. The containment $\mathsf{BQSPACE}(s) \subseteq \mathsf{BQ_U SPACE}(s)$ follows from \Cref{thm:bqlEqualsBqulEqualsQmal} and a standard padding argument; for the sake of completeness, we briefly state this argument in \Cref{sec:appendix:proofOfTheorem1}. Analogous statements hold for $\mathsf{QMASPACE}$.
	\end{proof}
	
	\subsection{B\texorpdfstring{Q\textsubscript{U}}{QU}L Completeness}\label{sec:determinant:bqulComplete}
	
	We next show that the $poly$-conditioned versions of all of the standard $\mathsf{DET}^*$-complete problems (\Cref{def:detPromiseProblems:detAndInv,def:detPromiseProblems:powAndItprod}) are $\mathsf{BQ_U L}$-complete. By \cite[Theorem 13]{fefferman2018complete}, $poly\text{-conditioned-}\mathsf{MATINV}$ is $\mathsf{BQ_U L}$-complete\footnote{Note that Ref. \cite{fefferman2018complete} showed $poly\text{-conditioned-}\mathsf{MATINV}^{+} \in \mathsf{BQ_U L}$ and that $poly\text{-conditioned-}\mathsf{MATINV}$ is $\mathsf{BQ_U L}$-hard, but the equivalence between these two problems is ``obvious" (and shown explicitly in \Cref{lemma:reduction:matInvToPosMatInv})}; therefore, it suffices to exhibit a chain of reductions through the various $poly\text{-conditioned-}\mathcal{P}$. Recall that $\mathcal{P}  \leq_{\mathsf{L}}^m \mathcal{P}' \Rightarrow poly\text{-conditioned-}\mathcal{P} \leq_{\mathsf{L}}^m poly\text{-conditioned-}\mathcal{P}'$
	
	\begin{lemma}\label{lemma:reduction:posDetToSumIterMatProd}
		$\mathsf{DET}^{+} \leq_{\mathsf{AC}^0}^m \mathsf{SUMITMATPROD}$.
	\end{lemma}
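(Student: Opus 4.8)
The plan is to convert the power-series identity $-\ln\det(A)=\sum_{k\ge1}\tr((I-A)^k)/k$ from \Cref{sec:intro:techniques} into a \emph{single} iterated product of $\mathrm{poly}(n)$ matrices, every partial product of which has operator norm $\mathrm{poly}(n)$, and then to reconcile the two problems' output conventions --- $\mathsf{DET}^{+}$ outputs $1$ when $\det(A)$ is \emph{large}, while $\mathsf{SUMITMATPROD}$ outputs $1$ when its quantity is \emph{large} --- by subtracting the truncated series from a fixed polynomial constant. The key to preserving well-conditioning (which Berkowitz's algorithm \cite{berkowitz1984computing} does not) is to realize $\sum_{k}\tfrac1k B^k$ on the \emph{block diagonal}, so that the trace extracts the sum over $k$, with the scalar factors $\tfrac1k$ placed so that every prefix of the product is a scaled power of $B$.

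First I would fix $\langle A,b\rangle\in\mathsf{DET}_{n,\kappa,\epsilon^{-1}}^{+}$ and set $\beta=-b\ge0$, $B=I_n-A\in\widehat{\mat}(n)$. Since $A\in\widehat{\pos}(n,1/\kappa(n),1)$ has eigenvalues in $[1/\kappa(n),1]$, we get $B\succeq0$ and $\sigma_1(B)=\lambda_1(B)\le1-1/\kappa(n)<1$. By Jacobi's formula and the convergent matrix-logarithm series,
\[
T\;:=\;-\ln\det(A)\;=\;\tr(-\ln A)\;=\;\sum_{k=1}^{\infty}\frac{\tr(B^k)}{k}\;\ge\;0,
\]
each term nonnegative since $B^k\succeq0$; moreover $\det(A)\ge(1/\kappa(n))^n$ gives $T\le n\ln\kappa(n)$. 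Next I choose $M=\mathrm{poly}(n,\kappa(n),\epsilon(n)^{-1})$ so that, using $\sigma_1(B)\le1-1/\kappa(n)$, the tail obeys $0\le T-T_M\le\epsilon(n)/8$ for $T_M:=\sum_{k=1}^{M}\tr(B^k)/k$ --- a geometric estimate makes $M=O\!\big(\kappa(n)\log(n\kappa(n)/\epsilon(n))\big)$ enough --- and replace each $1/k$ by a nonnegative $O(\log n)$-bit rational $\widetilde c_k\le1/k$ with $\big|\sum_{k\le M}(\widetilde c_k-1/k)\tr(B^k)\big|\le\epsilon(n)/8$. On the degenerate inputs (roughly, $\beta$ or $\epsilon(n)$ exceeding $n\kappa(n)$, where $T\le n\ln\kappa(n)$ already decides the instance) the reduction outputs a fixed canonical YES-instance; otherwise $\beta,\epsilon(n)\le\mathrm{poly}(n,\kappa(n))$ and I pick a polynomial constant $C$ (say $C=2n\kappa(n)+1$) large enough that every quantity below stays positive.

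The core step is the construction. Put $N:=(M+1)n$ and view $\mathbb{C}^N$ as $M+1$ blocks of size $n$ indexed by $\{0,1,\dots,M\}$; define block-diagonal matrices $A_1,\dots,A_M\in\widehat{\mat}(N)$ by: the $0$-block of $A_1$ is $\mathrm{diag}(C,0,\dots,0)$ and of $A_j$ (for $j\ge2$) is $I_n$; for $k\in[M]$, the $k$-block of $A_j$ is $-\widetilde c_kB$ if $j=1$, $B$ if $2\le j\le k$, and $I_n$ if $j>k$. Then $A_1\cdots A_M$ is block-diagonal with $0$-block $\mathrm{diag}(C,0,\dots,0)$ and $k$-block $-\widetilde c_kB^k$, so, taking $E=\{(i,i):i\in[N]\}$,
\[
\sum_{(s,t)\in E}(A_1\cdots A_M)[s,t]\;=\;C-\sum_{k=1}^{M}\widetilde c_k\tr(B^k)\;=:\;S,\qquad 0\le S\le C,
\]
where the lower bound uses $\sum_k\widetilde c_k\tr(B^k)\le T_M\le T\le n\ln\kappa(n)<C$. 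Well-conditioning is preserved: for all $1\le j_1\le j_2\le M$ the partial product $A_{j_1,j_2}$ is block-diagonal whose $0$-block is $\mathrm{diag}(C,0,\dots,0)$ or $I_n$, and whose $k$-block (for $k\ge1$) equals $\pm\widetilde c_kB^{p}$ or $B^{p}$ for some $0\le p\le k$; since $\widetilde c_k\le1$ and $\sigma_1(B)\le1$, each such block has norm $\le1$, whence $\sigma_1(A_{j_1,j_2})\le C=:\kappa'(N)$.

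Finally I would set the output parameters and check uniformity. Let $\delta=\epsilon(n)/4$, so $|S-(C-T)|\le\delta$ by the two error bounds; put $b'=C-\beta-\delta>0$ and $\epsilon'=\epsilon(n)/2$. If $\det(A)\ge e^b$, i.e.\ $T\le\beta$, then $S\ge C-\beta-\delta=b'$; if $\det(A)\le e^{b-\epsilon(n)}$, i.e.\ $T\ge\beta+\epsilon(n)$, then $S\le C-\beta-\epsilon(n)+\delta=b'-\epsilon'$; and always $0\le S\le C\le N\kappa'=|E|\kappa'$. Hence $\langle A_1,\dots,A_M,E,b'\rangle\in\mathsf{SUMITMATPROD}_{N,M,\kappa',(\epsilon')^{-1}}$ and $\mathsf{SUMITMATPROD}(\langle A_1,\dots,A_M,E,b'\rangle)=\mathsf{DET}^{+}(\langle A,b\rangle)$, with $N,M,\kappa',1/\epsilon'$ all polynomials in $n,\kappa(n),\epsilon(n)^{-1}$. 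Each $A_j$, $E$, $b'$ arises from the entries of $A$, from $b$, and from the $\mathrm{AC}^0$-computable quantities $n,\kappa(n),\epsilon(n)^{-1},M$ by index comparisons and $O(1)$ scalar operations, so the map is uniform $\mathrm{AC}^0$. I expect the main obstacle to be exactly this core construction: the textbook Berkowitz/Csanky-style routes from $\ln\det$ (or the characteristic polynomial) through matrix powers blow up the condition number, so one must lay out the single product so that \emph{every prefix} remains a scaled power $B^p$ with $p\le k$ (hence norm $\le1$), while simultaneously encoding the $\det$-to-$\mathsf{SUMITMATPROD}$ sign reversal through the additive constant $C$ without letting any quantity become negative or superpolynomial --- and it is precisely the hypothesis $\kappa(n)=\mathrm{poly}(n)$, i.e.\ $\sigma_1(B)\le1-1/\mathrm{poly}(n)$, that legitimizes the $\mathrm{poly}(n)$-term truncation.
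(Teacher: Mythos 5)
Your proposal is correct and takes essentially the same route as the paper's proof: a truncated matrix-logarithm series for $\ln\det$, realized as a block-diagonal iterated product whose $k$-th block accumulates $-\tfrac{1}{k}(I-A)^k$ (coefficients folded into the first factor so every partial product is a scaled power of $I-A$ with norm bounded by a polynomial), with the trace extracted via a diagonal index set $E$ and an additive offset ensuring nonnegativity. The only differences are cosmetic: the paper places the offset as $\widehat{l}$ identity blocks (keeping $\kappa'=1$) rather than a single large diagonal entry, and your rounding of $1/k$ is unnecessary since $1/k$ already lies in $\mathbb{Q}[i]_n$ for $k\le\mathrm{poly}(n)$.
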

	\begin{proof}	
		Consider $\langle H,b \rangle \in \mathsf{DET}_{n,\kappa,\epsilon^{-1}}^{+}$. By the promise, $H \in \widehat{\pos}(n)$, $\lambda_1(H)=\sigma_1(H)\leq 1$, and $\lambda_n(H)=\sigma_n(H) \geq \kappa(n)^{-1}$, which implies  $\sigma_1(I-H)=\lambda_1(I-H)=1-\lambda_n(H) \leq 1-\kappa(n)^{-1}<1$. This implies $\ln(H)=-\sum\limits_{k=1}^{\infty} \frac{(I-H)^k}{k}$, where here $\ln(H)$ denotes the matrix logarithm. Recall that, as a consequence of Jacobi's formula, $\ln(\det(H))=\tr(\ln(H))$. 
		
		For $m \in \mathbb{N}_{\geq 1}$, let $S_m=\sum\limits_{k=1}^{m} \frac{(I-H)^k}{k}$, let $R_m=\sum\limits_{k=m+1}^{\infty} \frac{(I-H)^k}{k}=-\log(H)-S_m$, and let $D_m \in \widehat{\mat}(m)$ denote the diagonal matrix where $D_m[k,k]=\frac{1}{k}$. Let $\widehat{l}=\lfloor 1+\log(\lfloor \kappa(n) \rfloor) \rfloor$, let $\widehat{A}_1=I_{n\widehat{l}} \oplus (-D_m \otimes (I-H)) \in \widehat{\mat}(n\widehat{l}+nm)$, and, for $k \in [m]$, let $\widehat{A}_k=I_{n(\widehat{l}+k-1)} \oplus (I_{m+1-k} \otimes (I-H))\in \widehat{\mat}(n\widehat{l}+nm)$. Then $$\widehat{A}_{1,m}=\prod_{j=1}^m \widehat{A}_j=I_{n\widehat{l}} \oplus \left(\bigoplus_{k=1}^m \frac{-(I-H)^k}{k}\right).$$ 
		Let $E_m=\{(d,d):d \in [n\widehat{l}+nm]\}$. We then have $$\sum_{(s,t) \in E_m} \widehat{A}_{1,m}[s,t]=\tr(\widehat{A}_{1,m})=\tr(I_{n\widehat{l}})-\sum_{k=1}^m \tr\left(\frac{(I-H)^k}{k}\right)=n\widehat{l}-\tr(S_m)=n\widehat{l}+\ln(\det(H))+\tr(R_m).$$
		Moreover, for $1 \leq j_1 \leq j_2 \leq m$, we have $$\sigma_1(\widehat{A}_{j_1,j_2}) \leq \max\left(\sigma_1(I_{n\widehat{l}}),\max_{k \in \{0,\ldots,j_2-j_1\}} \sigma_1((I-H)^k)\right) = 1.$$ 		
		As shown above, $\sigma_1(I-H) \leq 1-\kappa(n)^{-1}$, which implies
		$$\sigma_1(R_m)= \sigma_1\left(\sum_{k=m+1}^{\infty} \frac{(I-H)^k}{k} \right) \leq \sum_{k=m+1}^{\infty}  \frac{(\sigma_1(I-H))^k}{k} \leq \sum_{k=m+1}^{\infty} \frac{(1-\kappa(n)^{-1})^k}{k}\leq \kappa(n) \left(1-\frac{1}{\kappa(n)}\right)^{m+1}.$$
		If $m \geq \kappa(n) \ln(2n\kappa(n)\epsilon(n)^{-1})$, then $$\tr(R_m)\leq n \sigma_1(R_m) \leq n \kappa(n) \left(1-\frac{1}{\kappa(n)}\right)^{\kappa(n) \ln(2n\kappa(n)\epsilon(n)^{-1})} \leq n\kappa(n) \left(\frac{1}{e}\right)^{\ln(2n\kappa(n)\epsilon(n)^{-1})} =\frac{1}{2} \epsilon(n).$$ 
		Let $\widehat{m}=\lceil \kappa(n) \rceil \lfloor 1+\log(\lfloor 2n\kappa(n)\epsilon(n)^{-1} \rfloor) \rfloor \geq \kappa(n) \ln(2n\kappa(n)\epsilon(n)^{-1})$ and $\widehat{E}=E_{\widehat{m}}$. Note that $\tr(R_m) \geq 0$. We then have, $$n\widehat{l}+\ln(\det(H)) \leq \sum_{(s,t) \in \widehat{E}} \widehat{A}_{1,m}[s,t]= n\widehat{l}+\ln(\det(H))+\tr(R_{\widehat{m}}) \leq n\widehat{l}+\ln(\det(H))+\frac{1}{2} \epsilon(n).$$ 
		If $\det(H) \geq e^b$, then $\sum\limits_{(s,t) \in \widehat{E}} \widehat{A}_{1,m}[s,t] \geq n\widehat{l}+b$; if $\det(H) \leq e^{b-\epsilon(n)}$, then $\sum\limits_{(s,t) \in \widehat{E}} \widehat{A}_{1,m}[s,t] \leq n\widehat{l}+b-\frac{1}{2}\epsilon(n)$. Let  $\widehat{b}=n\widehat{l}+b$. Therefore, $\langle \widehat{A}_1,\ldots,\widehat{A}_{\widehat{m}}, \widehat{E},\widehat{b} \rangle \in \mathsf{SUMITMATPROD}_{n(\widehat{l}+\widehat{m}),\widehat{m},1,2 \epsilon^{-1}(n)}$ and $\mathsf{DET}(\langle H,b \rangle) = \mathsf{SUMITMATPROD}(\langle \widehat{A}_1,\ldots,\widehat{A}_{\widehat{m}}, \widehat{E},\widehat{b} \rangle)$. 
	\end{proof}	
	
	\begin{lemma}\label{lemma:reduction:iterMatProdToNonnegIterMatProd}
		$\mathsf{ITMATPROD} \leq_{\mathsf{AC}^0}^m \mathsf{ITMATPROD}^{\geq 0}$.
	\end{lemma}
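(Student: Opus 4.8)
The plan is to reduce the test ``$\lvert A_{1,m}[s,t]\rvert \geq b$'' to the test ``$\lvert A_{1,m}[s,t]\rvert^{2} \geq b^{2}$,'' since $\lvert A_{1,m}[s,t]\rvert^{2}$ is a nonnegative real and hence is the kind of quantity that $\mathsf{ITMATPROD}^{\geq 0}$ can handle. Writing $B = A_{1,m}$, the key observation is that $\lvert B[s,t]\rvert^{2} = \overline{B[s,t]}\,B[s,t] = \bigl(\overline{B}\,F_{n,t,s}\,B\bigr)[s,t]$, where $\overline{B}$ denotes the entrywise complex conjugate, and that $\overline{B} = \overline{A_1}\cdots\overline{A_m}$. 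So I would map the instance $\langle A_1,\ldots,A_m,s,t,b\rangle \in \mathsf{ITMATPROD}_{n,m,\kappa,\epsilon^{-1}}$ to the ``conjugate sandwich'' instance $\langle \overline{A_1},\ldots,\overline{A_m},\,F_{n,t,s},\,A_1,\ldots,A_m,\,s,t,\,b^{2}\rangle$, a sequence of $2m+1$ matrices still in $\widehat{\mat}(n)$. This map only conjugates entries (negating imaginary parts) and writes down one matrix with a single $1$, so it is computable in $\mathsf{AC}^0$.

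Next I would verify the two required properties. For the output: the new iterated product equals $\overline{B}\,F_{n,t,s}\,B$, whose $(s,t)$-entry is exactly $\lvert B[s,t]\rvert^{2}$, which is $\geq b^{2}$ if and only if $\lvert B[s,t]\rvert \geq b$, matching the output of $\mathsf{ITMATPROD}$. For the conditioning: every partial product of the new sequence is, for appropriate $j_1 \leq j_2$, one of $\overline{A_{j_1,j_2}}$, $A_{j_1,j_2}$, $\overline{A_{j_1,m}}\,F_{n,t,s}$, $F_{n,t,s}\,A_{1,j_2}$, or $\overline{A_{j_1,m}}\,F_{n,t,s}\,A_{1,j_2}$; since $\sigma_1(\overline{A_{j_1,j_2}}) = \sigma_1(A_{j_1,j_2}) \leq \kappa(n)$ and $\sigma_1(F_{n,t,s}) = 1$, each such partial product has $\sigma_1 \leq \kappa(n)^{2}$. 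I would therefore set the new parameters to $m' = 2m+1$, $\kappa'(n) = \kappa(n)^{2}$, and $\epsilon'(n) = \epsilon(n)^{2}$, each a polynomial in $n, m(n), \kappa(n), \epsilon(n)^{-1}$ as the definition of $\leq^{m}$ demands.

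The one place needing care is checking that $\epsilon'(n) = \epsilon(n)^{2}$ is a legitimate gap. In a no-instance, $\lvert B[s,t]\rvert \leq b - \epsilon(n)$, which in particular forces $b \geq \epsilon(n)$, so $\lvert B[s,t]\rvert^{2} \leq (b-\epsilon(n))^{2} \leq b^{2} - \epsilon(n)^{2}$; in a yes-instance, $b^{2} \leq \lvert B[s,t]\rvert^{2} \leq \kappa(n)^{2}$. Hence the new entry always lies in $[0, b^{2} - \epsilon'(n)] \cup [b^{2}, \kappa'(n)]$, so the image satisfies the promise of $\mathsf{ITMATPROD}^{\geq 0}_{n,m',\kappa',\epsilon'^{-1}}$; the degenerate case $b < \epsilon(n)$, in which there are no no-instances, is absorbed automatically since then the new entry lies in $[b^{2},\kappa'(n)]$. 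This bookkeeping around the gap (the inequality $(b-\epsilon)^{2} \leq b^{2} - \epsilon^{2}$ and its vacuous-case analysis) together with the fact that conjugation preserves singular values is essentially all there is to the proof; the conceptual crux is just spotting the conjugate-sandwich identity $\lvert B[s,t]\rvert^{2} = (\overline{A_1}\cdots\overline{A_m}\,F_{n,t,s}\,A_1\cdots A_m)[s,t]$, which keeps the matrix dimension at $n$ (a naive $A_j \otimes \overline{A_j}$ construction would work too but would square the dimension and require multiplying entries), and once it is in hand the rest is routine.
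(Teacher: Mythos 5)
Your proposal is correct and is essentially the paper's own argument: both form a sequence of $2m+1$ matrices that sandwiches a rank-one matrix so that the full product carries $\lvert A_{1,m}[s,t]\rvert^{2}$ in a designated entry, with the same parameter bookkeeping $\widehat{b}=b^{2}$, $\kappa'=\kappa^{2}$, $\epsilon'=\epsilon^{2}$ and the same $\sigma_1\leq\kappa^{2}$ bound on partial products. The only (cosmetic) difference is that the paper uses $A_1,\ldots,A_m,\ket{t}\bra{t},A_m^{\dagger},\ldots,A_1^{\dagger}$ and reads off the $(s,s)$ entry, whereas you use $\overline{A_1},\ldots,\overline{A_m},\ket{t}\bra{s},A_1,\ldots,A_m$ and read off the $(s,t)$ entry; your explicit check that $(b-\epsilon)^{2}\leq b^{2}-\epsilon^{2}$ (and the vacuous case $b<\epsilon$) is a nice touch that the paper leaves implicit.
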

	\begin{proof}
		Consider $\langle A_1,\ldots,A_m,s,t,b \rangle \in \mathsf{ITMATPROD}_{n,m,\kappa,\epsilon^{-1}}$. For $j \in [2m+1]$, we define $$\widehat{A}_j=\begin{cases}
		A_j, & j\leq m \\
		\ket{t}\bra{t}, & j=m+1\\
		A_{2m+2-j}^{\dagger}, & j\geq m+2
		\end{cases}$$ We then have $\widehat{A}_{1,2m+1}[s,s]=A_{1,m}[s,t]\overline{A_{1,m}[s,t]}=\lvert A_{1,m}[s,t] \rvert^2$. Consider $j_1,j_2$ such that $1 \leq j_1 \leq j_2 \leq m+1$. If $j_1 \leq m+1 \leq j_2$, then $$\sigma_1(\widehat{A}_{j_1,j_2})=\sigma_1\left(A_{j_1,m}\ket{t}\bra{t}A_{j_2,m}^{\dagger}\right)\leq \sigma_1(A_{j_1,m})\sigma_1(\ket{t}\bra{t})\sigma_1\left(A_{j_2,m}^{\dagger}\right) \leq \kappa(n)^2.$$ If $j_2<m+1$, then $\sigma_1(\widehat{A}_{j_1,j_2})=\sigma_1(A_{j_1,j_2}) \leq \kappa(n)$. If $j_1>m+1$, then $\sigma_1(\widehat{A}_{j_1,j_2})=\sigma_1(A_{j_1-m-1,j_2-m-1}) \leq \kappa(n)$. Let $\widehat{s}=\widehat{t}=s$ and $\widehat{b}=b^2$. Then $\langle \widehat{A}_1,\ldots,\widehat{A}_{2m+1},\widehat{s},\widehat{t},\widehat{b} \rangle \in \mathsf{ITMATPROD}_{n,2m+1,\kappa^2,\epsilon^{-2}}^{\geq 0}$ and $\mathsf{ITMATPROD}(\langle A_1,\ldots,A_m,s,t,b \rangle)=  \mathsf{ITMATPROD}(\langle \widehat{A}_1,\ldots,\widehat{A}_{2m+1},\widehat{s},\widehat{t},\widehat{b} \rangle)$.
	\end{proof}
	
	\begin{lemma}\label{lemma:reduction:nonnegIterMatProdToDet}
		$\mathsf{ITMATPROD}^{\geq 0} \leq_{\mathsf{AC}^0}^m \mathsf{DET}$.
	\end{lemma}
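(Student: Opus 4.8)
The plan is to exhibit the sought entry $A_{1,m}[s,t]$ of the iterated product as, up to sign and a shift of $1$, the determinant of a bordered block-bidiagonal matrix, and then to translate the additive promise gap of $\mathsf{ITMATPROD}^{\geq 0}$ into the multiplicative (equivalently, log-additive) gap required by $\mathsf{DET}$.

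First I would build, exactly as in the proofs of \Cref{lemma:reduction:iterMatProdToMatPow,lemma:reduction:matPowToMatInv}, the $(m+1)n \times (m+1)n$ matrix $M$ made of $(m+1)^2$ blocks of size $n$, with $I_n$ on each diagonal block and $-A_j$ on the block directly above the $j$-th diagonal block, $j \in [m]$. Then $M = I - N_0$ with $N_0$ block-nilpotent, so $\det(M)=1$ and $M^{-1} = \sum_{k=0}^{m} N_0^{k}$, the $(1,m{+}1)$ block of $M^{-1}$ equals $A_{1,m}$, and each $N_0^{k}$ is a ``staircase'' whose nonzero blocks are the partial products $A_{j,j+k-1}$; hence $\sigma_1(N_0^{k}) = \max_j \sigma_1(A_{j,j+k-1}) \leq \kappa(n)$ by the promise, giving $\sigma_1(M) \leq 1+\kappa(n)$ and $\sigma_1(M^{-1}) \leq 1+m\kappa(n)$. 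Writing $e_i$ for the $i$-th standard basis vector of $\mathbb{C}^{(m+1)n}$, I would then form the $\widehat n \times \widehat n$ matrix $N=\left(\begin{smallmatrix} M & e_{mn+t} \\ e_s^{\top} & -1 \end{smallmatrix}\right)$, where $\widehat n := (m+1)n+1$. By the Schur-complement determinant identity, $\det(M)=1$, and $M^{-1}[s,mn+t]=A_{1,m}[s,t]$, we get $\det(N) = -\bigl(1+A_{1,m}[s,t]\bigr)$, so $\lvert\det(N)\rvert = A_{1,m}[s,t]+1$. The corner entry $-1$ is essential: since $A_{1,m}[s,t]$ may vanish, without it $N$ could be singular, which would violate both the nonvanishing-determinant and the well-conditioning requirements of $\mathsf{DET}$.

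Next I would verify that $N$ is well-conditioned. The upper bound $\sigma_1(N) \leq \sigma_1(M)+2 \leq \kappa(n)+3$ is immediate. For the smallest singular value I would use the $2\times 2$ block-inversion formula: setting $\sigma := -1 - A_{1,m}[s,t]$ (a nonzero scalar with $\lvert\sigma^{-1}\rvert \leq 1$), every block of $N^{-1}$ is, up to contraction against the $e_i$'s, a product of at most two copies of $M^{-1}$ times $\sigma^{-1}$; hence $\sigma_1(N^{-1}) = O\bigl((1+m\kappa(n))^{2}\bigr)$ and so $\sigma_{\widehat n}(N) \geq 1/poly(n,m,\kappa(n))$. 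I would then rescale, taking $L$ to be the least power of two with $L \geq 2(\kappa(n)+3)$ and $\widehat A := N/L$: division by a power of two is computable in $\mathsf{AC}^0$ and keeps the entries in $\widehat{\mat}(\widehat n)$, and it gives $\sigma_1(\widehat A) \leq \tfrac12$, leaves $\widehat A$ well-conditioned with condition number $poly(n,m,\kappa(n))$, and yields $\lvert\det(\widehat A)\rvert = L^{-\widehat n}\bigl(A_{1,m}[s,t]+1\bigr) \in (0,1]$.

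Finally, since $\lvert\det(\widehat A)\rvert \geq L^{-\widehat n}(b+1)$ holds exactly when $A_{1,m}[s,t] \geq b$, I would take $\widehat b$ to be a sufficiently tight underestimate of $\ln(b+1) - \widehat n\ln L$; this is negative with room to spare because $L \geq 2(\kappa(n)+3) > 2(b+1)$. In the no-case, $A_{1,m}[s,t] \leq b-\epsilon(n)$ gives $\lvert\det(\widehat A)\rvert \leq L^{-\widehat n}(b+1-\epsilon(n)) \leq L^{-\widehat n}(b+1)\,e^{-\widehat\epsilon}$ provided $\widehat\epsilon$ plus the underestimation error is at most $\ln\tfrac{b+1}{b+1-\epsilon(n)}$; since one may assume $\epsilon(n) \leq b \leq \kappa(n)$ (the remaining cases are trivial and map to a fixed instance), the choice $\widehat\epsilon^{-1} = 3(\kappa(n)+1)\,\epsilon(n)^{-1}$ works and is, together with $\widehat n$ and the condition number, polynomial in the original parameters. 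For $\mathsf{AC}^0$-computability of $\widehat b$ one observes that $\widehat n$ and $L$ have $O(\log n)$ bits and that only the top $O(\log n)$ bits of $b$ matter for computing $\ln(b+1)$ to the required $1/poly(n)$ accuracy (enough given the polynomial gap $\widehat\epsilon$), so $\widehat b$ depends on only $O(\log n)$ input bits. Putting this together shows $\langle\widehat A,\widehat b\rangle$ is a legal instance of the appropriate $\mathsf{DET}$ family with $\mathsf{DET}(\langle\widehat A,\widehat b\rangle)=\mathsf{ITMATPROD}^{\geq 0}(\langle A_1,\dots,A_m,s,t,b\rangle)$. I expect the well-conditioning step — the $1/poly$ lower bound on $\sigma_{\widehat n}(N)$ — to be the main obstacle, since it is what forces the ``$-1$ in the corner'' device and depends essentially on the $\sigma_1(A_{j_1,j_2})\leq\kappa(n)$ promise via the block-inversion estimate; the additive-to-multiplicative gap conversion and the verification that the reduction stays in $\mathsf{AC}^0$ are the other delicate points, but are routine once the construction is fixed.
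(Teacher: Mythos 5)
Your proposal is correct and follows essentially the same route as the paper: the same block-bidiagonal matrix $B=I-Y$ with $\det(B)=1$ and $B^{-1}[s,nm+t]=A_{1,m}[s,t]$, a rank-one perturbation making the determinant equal $1+A_{1,m}[s,t]$, the same use of the nonnegativity promise to bound the inverse of the perturbed matrix (your Schur-complement/block-inversion step is the bordered-matrix twin of the paper's matrix-determinant-lemma/Sherman--Morrison step), and the same rescale-and-take-logs conversion of the additive gap into the multiplicative $\mathsf{DET}$ gap. The only differences are cosmetic (a border of dimension one versus adding $\ket{nm+t}\bra{s}$ directly, and a power-of-two rescaling versus $e^{-\widehat{l}}$), and your quantitative bounds match the paper's up to constants.
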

	\begin{proof}	
		Consider $\langle A_1,\ldots,A_m,s,t,b \rangle \in \mathsf{ITMATPROD}_{n,m,\kappa,\epsilon^{-1}}^{\geq 0}$. Let $Y=\sum\limits_{r=1}^m F_{m+1,r,r+1} \otimes A_r \in \widehat{\mat}(nm+n)$ consist of $n \times n$ blocks, where the blocks immediately above the main diagonal blocks are given by $A_1,\ldots,A_m$, and all other entries are $0$. Let $B=I_{nm+n}-Y \in \widehat{\mat}(nm+n)$ and observe that $$B^{-1}=I_{nm+n}+\sum_{r=1}^{m} \sum_{c=r+1}^{m+1} F_{m+1,r,c}\otimes A_{r,c-1}.$$ 
		Let $C=B+\ket{nm+t}\bra{s}$. By the matrix determinant lemma, and the fact that $\det(B)=1$, $$\det(C)=(1+\bra{s} B^{-1} \ket{nm+t}) \det(B)=1+B^{-1}[s,nm+t]=1+A_{1,m}[s,t].$$
		Next, observe that $$\sigma_1(C) \leq \sigma_1(\ket{nm+t}\bra{s})+\sigma_1(I)+\sigma_1(Y)\leq 2+\max_j\sigma_1(A_j)\leq 2+\kappa(n).$$ 
		Notice that $B^{-1}=\sum\limits_{j=0}^m Y^j$, which implies $$\sigma_1(B^{-1})\leq \sum_{j=0}^m \sigma_1(Y^j)\leq 1+\sum_{j=1}^m \left( \max_{k \in [m-j+1]} \sigma_1(A_{k,k+j-1}) \right) \leq 1+\sum_{j=1}^m \kappa(n) =1+m\kappa(n).$$ 
		By the Sherman-Morrison formula, $C^{-1}=B^{-1}(I-(1+\bra{s} B^{-1}\ket{nm+t})^{-1} \ket{nm+t}\bra{s} B^{-1})$. Recall that, by the promise, $\bra{s} B^{-1}\ket{nm+t}=A_{1,m}[s,t] \in \mathbb{R}_{\geq 0}$. Therefore, $$\sigma_1(C^{-1}) \leq \sigma_1(B^{-1})(\sigma_1(I)+\sigma_1((1+\bra{s} B^{-1}\ket{nm+t})^{-1} \ket{nm+t}\bra{s})\sigma_1(B^{-1})) \leq (1+m\kappa(n))(2+m\kappa(n)).$$ This implies $\sigma_{nm+n}(C)=\sigma_1(C^{-1})^{-1} \geq ((1+m\kappa(n))(2+m\kappa(n)))^{-1}$. Let $\widehat{l}=\lfloor 1+\ln(\lfloor 2+\kappa(n) \rfloor) \rfloor$ and let $\widehat{C}=e^{-\widehat{l}}C \in \widehat{\mat}(nm+n)$. Then, for $j \in [nm+n]$, $\sigma_j(\widehat{C})=e^{-\widehat{l}}\sigma_j(C)$; in particular, $\sigma_1(\widehat{C}) \leq 1$ and $\sigma_{nm+n}(\widehat{C}) \geq (2+m\kappa(n))^{-3}$. Moreover, $$\lvert \det(\widehat{C})\rvert =\lvert e^{-\widehat{l}(nm+n)} \det(C) \rvert=\lvert e^{-\widehat{l}(nm+n)} (1+A_{1,m}[s,t]) \rvert=e^{-\widehat{l}(nm+n)}(1+A_{1,m}[s,t]).$$	
		Let $\widehat{a}=\ln(1+b-\epsilon(n))-\widehat{l}(nm+n)$ and $\widehat{b}=\ln(1+b)-\widehat{l}(nm+n)$. If $A_{1,m}[s,t] \geq b$, then $\lvert \det(\widehat{C}) \rvert \geq e^{\widehat{b}}$; if $A_{1,m}[s,t] \leq b-\epsilon(n)$, then $\lvert \det(\widehat{C}) \rvert \leq e^{\widehat{a}}$; thus, $\mathsf{ITMATPROD}(\langle A_1,\ldots,A_m,s,t,b \rangle) =  \mathsf{DET}(\langle \widehat{C},\widehat{b} \rangle)$. Finally, $$\widehat{b}-\widehat{a}=\ln\left(\frac{1+b}{1+ b-\epsilon(n)}\right)=\ln\left(1+\frac{\epsilon(n)}{1+ b-\epsilon(n)}\right)\geq \ln\left(1+\frac{\epsilon(n)}{1+\kappa(n)}\right)\geq \frac{\epsilon(n)}{2(1+\kappa(n))}.$$  Therefore, $\langle \widehat{C},\widehat{b} \rangle \in \mathsf{DET}_{nm+m,(2+m\kappa(n))^3,\epsilon^{-1}(n)(2+2\kappa(n))}$. 
	\end{proof}
	
	\begin{lemma}\label{lemma:reduction:detToPosDet}
		$\mathsf{DET} \leq_{\mathsf{NC}^1}^m \mathsf{DET}^{+}$.
	\end{lemma}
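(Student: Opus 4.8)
The plan is to exploit the identity $\lvert\det(A)\rvert^2 = \det(A^\dagger A)$, paralleling the block construction already used in \Cref{lemma:reduction:matInvToPosMatInv}. Given an instance $\langle A,b\rangle \in \mathsf{DET}_{n,\kappa,\epsilon^{-1}}$ (so $A \in \widehat{\mat}(n,1/\kappa(n),1)$, $b \le 0$, and $\lvert\det(A)\rvert \in (0,e^{b-\epsilon(n)}] \cup [e^b,1]$), I would output the $\mathsf{DET}^{+}$ instance $\langle \widehat{H},\widehat{b}\rangle$ with $\widehat{H} = A^\dagger A$ and $\widehat{b} = 2b$. Since $\sigma_n(A) \ge 1/\kappa(n) > 0$, the matrix $A$ is invertible, so $\widehat{H}$ is Hermitian positive definite with Gaussian-rational entries, hence $\widehat{H} \in \widehat{\pos}(n)$; moreover $\sigma_i(\widehat{H}) = \sigma_i(A)^2$ gives $\sigma_1(\widehat{H}) \le 1$ and $\sigma_n(\widehat{H}) \ge 1/\kappa(n)^2$, i.e.\ $\widehat{H} \in \widehat{\pos}(n,1/\kappa(n)^2,1)$, and $\det(\widehat{H}) = \overline{\det(A)}\,\det(A) = \lvert\det(A)\rvert^2$.

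For correctness, note that both $\det(\widehat{H})$ and $e^{\widehat b}$ are positive, so $\det(\widehat H) = \lvert\det(A)\rvert^2 \ge e^{2b}$ if and only if $\lvert\det(A)\rvert \ge e^b$; hence, since $\widehat b = 2b \le 0$, we have $\mathsf{DET}^{+}(\langle \widehat H,\widehat b\rangle) = \mathsf{DET}(\langle A,b\rangle)$. For the promise, when $\lvert\det(A)\rvert \ge e^b$ we get $\det(\widehat H) \in [e^{2b},1] = [e^{\widehat b},1]$, and when $\lvert\det(A)\rvert \le e^{b-\epsilon(n)}$ we get $\det(\widehat H) \le e^{2b-2\epsilon(n)} = e^{\widehat b - 2\epsilon(n)}$, so $\det(\widehat H) \in (0,e^{\widehat b - 2\epsilon(n)}] \cup [e^{\widehat b},1]$: the $\mathsf{DET}^{+}$ promise holds with the accuracy parameter $\epsilon(n)$ replaced by $2\epsilon(n)$. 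Thus $\langle \widehat H,\widehat b\rangle \in \mathsf{DET}^{+}_{n,\kappa(n)^2,\epsilon^{-1}(n)/2}$, and the new parameters are polynomials in $n,\kappa(n),\epsilon^{-1}(n)$, as required of a many-one reduction.

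Computing $A^\dagger A$ from $A$ is iterated multiplication and addition of the Gaussian-rational entries, which lies in uniform $\mathsf{NC}^1$ (this is exactly why \Cref{lemma:reduction:matInvToPosMatInv} is stated as an $\mathsf{NC}^1$ reduction), so the map above is an $\mathsf{NC}^1$ many-one reduction, and in particular $poly\text{-conditioned-}\mathsf{DET} \leq_{\mathsf{NC}^1}^m poly\text{-conditioned-}\mathsf{DET}^{+}$ follows. I do not expect a genuine obstacle here: the only points needing care are the parameter bookkeeping (the condition number squares, and the exponent gap $\epsilon(n)$ doubles to $2\epsilon(n)$) and, as in the matrix-inversion reduction, checking that $A^\dagger A$ genuinely qualifies as an element of $\widehat{\pos}$ --- if one wishes to be fully careful about the admissible bit-length of its entries, one may pad $\widehat{H}$ with an identity block of polynomial dimension, which alters neither $\det(\widehat H)$ nor its extreme singular values.
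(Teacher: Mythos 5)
Your proposal is correct and follows essentially the same route as the paper, which sets $\widehat{H}=AA^{\dagger}$ (versus your $A^{\dagger}A$, an immaterial difference) and $\widehat{b}=2b$, using that the determinant and singular values square so the condition parameter becomes $\kappa^2$ and the exponent gap at least doubles. Your parameter bookkeeping is if anything slightly tighter than the paper's, and the extra remarks on bit-length and $\mathsf{NC}^1$ computability match the paper's intent.
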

	\begin{proof}
		Consider $\langle A,b \rangle \in \mathsf{DET}_{n,\kappa,\epsilon^{-1}}$. Let $\widehat{H}=AA^{\dagger} \in \widehat{\pos}(n)$ and $\widehat{b}=2b$. Then, $\det(\widehat{H})=\lvert \det(A) \rvert^2$ and $\sigma_j(\widehat{H})=\sigma_j^2(A), \ \forall j$. Therefore, $\langle \widehat{H},\widehat{b} \rangle \in \mathsf{DET}_{n,\kappa^2,2\epsilon^{-1}}$ and $\mathsf{DET}(\langle A,b \rangle) = \mathsf{DET}(\langle \widehat{H},\widehat{b} \rangle)$.
	\end{proof}
	
	\begin{lemma}\label{lemma:reduction:posMatInvToSumIterMatProd}
		$\mathsf{MATINV}^{+} \leq_{\mathsf{AC}^0}^m \mathsf{SUMITMATPROD}$.
	\end{lemma}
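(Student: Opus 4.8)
The plan is to mirror the proof of \Cref{lemma:reduction:posDetToSumIterMatProd}, replacing the power series for the matrix logarithm by the Neumann series for the matrix inverse. Given an instance $\langle A,s,t,b\rangle \in \mathsf{MATINV}^{+}_{n,\kappa,\epsilon^{-1}}$, the promise gives $A \in \widehat{\pos}(n,1/\kappa(n),1)$, so $0 \prec A \preceq I$ and $\sigma_n(A) \ge \kappa(n)^{-1}$. Setting $B = I - A \in \widehat{\pos}(n)$, this yields $\sigma_1(B) = 1 - \sigma_n(A) \le 1 - \kappa(n)^{-1} < 1$, so $A^{-1} = \sum_{k=0}^{\infty} B^k$ converges, and in particular $A^{-1}[s,t] = \sum_{k=0}^{\infty} B^k[s,t]$.

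Next I would truncate the series. Since $\lvert B^k[s,t]\rvert \le \sigma_1(B)^k \le (1-\kappa(n)^{-1})^k$, the tail past index $m$ has absolute value at most $\kappa(n)(1-\kappa(n)^{-1})^{m+1}$, so, exactly as in \Cref{lemma:reduction:posDetToSumIterMatProd}, choosing $\widehat m$ to be a suitable polynomial in $n$, $\kappa(n)$, $\epsilon(n)^{-1}$ (large enough that this tail is at most $\epsilon(n)/4$, and also at least $\kappa(n)$) makes $\sum_{k=0}^{\widehat m} B^k[s,t]$ an additive-$\epsilon(n)/4$ approximation of $A^{-1}[s,t]$.

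The key step is then to realize $\sum_{k=0}^{\widehat m} B^k[s,t]$ as a single $\mathsf{SUMITMATPROD}$ instance using the same block-diagonal gadget as in \Cref{lemma:reduction:posDetToSumIterMatProd}: with $r = \lceil\kappa(n)\rceil$, define $\widehat A_j = I_r \oplus I_{nj} \oplus (I_{\widehat m+1-j}\otimes B) \in \widehat{\mat}(r+n(\widehat m+1))$ for $j \in [\widehat m]$, so that the full product is $\widehat A_{1,\widehat m} = I_r \oplus \bigoplus_{k=0}^{\widehat m} B^k$ (block $k$ of the non-padding part being $B^k$), and, taking $\widehat E$ to be the $r$ diagonal positions of the $I_r$ block together with the copy of position $(s,t)$ inside each of the blocks $k = 0,\dots,\widehat m$, one gets $\sum_{(s',t')\in\widehat E}\widehat A_{1,\widehat m}[s',t'] = r + \sum_{k=0}^{\widehat m} B^k[s,t]$. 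The constant shift by $r=\lceil\kappa(n)\rceil$ is there to make this sum provably nonnegative (it lies within $\epsilon(n)/4$ of $r + A^{-1}[s,t] \ge r$) and bounded above by $\lvert\widehat E\rvert = r + \widehat m + 1$, so that the $\mathsf{SUMITMATPROD}$ promise (which involves an absolute value and an $\lvert E\rvert\kappa$ upper bound) is met. Crucially, every partial product $\widehat A_{j_1,j_2}$ is block diagonal with blocks among $I_r$ and $B^0,\dots,B^{j_2-j_1+1}$, and $\sigma_1(B^\ell) \le \sigma_1(B)^\ell \le 1$, so all partial products have $\sigma_1 \le 1$; that is, the reduction outputs a \emph{perfectly} conditioned instance. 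Setting $\widehat b = r + b - \epsilon(n)/4$ then separates the two cases with gap $\epsilon(n)/2$, so $\langle \widehat A_1,\dots,\widehat A_{\widehat m},\widehat E,\widehat b\rangle \in \mathsf{SUMITMATPROD}_{r+n(\widehat m+1),\,\widehat m,\,1,\,2\epsilon^{-1}(n)}$ with the same answer, and the whole map is computable in uniform $\mathsf{AC}^0$ just as before.

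The only real subtlety, and the part I would be most careful about, is the bookkeeping showing the new instance genuinely satisfies the $\mathsf{SUMITMATPROD}$ promise: keeping the sum nonnegative and below $\lvert\widehat E\rvert$, confirming $\widehat b \ge 0$ even when $b$ is tiny, and verifying that the polynomial chosen for $\widehat m$ is simultaneously large enough for convergence and for the $\lvert E\rvert\kappa$ bound. None of this is hard, but it is where a careless choice of parameters would break. Everything else (the Neumann series, the block-diagonal product identity, the $\sigma_1 \le 1$ bound) is either standard or a direct transcription of the corresponding computation in \Cref{lemma:reduction:posDetToSumIterMatProd}.
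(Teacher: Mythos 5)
Your proposal is correct and follows essentially the same route as the paper: truncate the Neumann series $A^{-1}=\sum_k (I-A)^k$ after $\widehat m = O(\kappa\log(\kappa\epsilon^{-1}))$ terms, realize the partial sums via the block-diagonal matrices $I_{jn}\oplus(I_{\widehat m-j+1}\otimes(I-A))$ whose partial products all have $\sigma_1\le 1$, and set $\widehat b = b-\epsilon/4$. The only difference is your extra $I_r$ block shifting the sum by $\lceil\kappa\rceil$ to guard the nonnegativity/absolute-value bookkeeping, which the paper omits; this is a harmless cosmetic variation, not a different argument.
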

	\begin{proof}
		Consider $\langle H,s,t,b \rangle \in \mathsf{MATINV}_{\kappa,\epsilon^{-1}}^{+}$. For $m \in \mathbb{N}$, we have $$\sum_{j=0}^m (I-H)^j=H^{-1}(I-(I-H)^{m+1}).$$ Let $\widehat{m}=\lceil \kappa(n) \rceil \lfloor 1+\log(\lfloor 4 \kappa(n) \epsilon(n)^{-1} \rfloor) \rfloor$. For $j \in [\widehat{m}]$, let $\widehat{A}_j=I_{jn} \oplus (I_{\widehat{m}-j+1} \otimes (I-H)) \in \widehat{\mat}(n\widehat{m}+n)$. For $1 \leq j_1 \leq j_2 \leq \widehat{m}$, we have $$\sigma_1(\widehat{A}_{j_1,j_2})=\sigma_1\left(I_{j_1 n} \oplus \left(\bigoplus_{k=1}^{\widehat{m}-j_1+1} (I-H)^{\min(k,j_2-j_1+1)}\right)\right)=\max_{k \in \{0,\ldots,j_2-j_1+1\}} \sigma_1((I-H)^k) = 1.$$ 
		Let $\widehat{E}=\{(s+jn,t+jn):j \in [\widehat{m}]\}$. We then have $$\sum_{(\widehat{s},\widehat{t}) \in \widehat{E}} \widehat{A}_{1,\widehat{m}}[\widehat{s},\widehat{t}]  = \sum_{j=0}^{\widehat{m}} (I-H)^j[s,t] = (H^{-1}(I-(I-H)^{\widehat{m}+1}))[s,t].$$ 
		This implies $$\left\lvert \bigg\lvert \sum_{(\widehat{s},\widehat{t}) \in \widehat{E}} \widehat{A}_{1,{\widehat{m}}}[\widehat{s},\widehat{t}]\bigg\rvert -\lvert H^{-1}[s,t]\rvert \right\rvert \leq \lvert (H^{-1}(I-H)^{\widehat{m}+1})[s,t] \rvert \leq \sigma_1(H^{-1}(I-H)^{\widehat{m}+1})$$ $$ \leq \sigma_1(H^{-1})(\sigma_1(I-H))^{\widehat{m}+1}\leq \kappa(n)\left(1-\frac{1}{\kappa(n)}\right)^{\widehat{m}+1}\leq \frac{1}{4}\epsilon(n).$$ Let $\widehat{b}=b-\frac{1}{4}\epsilon(n)$. We then conclude that $\langle \widehat{A}_1,\ldots,\widehat{A}_{\widehat{m}},\widehat{E},\widehat{b} \rangle \in \mathsf{SUMITMATPROD}_{n\widehat{m}+n,\widehat{m},1,2\epsilon^{-1}(n)}$ and $\mathsf{MATINV}(\langle H,s,t,b \rangle) = \mathsf{SUMITMATPROD}(\langle \widehat{A}_1,\ldots,\widehat{A}_{\widehat{m}},\widehat{E},\widehat{b} \rangle)$.
	\end{proof}
	
	\begin{lemma}\label{lemma:reduction:sumIterMatProdToIterMatProd}
		$\mathsf{SUMITMATPROD} \leq_{\mathsf{AC}^0}^m \mathsf{ITMATPROD}$.
	\end{lemma}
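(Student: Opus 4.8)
The plan is to realize the restricted sum $\sum_{(s,t)\in E}A_{1,m}[s,t]$ as a single entry of a slightly longer matrix product by adjoining a ``bookkeeping'' register that records the pair $(s,t)$; this is what lets the tensor identity on the enlarged space pick out exactly the $E$-indexed terms rather than an unrestricted bipartite sum. Given $\langle A_1,\ldots,A_m,E,b\rangle\in\mathsf{SUMITMATPROD}_{n,m,\kappa,\epsilon^{-1}}$, I would work in $\mathbb{C}^{n^3}\cong\mathbb{C}^{n^2}\otimes\mathbb{C}^{n}$ with the first tensor factor indexed by pairs $p\in[n]^2$, and set $\ket{\alpha}=\sum_{(s,t)\in E}\ket{(s,t)}\otimes\ket{s}$ and $\ket{\beta}=\sum_{(s,t)\in E}\ket{(s,t)}\otimes\ket{t}$. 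Fixing the basis vector $\ket{v}=\ket{(1,1)}\otimes\ket{1}$, the new instance is $\widehat{A}_1=\ket{v}\bra{\alpha}$, $\widehat{A}_{j+1}=I_{n^2}\otimes A_j$ for $j\in[m]$, $\widehat{A}_{m+2}=\ket{\beta}\bra{v}$, together with $\widehat{s}=\widehat{t}=v$ and $\widehat{b}=b$; so $\widehat{m}=m+2$ matrices of dimension $n^3$. Each $\widehat{A}_j$ has entries equal to $0$, $1$, or an entry of some $A_j$, and deciding membership in $E$ and comparing coordinates is in $\mathsf{AC}^0$, so the map is $\mathsf{AC}^0$-computable, and the new parameters $\widehat{m}=m+2$, $\widehat{\kappa}(n)=n^2\kappa(n)$, $\widehat{\epsilon}=\epsilon$ are polynomials in the old ones.

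Correctness is immediate: since $I_{n^2}\otimes A_{1,m}$ leaves the bookkeeping register fixed, $\bra{\alpha}(I_{n^2}\otimes A_{1,m})\ket{\beta}=\sum_{(s,t),(s',t')\in E}\langle(s,t)\vert(s',t')\rangle\bra{s}A_{1,m}\ket{t'}=\sum_{(s,t)\in E}A_{1,m}[s,t]$, hence $\widehat{A}_{1,\widehat{m}}=\bigl(\sum_{(s,t)\in E}A_{1,m}[s,t]\bigr)\ket{v}\bra{v}$ and $\widehat{A}_{1,\widehat{m}}[\widehat{s},\widehat{t}]=\sum_{(s,t)\in E}A_{1,m}[s,t]$, so with $\widehat{b}=b$ the reduction preserves the answer.

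For the conditioning promise I would bound $\sigma_1(\widehat{A}_{j_1,j_2})$ by cases. A partial product among the middle matrices only equals $I_{n^2}\otimes A_{j_1-1,j_2-1}$, so $\sigma_1\le\kappa(n)$. A partial product that includes $\widehat{A}_1$ but not $\widehat{A}_{m+2}$ has the form $\ket{v}$ times the row vector $\bra{\alpha}(I_{n^2}\otimes A_{1,j_2-1})$ (or equals $\widehat{A}_1$ itself), and since the summands over distinct pairs are orthogonal in the first register its squared norm is $\sum_{(s,t)\in E}\lVert A_{1,j_2-1}^{\dagger}\ket{s}\rVert^2\le|E|\,\sigma_1(A_{1,j_2-1})^2\le|E|\kappa(n)^2$; thus $\sigma_1\le\sqrt{|E|}\,\kappa(n)$, and the case including $\widehat{A}_{m+2}$ but not $\widehat{A}_1$ is symmetric. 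Finally the complete product has $\sigma_1=\lvert\sum_{(s,t)\in E}A_{1,m}[s,t]\rvert\le|E|\kappa(n)$ directly by the promise. Since $|E|\le n^2$, every partial product obeys $\sigma_1(\widehat{A}_{j_1,j_2})\le n^2\kappa(n)=\widehat{\kappa}(n)$, and the target value likewise lies in $[0,b-\epsilon(n)]\cup[b,\widehat{\kappa}(n)]$ by the $\mathsf{SUMITMATPROD}$ promise; hence $\langle\widehat{A}_1,\ldots,\widehat{A}_{m+2},\widehat{s},\widehat{t},\widehat{b}\rangle\in\mathsf{ITMATPROD}_{n^3,m+2,\widehat{\kappa},\epsilon^{-1}}$, completing the reduction.

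The main obstacle---and the reason the ``standard'' reductions between these linear-algebraic problems do not transfer---is controlling the singular values of the partial products that straddle the prepended and appended matrices; the key point is that each such product is rank one, so its single nonzero singular value can be bounded using $\lVert\ket{\alpha}\rVert=\lVert\ket{\beta}\rVert=\sqrt{|E|}$ together with the promised bounds, rather than by multiplying out individual factors. The only other thing to get right is that the bookkeeping register must record the \emph{entire} pair $(s,t)$: recording only the row index $s$ (or only the column index $t$) would turn the $E$-restricted sum into a full bipartite sum over all rows and columns appearing in $E$, which is a different (and generally wrong) quantity. This forces the dimension up to $n^3$, which is harmless for a logspace (indeed $\mathsf{AC}^0$) reduction.
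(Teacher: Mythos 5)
Your proposal is correct: the product $\ket{v}\bra{\alpha}\,(I_{n^2}\otimes A_{1,m})\,\ket{\beta}\bra{v}$ does collapse to $\bigl(\sum_{(s,t)\in E}A_{1,m}[s,t]\bigr)\ket{v}\bra{v}$, the case analysis covers every partial product, and the rank-one bound $\sigma_1\leq\lVert\alpha\rVert\,\sigma_1(A_{1,j})\leq\sqrt{\lvert E\rvert}\,\kappa(n)$ for straddling products is right, so all parameters stay polynomial and the reduction is $\mathsf{AC}^0$-computable. The route differs from the paper's in the decomposition: the paper replicates the instance over blocks indexed by $E$ (dimension $n\lvert E\rvert$), conjugates the blocks by transpositions $T_{1,s},T_{1,t}$ so that the wanted entries are moved to position $(1,1)$ of each block, and then sums them via the border matrices $R\otimes I_n$ and $(R\otimes I_n)^{\dagger}$, bounding partial products by submultiplicativity with $\sigma_1(R\otimes I_n)\leq\sqrt{2\lvert E\rvert}$; you instead leave the middle factors untouched as $I_{n^2}\otimes A_j$ and push all selection and summation into the rank-one borders $\ket{v}\bra{\alpha}$ and $\ket{\beta}\bra{v}$, computing the single nonzero singular value of each straddling product exactly. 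Both yield $m+2$ matrices, dimension at most $n^3$, and conditioning blow-up $O(n^2\kappa(n))$, so neither buys better parameters; what your version buys is bookkeeping simplicity, since no permutation conjugation of the $A_j$ is needed at all --- indeed, in the paper's write-up the transpositions are attached to \emph{every} factor $T_{1,t}A_jT_{1,s}$, so the inner permutations do not cancel for $m>1$ (only the outermost factors should carry them, up to an index-order fix), whereas your construction avoids this issue entirely. One small wording fix: the membership claim at the end should refer to the \emph{absolute value} of the entry lying in $[0,b-\epsilon(n)]\cup[b,\widehat{\kappa}(n)]$, since the sum may be negative or complex; this is exactly what the $\mathsf{SUMITMATPROD}$ promise supplies, so nothing else changes.
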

	\begin{proof}
		Consider $\langle A_1,\ldots,A_m,E,b \rangle \in \mathsf{SUMITMATPROD}_{n,m,\kappa,\epsilon^{-1}}$. Let $T_{c,d} \in \widehat{\mat}(n)$ denote the permutation matrix corresponding to interchanging elements $c,d \in [n]$ and leaving all other elements fixed. For $j \in [m]$, let $\widehat{A}_j=\bigoplus\limits_{(s,t) \in E} T_{1,t} A_j T_{1,s} \in \widehat{\mat}(n\lvert E \rvert).$ Let $R \in \widehat{\mat}(\lvert E \rvert)$ be defined such that $R_{r,c}=1$ if $r=c$ or $r=1$, and $R_{r,c}=0$ otherwise; let $\widehat{A}_0=R \otimes I_n$ and $\widehat{A}_{m+1}=\widehat{A}_0^{\dagger}$. We then have $\widehat{A}_{0,m+1}[1,1]=\sum\limits_{(s,t) \in E} A_{1,m}[s,t]$.
		Notice that $\sigma_1(\widehat{A}_0)=\sigma_1(\widehat{A}_{m+1})=\sigma_1(R)\sigma_1(I_n) \leq \sqrt{2\lvert E \rvert}$, which implies $$\sigma_1\left(\widehat{A}_{j_1,j_2}\right)\leq 2\lvert E \rvert \sigma_1\left(A_{\max(j_1,1),\min(j_2,m)}\right) \leq 2\lvert E \rvert \kappa(n) \leq 2 n^2 \kappa(n), \text{ for } 0 \leq j_1 \leq j_2 \leq m+1.$$  
		
		Let $\widehat{s}=\widehat{t}=1$ and $\widehat{b}=b$. Then $\langle \widehat{A}_0,\ldots,\widehat{A}_{m+1},\widehat{s},\widehat{t},\widehat{b} \rangle \in \mathsf{ITMATPROD}_{n^3,m+2,2n^2\kappa(n),\epsilon^{-1}(n)}$ and $\mathsf{SUMITMATPROD}(\langle A_1,\ldots,A_m,E,b \rangle) =\mathsf{ITMATPROD}(\langle \widehat{A}_0,\ldots,\widehat{A}_{m+1},\widehat{s},\widehat{t},\widehat{b} \rangle)$. 
	\end{proof}
	
	We now prove \Cref{thm:bqulComplete} from \Cref{sec:intro:linAlg}, which we restate here for convenience.
	
	\restateBqulComplete*
	\begin{proof}
		By \cite[Theorem 13]{fefferman2018complete}, $poly\text{-conditioned-}\mathsf{MATINV}$ is $\mathsf{BQ_U L}$-complete. By \Cref{lemma:reduction:iterMatProdToMatPow,lemma:reduction:matPowToMatInv,lemma:reduction:matInvToPosMatInv,lemma:reduction:posDetToSumIterMatProd,lemma:reduction:iterMatProdToNonnegIterMatProd,lemma:reduction:nonnegIterMatProdToDet,lemma:reduction:detToPosDet,lemma:reduction:posMatInvToSumIterMatProd,lemma:reduction:sumIterMatProdToIterMatProd}, we have $$\mathsf{MATINV}^{+} \leq_{\mathsf{AC}^0}^m \mathsf{SUMITMATPROD} \leq_{\mathsf{AC}^0}^m \mathsf{ITMATPROD} \leq_{\mathsf{AC}^0}^m \mathsf{MATPOW} \leq_{\mathsf{AC}^0}^m \mathsf{MATINV} \leq_{\mathsf{NC}^1}^m \mathsf{MATINV}^{+}$$ and $$\mathsf{DET}^{+} \leq_{\mathsf{AC}^0}^m \mathsf{SUMITMATPROD} \leq_{\mathsf{AC}^0}^m \mathsf{ITMATPROD} \leq_{\mathsf{AC}^0}^m \mathsf{ITMATPROD}^{\geq 0} \leq_{\mathsf{AC}^0}^m \mathsf{DET} \leq_{\mathsf{NC}^1}^m \mathsf{DET}^{+}.$$ Recall that $\leq_{\mathsf{AC}^0}^m$ or $\leq_{\mathsf{NC}^1}^m$ reducibility implies $\leq_{\mathsf{L}}^m$ reducibility, and $\mathcal{P}  \leq_{\mathsf{L}}^m \mathcal{P}' \Rightarrow poly\text{-conditioned-}\mathcal{P}  \leq_{\mathsf{L}}^m poly\text{-conditioned-}\mathcal{P}'$. Therefore, each such $poly\text{-conditioned-}\mathcal{P}$ is $\mathsf{BQ_U L}$-complete.
	\end{proof}
	
	\section{Fully Logarithmic Approximation Schemes}\label{sec:determinant:approximateCounting}
	
	We next study the class of functions that are well-approximable in quantum logspace, following (essentially) the notation and definitions of \cite{doron2015randomization}. In particular, we work with the general (resp. unitary) quantum Turing machine model, rather than the \textit{equivalent} model of a uniform family of general (resp. unitary) quantum circuits; of course, all results also apply to the quantum circuit model. For simplicity, throughout this section, we fix the alphabet $\Sigma=\{0,1\}$. We say that a function $f:\Sigma^* \rightarrow \mathbb{R}$ is $poly$-bounded if $\lvert f(w) \rvert \leq poly(\lvert w \rvert)$, $\forall w \in \Sigma^*$.
	
	\begin{definition}\label{def:fullyLogApproxScheme}
		We say that a $poly$-bounded $f$ has a \textit{fully logarithmic quantum approximation scheme} $\mathsf{FLQAS}$ if there is a (general) quantum TM $M_f$ that, on input $\langle x,\epsilon,\delta \rangle$, where $x \in \Sigma^*$ and $\epsilon,\delta \in \mathbb{R}_{>0}$, runs in time $poly(\lvert x \rvert, \epsilon^{-1},\log(\delta^{-1}))$ and space $O(\log(\lvert x \rvert)+\log(\epsilon^{-1})+\log(\log(\delta^{-1})))$, and outputs a value $y \in \mathbb{R}$ such that $\Pr[\lvert f(x)-y \rvert \geq \epsilon] \leq \delta$ (to be precise, $M_f$ outputs a string that encodes a dyadic rational number $y$). In other words, with \textit{confidence} at least $1-\delta$, the value $y$ is an \textit{additive} approximation of $f(x)$ with \textit{error} at most $\epsilon$. We analogously say that $f$ has a $\mathsf{FLQ_U AS}$ if $M_f$ is a \textit{unitary} quantum TM, a $\mathsf{FLRAS}$ if $M_f$ is a \textit{randomized} TM, and a $\mathsf{FLAS}$ if $M_f$ is a \textit{deterministic} TM (where, in this last case, we set $\delta=0$ and remove the dependence on $\delta$ from the time and space bounds).
	\end{definition} 
	
	Following the notation established in \Cref{sec:prelim:notation} and \Cref{def:detPromiseProblems:detAndInv,def:detPromiseProblems:powAndItprod,def:polyConditionedDetPromiseProblems}, let $$\mathcal{D}(poly\text{-}matinv)=\bigcup_n \{\langle A,s,t \rangle:A \in \widehat{\mat}(n,n^{-O(1)},1), s,t \in [n] \}.$$ In other words, $\mathcal{D}(poly\text{-}matinv)$ consists of encodings of instances of a variant of $poly\text{-conditioned-}\mathsf{MATINV}$ where we only have a promise involving the singular values (i.e., no restriction on $A^{-1}[s,t]$ involving $b$). We then consider the $poly$-conditioned matrix inversion function $\lvert poly\text{-}matinv(\cdot) \rvert:\mathcal{D}(poly\text{-}matinv)\rightarrow \mathbb{R}_{\geq 0}$, which is given by $\lvert poly\text{-}matinv(\langle A,s,t \rangle)\rvert=\lvert A^{-1}[s,t]\rvert$. 
	
	Similarly, $\mathcal{D}(poly\text{-}itmatprod)$ consists of all $\langle A_1,\ldots,A_m,s,t \rangle$, where $m=poly(n)$, $A_1,\ldots,A_m \in \widehat{\mat}(n)$, $\sigma_1(A_{j_1,j_2}) \leq poly(n)$ for $1 \leq j_1 \leq j_2 \leq m$, and $s,t \in [n]$. We then define the function $\lvert poly\text{-}itmatprod(\cdot) \rvert:\mathcal{D}(poly\text{-}itmatprod) \rightarrow \mathbb{R}_{\geq 0}$ by $\lvert poly\text{-}itmatprod(\langle A_1,\ldots,A_m,s,t \rangle)\rvert=\lvert A_{1,m}[s,t] \rvert$. Lastly, we define $\mathcal{D}(poly\text{-}det)=\bigcup_n \{\langle A \rangle:A \in \widehat{\mat}(n,n^{-O(1)},1)\}$. Note that the promise problem $\mathsf{DET}$, given in \Cref{def:detPromiseProblems:detAndInv}, corresponds to approximating the function $\ln(\lvert poly\text{-}det(\cdot)\rvert):\mathcal{D}(poly\text{-}det) \rightarrow \mathbb{R}_{\leq 0}$, defined in the obvious way.
	
	Note that, following \cite{doron2015randomization}, we have defined fully logarithmic (quantum, randomized, etc.) approximation schemes with respect to \textit{additive} error $\epsilon$; that is to say, we approximate $f(x)$ by a value $y$ such that $\Pr[\lvert f(x)-y \rvert \geq \epsilon] \leq \delta$. We then define a \textit{multiplicative} fully logarithmic (quantum, randomized, etc.) approximation scheme of a function $g:\Sigma^* \rightarrow \mathbb{R}_{\geq 0}$ as an analogous procedure that produces an approximation $z$ such that $\Pr[z \not \in [e^{-\epsilon}g(x),e^{\epsilon} g(x)]] \leq \delta$. Note that here, for convenience, we follow the convention (as used in, for example, \cite{jerrum2004polynomial}) that multiplicative approximations are defined using $e^{\pm \epsilon}$, rather than the more standard (and essentially equivalent) $(1 \pm \epsilon)$. Note that $\ln(\lvert poly\text{-}det(\cdot)\rvert)$ has an (additive) $\mathsf{FLQ_U AS}$ (resp. $\mathsf{FLQAS},\mathsf{FLRAS},\mathsf{FLAS}$) if and only if $\lvert poly\text{-}det(\cdot)\rvert$ has a multiplicative $\mathsf{FLQ_U AS}$ (resp. $\mathsf{FLQAS},\mathsf{FLRAS},\mathsf{FLAS}$); this follows from the fact that $\lvert \ln(\lvert \det(A) \rvert)-y \rvert \geq \epsilon$ if and only if $e^y \not \in [e^{-\epsilon}\lvert \det(A) \rvert,e^{\epsilon}\lvert \det(A) \rvert]$.
	
	The function $\lvert poly\text{-}matinv(\cdot) \rvert$ is known to have a $\mathsf{FLQAS}$ \cite{doron2015randomization}. We improve on this result.
	
	\begin{lemma}\label{lemma:polyCondDetCompleteProblemsHaveFlquas}
		Each of $\lvert poly\text{-}matinv(\cdot) \rvert$, $\lvert poly\text{-}itmatprod(\cdot) \rvert$, and $\ln(\lvert poly\text{-}det(\cdot)\rvert)$ have a $\mathsf{FLQ_U AS}$. Moreover, $\lvert poly\text{-}det(\cdot)\rvert$ has a multiplicative $\mathsf{FLQ_U AS}$.
	\end{lemma}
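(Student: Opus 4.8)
The plan is to prove the claim in two moves. First, exhibit for each of the three functions a fully logarithmic quantum approximation scheme that is permitted to use intermediate measurements (a $\mathsf{FLQAS}$); then upgrade each such scheme to a $\mathsf{FLQ_U AS}$ by applying the measurement-elimination construction underlying \Cref{thm:intro:bquspaceEqualsBqspace}. Recall that this construction turns any general quantum algorithm running in space $s$ and time $t$ into a unitary one running in space $O(s+\log t)$ and time $poly(t\,2^{s})$, with the same output distribution. Applied to a $\mathsf{FLQAS}$ machine $M_f$, for which $s=O(\log|x|+\log\epsilon^{-1}+\log\log\delta^{-1})$ and $t=poly(|x|,\epsilon^{-1},\log\delta^{-1})$, one has $\log t=O(s)$, so the resulting unitary machine still runs in space $O(s)$ and time $poly(|x|,\epsilon^{-1},\log\delta^{-1})$; that is, it is a $\mathsf{FLQ_U AS}$ for the same function. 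Hence it suffices to produce the three $\mathsf{FLQAS}$'s (the one for $\lvert poly\text{-}matinv(\cdot)\rvert$ is already known \cite{doron2015randomization}; we treat all three uniformly).

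Each $\mathsf{FLQAS}$ is built by the standard reduction from estimation to threshold decision, on top of a scaled-up form of the $\mathsf{BQ_U L}$-membership results established above. The input needed is that the parameterized promise problems $\mathsf{MATINV}_{n,\kappa,\epsilon^{-1}}$, $\mathsf{ITMATPROD}_{n,m,\kappa,\epsilon^{-1}}$, and $\mathsf{DET}_{n,\kappa,\epsilon^{-1}}$ are decidable with two-sided error $1/3$ by uniform quantum circuits of space $O(\log(nm\kappa\epsilon^{-1}))$ and time $poly(nm\kappa\epsilon^{-1})$ --- not only in the regime $m,\kappa,\epsilon^{-1}=poly(n)$. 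This follows from \Cref{lemma:itMatProdInBqul}, \Cref{thm:bqulComplete}, and \cite[Theorem 13]{fefferman2018complete}, either by inspecting those constructions or, more cheaply, by a padding argument: direct-summing the input matrix (or matrices) with an identity block to dimension $N=\max(n,m,\kappa,\lceil\epsilon^{-1}\rceil)$ yields a genuinely $poly(N)$-conditioned instance of dimension $N$, to which the corresponding quantum logspace algorithm applies in space $O(\log N)$ and time $poly(N)$. Given this, the $\mathsf{FLQAS}$ for, say, $\ln(\lvert poly\text{-}det(\cdot)\rvert)$ proceeds on input $\langle A,\epsilon,\delta\rangle$ with $A\in\widehat{\mat}(n,n^{-O(1)},1)$ by noting $\ln\lvert\det(A)\rvert\in[-O(n\log n),0]$ (from $\sigma_n(A)\ge n^{-O(1)}$ and $\sigma_1(A)\le 1$) and performing an $O(\log(n/\epsilon))$-step search over this interval, where each step threshold-queries the $\mathsf{DET}$-solver with gap $\Theta(\epsilon)$ --- handling in the usual way the single possible promise violation, which occurs when the true value lies in the gap of the current threshold --- after amplifying each query's failure probability below $\delta/\Theta(\log(n/\epsilon))$ by $O(\log\delta^{-1}+\log\log(n/\epsilon))$ repetitions with a majority vote using an $O(\log\log\delta^{-1})$-bit counter. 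The current search interval, the step counter, and the vote counter together fit in $O(\log|x|+\log\epsilon^{-1}+\log\log\delta^{-1})$ bits, and the total time is $poly(|x|,\epsilon^{-1},\log\delta^{-1})$. The case of $\lvert poly\text{-}itmatprod(\cdot)\rvert$ is identical, using $\lvert A_{1,m}[s,t]\rvert\in[0,poly(n)]$ together with \Cref{lemma:itMatProdInBqul}.

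Two small points remain. The polynomial bounding $\kappa(A)$, resp.\ $\max_{1\le j_1\le j_2\le m}\sigma_1(A_{j_1,j_2})$, is not supplied in the input; this is handled by the routine device of running the procedure with guesses $\kappa=2,4,8,\dots$ and halting once the returned estimates stabilize, which for inputs in the corresponding domain happens within $O(\log|x|)$ doublings and stays inside the resource bounds. And the multiplicative $\mathsf{FLQ_U AS}$ for $\lvert poly\text{-}det(\cdot)\rvert$ is immediate from the additive $\mathsf{FLQ_U AS}$ for $\ln(\lvert poly\text{-}det(\cdot)\rvert)$: run the latter to obtain $y$ with $\Pr[\lvert\ln\lvert\det(A)\rvert-y\rvert\ge\epsilon]\le\delta$, and output a dyadic rational near $z=e^{y}$; the equivalence $\lvert\ln\lvert\det(A)\rvert-y\rvert\ge\epsilon\iff e^{y}\notin[e^{-\epsilon}\lvert\det(A)\rvert,e^{\epsilon}\lvert\det(A)\rvert]$ noted earlier makes this a multiplicative scheme with the same parameters. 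The step I expect to require the most care is confirming that the entire general-model procedure --- in particular the nested amplification and the geometric search over $\kappa$ --- genuinely fits inside the tight $\mathsf{FLQAS}$ space budget \emph{before} measurement elimination is invoked; this is bookkeeping rather than a conceptual obstacle.
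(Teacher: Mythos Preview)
Your approach is sound but takes a more indirect route than the paper's. The paper's proof is essentially three lines: by \cite[Theorem~14]{fefferman2018complete}, $\lvert poly\text{-}matinv(\cdot)\rvert$ already has a $\mathsf{FLQ_U AS}$; since the reductions of \Cref{lemma:reduction:iterMatProdToMatPow,lemma:reduction:matPowToMatInv} are \emph{functional}---the constructed instance has a matrix entry that \emph{equals} the original value up to a known computable scalar---composing them with that scheme immediately yields a $\mathsf{FLQ_U AS}$ for $\lvert poly\text{-}itmatprod(\cdot)\rvert$; and the chain in \Cref{lemma:reduction:detToPosDet,lemma:reduction:posDetToSumIterMatProd,lemma:reduction:sumIterMatProdToIterMatProd} does the same for $\ln\lvert poly\text{-}det(\cdot)\rvert$, whence the multiplicative claim. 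No threshold/binary-search layer, no padding to absorb $\epsilon^{-1}$, no doubling search over $\kappa$, and no appeal to measurement elimination is needed, because the starting point is already unitary and the reductions preserve \emph{values}, not merely threshold decisions.

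Your route---build a $\mathsf{FLQAS}$ by binary search over threshold queries, then upgrade via \Cref{thm:intro:bquspaceEqualsBqspace}---works as well, but one step is passed over too quickly. \Cref{thm:intro:bquspaceEqualsBqspace} is stated and proved for \emph{decision} classes; the assertion that the underlying construction produces a unitary algorithm ``with the same output distribution'' on multi-bit outputs is not what is established there. In your specific setting this is easy to repair (each output bit of your binary-search $\mathsf{FLQAS}$ is, by construction, determined with probability at least $1-\delta/L$, so a bit-by-bit invocation of $\mathsf{BQL}=\mathsf{BQ_U L}$ suffices), but it deserves a sentence. In fact, once you note that the threshold queries are already answered by \emph{unitary} logspace circuits and that the surrounding control flow (repetition-and-majority amplification, then binary search) is deterministic and hence realizable unitarily via compute--copy--uncompute, you obtain a $\mathsf{FLQ_U AS}$ directly, making the detour through $\mathsf{FLQAS}$ plus measurement elimination superfluous.
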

	\begin{proof}
		By \cite[Theorem 14]{fefferman2018complete} (and the discussion following it), $\lvert poly\text{-}matinv(\cdot) \rvert$ has a $\mathsf{FLQ_U AS}$. By \Cref{lemma:reduction:iterMatProdToMatPow,lemma:reduction:matPowToMatInv}, $\lvert poly\text{-}itmatprod(\cdot) \rvert$ has a $\mathsf{FLQ_U AS}$. Finally, by \Cref{lemma:reduction:detToPosDet,lemma:reduction:posDetToSumIterMatProd,lemma:reduction:sumIterMatProdToIterMatProd}, $\ln(\lvert poly\text{-}det(\cdot)\rvert)$ has a $\mathsf{FLQ_U AS}$; this implies $\lvert poly\text{-}det(\cdot)\rvert$ has a multiplicative $\mathsf{FLQ_U AS}$.
	\end{proof}
	
	Doron and Ta-Shma \cite[Theorem 6]{doron2015randomization} showed that, if $\mathsf{BQL}=\mathsf{BPL}$, then every $poly$-bounded function that has a $\mathsf{FLQAS}$ also has a $\mathsf{FLRAS}$ (recall that we use $\mathsf{BQL}$ and $\mathsf{BPL}$ to denote classes of promise problems, which differs from the notation used in \cite{doron2015randomization}). By combining this with the $\mathsf{BQ_U L}$-hardness of the various $poly$-conditioned promise problems (\Cref{thm:bqulComplete}) and our result that $\mathsf{BQ_U L}=\mathsf{BQL}$ (\Cref{thm:bqlEqualsBqulEqualsQmal}), the following proposition is immediate; we note that a partial (weaker) version of this proposition was implicit in \cite{doron2017approximating}.
	
	\begin{proposition}\label{thm:bplEqualsBqlIffApproxSchemesCoincide}
		The following statements are equivalent.
		\begin{enumerate}[(i)]
			\item $\mathsf{BQL}=\mathsf{BPL}$.
			\item Every $poly$-bounded function that has a $\mathsf{FLQAS}$ also has a $\mathsf{FLRAS}$.
			\item Every $poly$-bounded function that has a $\mathsf{FLQ_U AS}$ also has a $\mathsf{FLRAS}$.
			\item $\lvert poly\text{-}matinv(\cdot) \rvert$ has a $\mathsf{FLRAS}$.
			\item $\lvert poly\text{-}itmatprod(\cdot) \rvert$ has a $\mathsf{FLRAS}$.
			\item $\ln(\lvert poly\text{-}det(\cdot)\rvert)$ has a $\mathsf{FLRAS}$.
			\item $\lvert poly\text{-}det(\cdot)\rvert$ has a multiplicative $\mathsf{FLRAS}$.
		\end{enumerate}
	\end{proposition}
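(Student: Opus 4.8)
The plan is to verify the implications $(i)\Rightarrow(ii)\Rightarrow(iii)$, then $(iii)\Rightarrow(iv)$, $(iii)\Rightarrow(v)$, $(iii)\Rightarrow(vi)$, together with $(vi)\Leftrightarrow(vii)$ and the ``hardness'' implications $(iv)\Rightarrow(i)$, $(v)\Rightarrow(i)$, $(vi)\Rightarrow(i)$; these together make all seven statements equivalent. Four of the links come essentially for free. The implication $(i)\Rightarrow(ii)$ is exactly \cite[Theorem 6]{doron2015randomization}. The implication $(ii)\Rightarrow(iii)$ is trivial, since a unitary quantum TM is a special case of a general quantum TM, so any $\mathsf{FLQ_U AS}$ is a fortiori a $\mathsf{FLQAS}$. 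The implications $(iii)\Rightarrow(iv)$, $(iii)\Rightarrow(v)$, and $(iii)\Rightarrow(vi)$ are immediate from \Cref{lemma:polyCondDetCompleteProblemsHaveFlquas}, which provides a $\mathsf{FLQ_U AS}$ for each of $\lvert poly\text{-}matinv(\cdot)\rvert$, $\lvert poly\text{-}itmatprod(\cdot)\rvert$, and $\ln(\lvert poly\text{-}det(\cdot)\rvert)$. And $(vi)\Leftrightarrow(vii)$ is the additive-versus-multiplicative equivalence recorded just before \Cref{lemma:polyCondDetCompleteProblemsHaveFlquas}, whose justification ($\lvert\ln(\lvert\det(A)\rvert)-y\rvert\geq\epsilon$ iff $e^y\notin[e^{-\epsilon}\lvert\det(A)\rvert,\,e^{\epsilon}\lvert\det(A)\rvert]$) makes no use of the type of the underlying machine.

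The real content is the hardness direction. I would prove $(iv)\Rightarrow(i)$ carefully; $(v)\Rightarrow(i)$ and $(vi)\Rightarrow(i)$ follow from the same argument with $poly\text{-conditioned-}\mathsf{MATINV}$ replaced by $poly\text{-conditioned-}\mathsf{ITMATPROD}$ for $(v)$ and by $poly\text{-conditioned-}\mathsf{DET}$ for $(vi)$ (in each case invoking the relevant $\mathsf{BQ_U L}$-hardness from \Cref{thm:bqulComplete} and \Cref{prop:introPolyCondDetIsBQULcomplete}, and for $(vi)$ recalling that $\mathsf{DET}$ was defined to correspond to an additive approximation of $\ln(\lvert\det(A)\rvert)$). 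Since $\mathsf{BPL}\subseteq\mathsf{BQL}$ always holds, it suffices to show $\mathsf{BQL}\subseteq\mathsf{BPL}$ under hypothesis $(iv)$. Fix $\mathcal{P}\in\mathsf{BQL}=\mathsf{BQ_U L}$ (\Cref{thm:bqlEqualsBqulEqualsQmal}). By the $\mathsf{BQ_U L}$-hardness of $poly\text{-conditioned-}\mathsf{MATINV}$ (\Cref{thm:bqulComplete}), a deterministic logspace reduction maps an input $w$ of length $n$ to an instance $\langle A,s,t,b\rangle$ with $A\in\widehat{\mat}(N,1/\kappa(N),1)$, $N,\kappa(N),\epsilon(N)^{-1}=poly(n)$, $\lvert A^{-1}[s,t]\rvert\in[0,b-\epsilon(N)]\cup[b,\kappa(N)]$, and $\mathcal{P}(w)=1$ iff $\lvert A^{-1}[s,t]\rvert\geq b$; in particular $\langle A,s,t\rangle\in\mathcal{D}(poly\text{-}matinv)$. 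I would then invoke the hypothesized $\mathsf{FLRAS}$ for $\lvert poly\text{-}matinv(\cdot)\rvert$ on $\langle\langle A,s,t\rangle,\epsilon(N)/4,1/3\rangle$ to obtain a value $y$ with $\Pr[\lvert y-\lvert A^{-1}[s,t]\rvert\rvert\geq\epsilon(N)/4]\leq 1/3$, and accept iff $y\geq b-\epsilon(N)/2$. The promise gap makes this correct with probability $\geq 2/3$, and because $\epsilon(N)^{-1}=poly(n)$ while $\delta^{-1}=3$ is a constant, \Cref{def:fullyLogApproxScheme} guarantees the scheme runs in $poly(n)$ time and $O(\log n)$ space; composing it with the logspace reduction yields a $\mathsf{BPL}$ algorithm for $\mathcal{P}$, so $\mathsf{BQL}\subseteq\mathsf{BPL}$.

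I expect the main obstacle to be conceptual rather than computational: correctly bridging the ``approximation-scheme'' world (approximating a real-valued function to a prescribed additive tolerance with prescribed confidence) and the ``promise-problem'' world (deciding a gapped threshold). Two points make the bridge clean. First, the $poly$-conditioned promise problems were deliberately set up (see the discussion after \Cref{def:detPromiseProblems:powAndItprod}) so that a $1/poly(n)$-additive estimate of the relevant entry or trace already decides them --- which is precisely what a $\mathsf{FLRAS}$ delivers in polynomial time. Second, running a $\mathsf{FLRAS}$ with $\epsilon^{-1}=poly(n)$ but \emph{constant} failure probability $\delta$ still costs only $O(\log n)$ space, since the space bound in \Cref{def:fullyLogApproxScheme} depends on $\delta$ only through $\log\log(\delta^{-1})$, and there is no need to drive $\delta$ below a constant because $\mathsf{BPL}$ already tolerates two-sided constant error. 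A final routine check is that the reductions behind \Cref{thm:bqulComplete} always output genuinely well-conditioned instances, so the $\mathsf{FLRAS}$ is only ever queried on inputs lying in the appropriate domain $\mathcal{D}(\cdot)$; on inputs violating the promise, no guarantee is needed.
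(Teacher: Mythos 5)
Your proposal is correct and follows essentially the same route as the paper: (i)$\Rightarrow$(ii) via Doron--Ta-Shma, (ii)$\Rightarrow$(iii) trivially, (iii)$\Rightarrow$(iv),(v),(vi) via \Cref{lemma:polyCondDetCompleteProblemsHaveFlquas}, (vi)$\Leftrightarrow$(vii) via the additive/multiplicative remark, and the hardness directions by running the hypothesized $\mathsf{FLRAS}$ with $1/poly$ additive error and constant confidence on the $\mathsf{BQ_U L}(=\mathsf{BQL})$-hard well-conditioned instances produced by \Cref{thm:bqulComplete} and \Cref{thm:bqlEqualsBqulEqualsQmal}. The paper leaves these steps as "immediate," and your thresholding argument for (iv)$\Rightarrow$(i) (accept iff $y \geq b-\epsilon(N)/2$) is exactly the intended filling-in of that sketch.
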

	
	\begin{remark}
		The preceding proposition suggests that $\lvert poly\text{-}det(\cdot)\rvert$ \textit{does not} have a multiplicative $\mathsf{FLRAS}$, as this would imply the \textit{seemingly unlikely} statement $\mathsf{BQL}=\mathsf{BPL}$. It is natural to compare this with the result of Jerrum, Sinclair, and Vigoda \cite{jerrum2004polynomial} which shows the existence of a multiplicative FPRAS (fully \textit{polynomial} randomized approximation scheme) for the \textit{permanent} of a nonnegative integer matrix.
	\end{remark}
	
	\section{Well-Conditioned Singular}\label{sec:singular}
	
	The class $\mathsf{C_{=}L}$ has a collection of natural complete problems, given by the ``verification" versions of the standard $\mathsf{DET}^{*}$-complete problems \cite{santha1998verifying}. We now study the well-conditioned versions of these problems.
	
	\begin{definition}\label{def:singularPromiseProblems}
		Consider functions $m:\mathbb{N} \rightarrow \mathbb{N}$, $\kappa:\mathbb{N} \rightarrow \mathbb{R}_{\geq 1}$, and $\epsilon:\mathbb{N} \rightarrow \mathbb{R}_{>0}$. 
		\newline$\mathsf{SINGULAR}_{n,\epsilon^{-1}}$
		\newline\hspace*{5pt}\textit{Input}: $A \in \widehat{\herm}(n)$
		\newline\hspace*{5pt}\textit{Promise}: $\sigma_1(A) \leq 1$, $\sigma_n(A) \in \{0\} \cup [\epsilon(n),1]$
		\newline\hspace*{5pt}\textit{Output}: $1$ if $\sigma_n(A)=0$, $0$ otherwise
		\newline $\mathsf{vMATINV}_{n,\kappa,\epsilon^{-1}}$
		\newline\hspace*{5pt}\textit{Input}: $A \in \widehat{\mat}(n)$, $s,t \in [n]$, $b \in \mathbb{Q}[i]_n$
		\newline\hspace*{5pt}\textit{Promise}: $A \in \widehat{\mat}(n,1/\kappa(n),1)$, $\lvert A^{-1}[s,t]-b \rvert \in \{0\} \cup [\epsilon(n),2\kappa(n)]$
		\newline\hspace*{5pt}\textit{Output}: $1$ if $ A^{-1}[s,t] = b$, $0$ otherwise	
		\newline $\mathsf{vMATPOW}_{n,m,\kappa,\epsilon^{-1}}$
		\newline\hspace*{5pt}\textit{Input}: $A \in \widehat{\mat}(n)$, $s,t \in [n]$, $b \in \mathbb{Q}[i]_n$
		\newline\hspace*{5pt}\textit{Promise}: $\sigma_1(A^j) \leq \kappa(n) \, \forall j \in [m]$, $\lvert A^m[s,t]-b \rvert \in \{0\} \cup [\epsilon(n),2\kappa(n)]$
		\newline\hspace*{5pt}\textit{Output}: $1$ if $ A^m[s,t]= b$, $0$ otherwise
		\newline $\mathsf{vITMATPROD}_{n,m,\kappa,\epsilon^{-1}}$
		\newline\hspace*{5pt}\textit{Input}: $A_1,\ldots,A_m \in \widehat{\mat}(n)$, $s,t \in [n]$, $b \in \mathbb{Q}[i]_n$ 
		\newline\hspace*{5pt}\textit{Promise}: $\sigma_1(A_{j_1,j_2}) \leq \kappa(n) \, \forall 1 \leq j_1 \leq j_2 \leq m$, $\left\lvert A_{1,m}[s,t] - b \right\rvert \in \{0\} \cup [\epsilon(n),2\kappa(n)]$
		\newline\hspace*{5pt}\textit{Output}: $1$ if $A_{1,m}[s,t] = b$, $0$ otherwise
	\end{definition} 

	We begin by exhibiting reductions between the above problems; in subsequent sections, we will use these reductions to prove new properties of quantum logspace.
	
	\begin{lemma}\label{lemma:reduction:vMatInvToSingular}
		$\mathsf{vMATINV}\leq_{\mathsf{AC}^0}^m \mathsf{SINGULAR}$.
	\end{lemma}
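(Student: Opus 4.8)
The plan is to mimic the "standard" reduction from verifying a matrix-inverse entry to testing singularity, but tracking the condition number carefully so that the well-conditioned promise is preserved. Given an instance $\langle A,s,t,b\rangle$ of $\mathsf{vMATINV}_{n,\kappa,\epsilon^{-1}}$, with $A\in\widehat{\mat}(n,1/\kappa(n),1)$, the quantity $A^{-1}[s,t]$ equals $\bra{s}A^{-1}\ket{t}$. The key identity is that $A^{-1}[s,t]=b$ if and only if the matrix obtained by a rank-one modification of $A$ becomes singular: precisely, consider $C = A + (\beta - b)\,A\ket{t}\bra{s}A$ or, more cleanly, use the Schur-complement / bordered-matrix trick. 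I would set $M = \begin{pmatrix} A & \ket{t} \\ \bra{s} & b \end{pmatrix}\in\widehat{\mat}(n+1)$; by the Schur complement formula, $\det(M) = \det(A)\,(b - \bra{s}A^{-1}\ket{t}) = \det(A)(b - A^{-1}[s,t])$, so $M$ is singular exactly when $A^{-1}[s,t]=b$ (using $\det(A)\neq 0$). Thus $\mathsf{vMATINV}(\langle A,s,t,b\rangle) = \mathsf{SINGULAR}$ applied to (a Hermitian dilation of a suitably normalized) $M$.

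Two issues must be handled. First, $\mathsf{SINGULAR}$ is defined on \emph{Hermitian} matrices, while $M$ need not be Hermitian; I would pass to the standard Hermitian dilation $\widetilde{M} = \begin{pmatrix} 0 & M \\ M^{\dagger} & 0\end{pmatrix}\in\widehat{\herm}(2n+2)$, whose singular values are exactly those of $M$ each with multiplicity two, so $\widetilde M$ is singular iff $M$ is. Second, and this is where the real work lies, I must show the normalized $\widetilde M$ satisfies the $\mathsf{SINGULAR}$ promise $\sigma_1\le 1$, $\sigma_{\min}\in\{0\}\cup[\epsilon'(n),1]$ for some $\epsilon'(n) = 1/poly$ when $\kappa,\epsilon^{-1}=poly$. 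For the upper bound, $\sigma_1(M)\le \sigma_1(A) + \sigma_1(\ket t\bra s \text{ block}) + |b| \le 1 + 1 + 2\kappa(n)$ (using the promise bound $|b|\le$ something like $2\kappa(n)$, available since $|A^{-1}[s,t]|\le\sigma_1(A^{-1})\le\kappa(n)$ and $|A^{-1}[s,t]-b|\le 2\kappa(n)$), so dividing by $\lceil 3+2\kappa(n)\rceil$ gives $\sigma_1(\widehat M)\le 1$. For the lower bound on $\sigma_{\min}(M)$ in the "no" case (when $A^{-1}[s,t]\neq b$, so $|A^{-1}[s,t]-b|\ge\epsilon(n)$): I would use $\sigma_{\min}(M) = 1/\sigma_1(M^{-1})$ and bound $\sigma_1(M^{-1})$ via the block-inverse formula — the Schur complement of $A$ in $M$ is the scalar $b - A^{-1}[s,t]$, of magnitude $\ge\epsilon(n)$, and $\sigma_1(A^{-1})\le\kappa(n)$, so each block of $M^{-1}$ has operator norm at most $poly(\kappa(n),\epsilon^{-1}(n))$, giving $\sigma_{\min}(M)\ge 1/poly$. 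After the normalization, $\sigma_{\min}(\widehat M)\ge 1/poly$, which is the required $\epsilon'(n)$.

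I expect the main obstacle to be the quantitative bound on $\sigma_{\min}(M)$ (equivalently $\sigma_1(M^{-1})$) in the non-singular case: one must carefully invert the $2\times 2$ block structure, verify that the Schur complement is well-controlled from below by $\epsilon(n)$ and that $A^{-1}$ contributes at most $\kappa(n)$, and then conclude that all four blocks of $M^{-1}$ have small norm — a routine but slightly delicate computation. Once that bound is in hand, setting $\widehat s = $ some fixed index, choosing the normalization constant $\lceil 3+2\kappa(n)\rceil$, and checking the parameter-polynomial relations $m' = 2n+2$ and $\epsilon'^{-1}(m') = poly(n,\kappa(n),\epsilon^{-1}(n))$ shows that the map is an $\leq_{\mathsf{AC}^0}^m$ reduction, since forming $\widetilde M$, dividing by the constant, and writing down the output index are all computable in uniform $\mathsf{AC}^0$ given that $\kappa,\epsilon^{-1}$ are $\mathsf{AC}^0$-computable to $O(\log n)$ bits.
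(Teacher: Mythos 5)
Your proposal is correct, and the overall strategy matches the paper's: build an $(n{+}1)\times(n{+}1)$ matrix whose determinant is proportional to $b-A^{-1}[s,t]$, pass to the Hermitian dilation $\bigl(\begin{smallmatrix}0 & M\\ M^{\dagger} & 0\end{smallmatrix}\bigr)$, and normalize. The two implementations differ in both key steps, though. For the construction, you use the bordered matrix $M=\bigl(\begin{smallmatrix}A & \ket{t}\\ \bra{s} & b\end{smallmatrix}\bigr)$ and the Schur-complement determinant identity $\det(M)=\det(A)\,(b-A^{-1}[s,t])$, whereas the paper forms a block-diagonal matrix $\widehat{B}=(2\lceil\kappa\rceil A)\oplus(1-\tfrac{b}{2\lceil\kappa\rceil})^{-1}I_1$ and applies a rank-one update $\widehat{C}=\widehat{B}-vu^{\dagger}$ with $u=\ket{s}+\ket{n+1}$, $v=\ket{t}+\ket{n+1}$, invoking the matrix determinant lemma to get $\det(\widehat{C})=\tfrac{b-A^{-1}[s,t]}{2\lceil\kappa\rceil}\det(\widehat{B})$. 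For the crucial lower bound on $\sigma_{\min}$ in the non-singular case, you invert the $2\times2$ block structure directly and bound each block of $M^{-1}$ by $poly(\kappa,\epsilon^{-1})$ (the Schur complement is a scalar of magnitude at least $\epsilon(n)$, and $\sigma_1(A^{-1})\le\kappa(n)$), which is a clean and self-contained computation; the paper instead avoids inverting its matrix and bounds $\sigma_{n+1}(\widehat{C})=\lvert\det(\widehat{C})\rvert/\prod_{j\le n}\sigma_j(\widehat{C})$ from below using Weyl's interlacing inequalities to control $\sigma_j(\widehat{C})$ in terms of $\sigma_{j-1}(\widehat{B})$. Both routes yield an inverse-polynomial gap; your block-inverse argument is arguably more direct, while the paper's determinant-ratio trick sidesteps any explicit inverse formula. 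One minor arithmetic point: your stated bound $\lvert b\rvert\le 2\kappa(n)$ should be $\lvert b\rvert\le 3\kappa(n)$ (combining $\lvert A^{-1}[s,t]\rvert\le\kappa(n)$ with $\lvert A^{-1}[s,t]-b\rvert\le 2\kappa(n)$), but this only changes the normalization constant and does not affect correctness.
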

	\begin{proof}
		Consider $\langle A,s,t,b \rangle \in \mathsf{vMATINV}_{n,\kappa,\epsilon^{-1}}$. We define $\widehat{B}=(2\lceil \kappa(n) \rceil A) \oplus \left(1-\frac{b}{2\lceil \kappa(n) \rceil}\right)^{-1}I_1 \in \widehat{\mat}(n+1)$, $u=\ket{s}+\ket{n+1}$, $v=\ket{t}+\ket{n+1}$, and $\widehat{C}=\widehat{B}-v u^{\dagger} \in \widehat{\mat}(n+1)$. By the matrix determinant lemma, $$\det(\widehat{C})=(1-u \widehat{B}^{-1} v)\det(\widehat{B})=\left(1-\left(\frac{A^{-1}[s,t]}{2\lceil \kappa(n) \rceil}+\left(1-\frac{b}{2\lceil \kappa(n) \rceil}\right)\right)\right)\det(\widehat{B})=\frac{b-A^{-1}[s,t]}{2\lceil \kappa(n) \rceil}\det(\widehat{B}).$$
		
		If $A^{-1}[s,t]=b$, then $\det(\widehat{C})=0$, which implies $\sigma_{n+1}(\widehat{C})=0$. Next, suppose $\lvert A^{-1}[s,t]-b \rvert \geq \epsilon(n)$, then $\lvert \det(\widehat{C}) \rvert \geq \frac{\epsilon(n)}{2\lceil \kappa(n) \rceil} \lvert \det(\widehat{B}) \rvert$. By the Weyl inequalities, $\sigma_1(\widehat{C})\leq \sigma_1(\widehat{B})+\sigma_1(-v u^{\dagger})=\sigma_1(\widehat{B})+1$ and, for $j \in [n+1]$, we have $\sigma_j(\widehat{C}) \leq \sigma_{j-1}(\widehat{B})+\sigma_2(-v u^{\dagger})=\sigma_{j-1}(\widehat{B})$. Moreover, $\sigma_1(\widehat{B})=2\lceil \kappa(n) \rceil \sigma_1(A)\leq 2\lceil \kappa(n) \rceil$, $\sigma_n(\widehat{B})=2\lceil \kappa(n) \rceil \sigma_n(A) \geq 2$, and $\sigma_{n+1}(\widehat{B})=\left\lvert1-\frac{b}{2\lceil \kappa(n) \rceil}\right\rvert^{-1} \geq \frac{2}{\sqrt{5}}$. Therefore, $$\sigma_{n+1}(\widehat{C})=\frac{\lvert \det(\widehat{C}) \rvert}{\sigma_1(\widehat{C}) \prod\limits_{j=2}^n \sigma_j(\widehat{C})} \geq \frac{\frac{\epsilon(n)}{2\lceil \kappa(n) \rceil} \lvert \det(\widehat{B}) \rvert}{(\sigma_1(\widehat{B})+1) \prod\limits_{j=1}^{n-1} \sigma_j(\widehat{B})}=\frac{\epsilon(n)\sigma_n(\widehat{B})\sigma_{n+1}(\widehat{B})}{2\lceil \kappa(n) \rceil (\sigma_1(\widehat{B})+1)} \geq \frac{2\epsilon(n)}{\sqrt{5} \lceil \kappa(n) \rceil (2\lceil \kappa(n) \rceil+1)}.$$  
		
		Let $\widehat{d}=\frac{1}{2\lceil \kappa(n) \rceil+1}$ and $\widehat{H}=\widehat{d}\begin{pmatrix}
		0_{n+1} & \widehat{C} \\
		\widehat{C}^{\dagger} & 0_{n+1}
		\end{pmatrix}\in \widehat{\herm}(2n+2)$. Notice that $\widehat{H}$ has eigenvalues $\{\pm \widehat{d}\sigma_1(\widehat{C}),\ldots,\pm \widehat{d}\sigma_{n+1}(\widehat{C})\}$. This implies $\sigma_1(\widehat{H})=\widehat{d}\sigma_1(\widehat{C}) \leq 1$ and $\sigma_{2n+2}(\widehat{H})=\widehat{d}\sigma_{n+1}(\widehat{C}) \in \{0\} \cup \left[\frac{2\epsilon(n)}{\sqrt{5} \lceil \kappa(n) \rceil (2\lceil \kappa(n) \rceil+1)^2},1\right]$. Moreover, $\sigma_{2n+2}(\widehat{H})=0 \Leftrightarrow A^{-1}[s,t]=b$. Therefore, $\mathsf{vMATINV}(\langle A,s,t,b \rangle)=\mathsf{SINGULAR}(\langle \widehat{H} \rangle)$ and $\langle \widehat{H} \rangle \in \mathsf{SINGULAR}_{2n+2,(2\epsilon(n))^{-1}\sqrt{5} \lceil \kappa(n) \rceil (2\lceil \kappa(n) \rceil+1)^2}$.
	\end{proof}
	
	The following lemmas have proofs analogous to those of \Cref{lemma:reduction:iterMatProdToMatPow} and \Cref{lemma:reduction:matPowToMatInv}, respectively.
	
	\begin{lemma}\label{lemma:reduction:vIterMatProdToVMatPow}
		$\mathsf{vITMATPROD} \leq_{\mathsf{AC}^0}^m \mathsf{vMATPOW}$.
	\end{lemma}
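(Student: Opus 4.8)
The plan is to recycle, essentially verbatim, the construction in the proof of \Cref{lemma:reduction:iterMatProdToMatPow}, now carrying the extra input $b$ along unchanged. Given an instance $\langle A_1,\ldots,A_m,s,t,b\rangle \in \mathsf{vITMATPROD}_{n,m,\kappa,\epsilon^{-1}}$, I would form the same companion-type block matrix $\widehat{A}=\sum_{r=1}^m F_{m+1,r,r+1}\otimes A_r \in \widehat{\mat}(nm+n)$, whose only nonzero blocks are the superdiagonal blocks $A_1,\ldots,A_m$. As in that lemma, for $j\in[m]$ one has $\widehat{A}^{\,j}=\sum_{r=1}^{m+1-j}F_{m+1,r,r+j}\otimes A_{r,r+j-1}$, so that setting $\widehat{s}=s$, $\widehat{t}=nm+t$ yields $\widehat{A}^{\,m}[\widehat{s},\widehat{t}]=A_{1,m}[s,t]$.

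Next I would set $\widehat{b}=b$ and observe that $\lvert \widehat{A}^{\,m}[\widehat{s},\widehat{t}]-\widehat{b}\rvert=\lvert A_{1,m}[s,t]-b\rvert$, so the output value is preserved ($\mathsf{vITMATPROD}(\langle A_1,\ldots,A_m,s,t,b\rangle)=\mathsf{vMATPOW}(\langle \widehat{A},\widehat{s},\widehat{t},\widehat{b}\rangle)$) and the ``gap'' part of the promise $\lvert A_{1,m}[s,t]-b\rvert\in\{0\}\cup[\epsilon(n),2\kappa(n)]$ transfers unchanged. The singular-value bound is handled exactly as before: because the nonzero blocks of $\widehat{A}^{\,j}$ occupy distinct block-rows and block-columns, $\widehat{A}^{\,j}$ is (block-)permutation equivalent to $\bigoplus_{r=1}^{m+1-j}A_{r,r+j-1}$, whence $\sigma_1(\widehat{A}^{\,j})=\max_r\sigma_1(A_{r,r+j-1})\le\kappa(n)$ for all $j\in[m]$. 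Thus the new condition parameter is still $\kappa(n)$, which is exactly what is needed for the transferred bound $2\kappa(n)$ in the $\mathsf{vMATPOW}$ promise to be legitimate; and $\widehat{b}=b\in\mathbb{Q}[i]_n\subseteq\mathbb{Q}[i]_{nm+n}$ automatically. Finally, $\widehat{A}$, $\widehat{s}$, $\widehat{t}$, $\widehat{b}$ are computable from the input in uniform $\mathsf{AC}^0$, and the size $nm+n$, the exponent $m$, the condition bound $\kappa$, and the gap $\epsilon$ are all polynomial transformations of the original parameters, so this is a valid $\leq_{\mathsf{AC}^0}^m$ reduction giving $\langle \widehat{A},\widehat{s},\widehat{t},\widehat{b}\rangle\in\mathsf{vMATPOW}_{nm+n,m,\kappa(n),\epsilon^{-1}(n)}$.

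There is no real obstacle here; the only point that needs a moment's care is the parameter bookkeeping — specifically, confirming that the $2\kappa(n)$ upper bound in the $\mathsf{vMATPOW}$ promise survives because the reduction does not inflate the condition parameter, and that $b$ lies in the appropriate Gaussian-rational set for the (larger) output dimension. Everything else is the same direct-sum argument used for \Cref{lemma:reduction:iterMatProdToMatPow}.
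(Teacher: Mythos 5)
Your proposal is correct and matches the paper's intent exactly: the paper itself gives no separate argument for this lemma, stating only that its proof is analogous to that of \Cref{lemma:reduction:iterMatProdToMatPow}, which is precisely the block-companion construction you carry out (with $b$ passed through unchanged). The parameter bookkeeping you note — the preserved condition bound $\kappa(n)$ justifying the $2\kappa(n)$ promise window, and $b\in\mathbb{Q}[i]_n\subseteq\mathbb{Q}[i]_{nm+n}$ — is exactly the routine verification the paper leaves implicit.
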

	
	\begin{lemma}\label{lemma:reduction:vMatPowToVMatInv}
		$\mathsf{vMATPOW} \leq_{\mathsf{AC}^0}^m \mathsf{vMATINV}$.
	\end{lemma}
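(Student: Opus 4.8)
The plan is to reuse, essentially verbatim, the construction in the proof of \Cref{lemma:reduction:matPowToMatInv}, adapted from the ``threshold'' setting (is a designated entry at least $b$?) to the ``verification'' setting (does a designated entry equal $b$?). Given $\langle A,s,t,b\rangle \in \mathsf{vMATPOW}_{n,m,\kappa,\epsilon^{-1}}$, form $Y = G_1\otimes A$ and $Z = I_{nm+n}-Y\in\widehat{\mat}(nm+n)$ exactly as in that proof, where $G_j=\sum_{r=1}^{m+1-j}F_{m+1,r,r+j}$; then $Z^{-1}=\sum_{j=0}^m G_j\otimes A^j$, so $Z^{-1}[\widehat s,\widehat t]=A^m[s,t]$ with $\widehat s=s$, $\widehat t=nm+t$. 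Setting $c=\lceil 1+\kappa(n)\rceil$, $\widehat Z=c^{-1}Z$, and $\widehat b=cb$, we have $\widehat Z^{-1}[\widehat s,\widehat t]=c\,A^m[s,t]$, so $\widehat Z^{-1}[\widehat s,\widehat t]=\widehat b \Leftrightarrow A^m[s,t]=b$; hence $\mathsf{vMATPOW}(\langle A,s,t,b\rangle)=\mathsf{vMATINV}(\langle\widehat Z,\widehat s,\widehat t,\widehat b\rangle)$. The map only permutes and rescales entries and tensors with a fixed combinatorial matrix, so it is uniform-$\mathsf{AC}^0$-computable, and $\widehat b\in\mathbb{Q}[i]_{nm+n}$ since $b\in\mathbb{Q}[i]_n$ and $c=poly(n)$.

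It then remains to check the $\mathsf{vMATINV}$ promise with polynomially related parameters. The singular-value estimates are identical to those in \Cref{lemma:reduction:matPowToMatInv}: $\sigma_1(Z)\le 1+\sigma_1(A)\le 1+\kappa(n)$ and $\sigma_1(Z^{-1})\le 1+\sum_{j=1}^m\sigma_1(A^j)\le 1+m\kappa(n)$, so $\sigma_1(\widehat Z)\le(1+\kappa(n))/c\le 1$ and $\sigma_{nm+n}(\widehat Z)=(c\,\sigma_1(Z^{-1}))^{-1}\ge 1/\widehat\kappa$ with $\widehat\kappa=\lceil 1+\kappa(n)\rceil(1+m\kappa(n))$, i.e.\ $\widehat Z\in\widehat{\mat}(nm+n,1/\widehat\kappa,1)$. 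For the separation, $\lvert\widehat Z^{-1}[\widehat s,\widehat t]-\widehat b\rvert=c\,\lvert A^m[s,t]-b\rvert\in\{0\}\cup[c\,\epsilon(n),2c\,\kappa(n)]\subseteq\{0\}\cup[\epsilon(n),2\widehat\kappa]$, using $c\ge 1$ and $c\,\kappa(n)\le\widehat\kappa$. Therefore $\langle\widehat Z,\widehat s,\widehat t,\widehat b\rangle\in\mathsf{vMATINV}_{nm+n,\,\lceil 1+\kappa(n)\rceil(1+m\kappa(n)),\,\epsilon^{-1}(n)}$, which gives the claimed reduction.

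I do not expect a substantive obstacle. The one point that requires slightly more care than in the threshold version is that the rescaling factor $c$ must be applied to $A^m[s,t]$ and to the target $b$ simultaneously, so that the relation $A^m[s,t]=b$ is preserved; one then checks that the inflated separation $c\,\lvert A^m[s,t]-b\rvert$ still lies inside the window $[\epsilon',2\widehat\kappa]$ demanded by the $\mathsf{vMATINV}$ promise, where the upper endpoint $2\widehat\kappa$ (rather than the $\kappa(n)$ appearing in the non-verification $\mathsf{MATINV}$) is exactly what makes room for this inflation. Everything else is copied from the proof of \Cref{lemma:reduction:matPowToMatInv}.
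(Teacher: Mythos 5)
Your proposal is correct and is exactly the argument the paper intends: the paper proves \Cref{lemma:reduction:vMatPowToVMatInv} by declaring it ``analogous to'' \Cref{lemma:reduction:matPowToMatInv}, and your write-up is precisely that adaptation, with the right extra check that the rescaling by $c=\lceil 1+\kappa(n)\rceil$ is applied to both $A^m[s,t]$ and $b$ so equality is preserved and the inflated gap $c\lvert A^m[s,t]-b\rvert$ still fits in the window $[\epsilon(n),2\widehat{\kappa}]$.
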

	
	\subsection{RQSPACE vs. R\texorpdfstring{Q\textsubscript{U}}{QU}SPACE vs. R\texorpdfstring{Q\textsubscript{U}}{QU}MASPACE }\label{sec:singular:rqlQma}
	
	In this section, we will prove the following relationships between the various one-sided bounded-error space-bounded quantum complexity classes.
	
	\begin{theorem}\label{thm:rquspaceVsRqspaceVsRqmaspaceEtc}
		For any space-constructible function $s: \mathbb{N} \rightarrow \mathbb{N}$, where $s(n)=\Omega(\log n)$, we have $$\mathsf{RQ_UMASPACE}(s)=\mathsf{RQ_U SPACE}(s) \subseteq \mathsf{RQSPACE}(s) \subseteq \mathsf{coQ_U MASPACE_1}(s).$$
	\end{theorem}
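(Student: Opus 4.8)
The plan is to prove everything first at the logarithmic-space level and then lift to an arbitrary space-constructible $s(n) = \Omega(\log n)$ by a standard padding argument (this is where space-constructibility is used). At logspace the statement reads $\mathsf{RQ_UL} = \mathsf{RQ_UMAL} \subseteq \mathsf{RQL} \subseteq \mathsf{coQ_UMASPACE_1}(\log n)$, where I write $\mathsf{RQ_UL} = \mathsf{RQ_USPACE}(\log n)$ etc. Two of the needed inclusions are immediate: $\mathsf{RQ_UL} \subseteq \mathsf{RQ_UMAL}$ because a unitary algorithm is a Merlin--Arthur protocol with an empty proof, and $\mathsf{RQ_UL} \subseteq \mathsf{RQL}$ because a unitary circuit is a special case of a general quantum circuit and this trivially preserves perfect soundness. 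So the work is in $\mathsf{RQ_UMAL} \subseteq \mathsf{RQ_UL}$ and in $\mathsf{RQL} \subseteq \mathsf{coQ_UMASPACE_1}(\log n)$.

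For $\mathsf{RQ_UMAL} \subseteq \mathsf{RQ_UL}$ I would follow the analysis behind \Cref{lemma:qmalInQumalBqul}: for a unitary verifier $V_w$ on $m_w + h_w = O(\log n)$ qubits, the maximum acceptance probability over proofs is $\lambda_1(M_w)$, where $M_w = (I \otimes \bra{0^{h_w}})V_w^{\dagger}(\ket{1}\bra{1}\otimes I)V_w(I \otimes \ket{0^{h_w}}) \in \widehat{\pos}(2^{m_w})$. Since $M_w \succeq 0$, perfect soundness is equivalent to $M_w = 0$, hence to $\tr(M_w) = 0$. I would build a witness-free unitary circuit that, using $m_w$ extra ancilla qubits and only Hadamard/CNOT gates, feeds the maximally mixed state $2^{-m_w}I$ into the proof register of $V_w$; its acceptance probability is $\tr(M_w)/2^{m_w}$, which is $0$ in the NO case and at least $\lambda_1(M_w)/2^{m_w} \geq 1/\mathrm{poly}(n)$ in the YES case. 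I would then boost this $1/\mathrm{poly}(n)$ gap to $1/2$ using fixed-point ($\pi/3$-style) amplitude amplification: because the accepting subspace has amplitude exactly $0$ in the NO case, each reflection keeps it at $0$, so perfect soundness is retained, and $\mathrm{poly}(n)$ recursive levels — each reusing the same $O(\log n)$ qubits plus an $O(\log\log n)$-bit counter — suffice. Padding then gives $\mathsf{RQ_UMASPACE}(s) = \mathsf{RQ_USPACE}(s)$.

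For $\mathsf{RQL} \subseteq \mathsf{coQ_UMASPACE_1}(\log n)$ I would argue as in the proof of \Cref{lemma:itMatProdIsBqlHard}: the acceptance probability of a general logspace circuit $\Phi_w$ equals a single entry $A(w)_{1,t_w}[x_w,y_w]$ of a product of $\mathrm{poly}(n)$ natural-representation matrices, with all partial products of singular value $\leq \mathrm{poly}(n)$; this entry is exactly $0$ when $w$ is rejected and $\geq 1/2$ when accepted. Hence, for $\mathcal{P} \in \mathsf{RQL}$, the complement $\bar{\mathcal{P}}$ is logspace many-one reducible to $poly\text{-conditioned-}\mathsf{vITMATPROD}$ (with $b = 0$), and composing with \Cref{lemma:reduction:vIterMatProdToVMatPow,lemma:reduction:vMatPowToVMatInv,lemma:reduction:vMatInvToSingular} together with the implication $\mathcal{P} \leq^m_{\mathsf{L}} \mathcal{P}' \Rightarrow poly\text{-conditioned-}\mathcal{P} \leq^m_{\mathsf{L}} poly\text{-conditioned-}\mathcal{P}'$ gives $\bar{\mathcal{P}} \leq^m_{\mathsf{L}} poly\text{-conditioned-}\mathsf{SINGULAR}$. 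It then suffices to show $poly\text{-conditioned-}\mathsf{SINGULAR} \in \mathsf{Q_UMASPACE_1}(\log n)$ (and that this class is closed under logspace many-one reductions, which is routine). The protocol: Merlin sends a unit vector $\ket{\psi}$; using the logspace block-encoding construction for matrices with $\mathrm{poly}$-bit entries (as in \cite{ta2013inverting,fefferman2018complete}), Arthur applies a block encoding of the input $A$ (recall $\sigma_1(A) \leq 1$, $A = A^{\dagger}$) to $\ket{\psi}$ and a flag register, and accepts iff the flag reports that the sub-block application "failed." When $A$ is genuinely singular, a null vector $\ket{\psi}$ makes the flag fail with certainty, so the acceptance probability is $1$; when $\sigma_n(A) \geq \epsilon$, for every $\ket{\psi}$ one has $\Pr[\text{flag}=0] = \|A\ket{\psi}\|^2 = \langle\psi|A^2|\psi\rangle \geq \sigma_n(A)^2 \geq \epsilon^2$, so the acceptance probability is at most $1 - 1/\mathrm{poly}(n)$. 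Finally I would amplify the soundness error down to $1/2$ while keeping completeness exactly $1$ and the verifier unitary, via the space-efficient in-place Marriott--Watrous / Jordan's-lemma amplification used in the space-bounded setting in \cite{fefferman2016space,fefferman2018complete}.

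The main obstacle is this last point — a \emph{unitary}, \emph{logspace}, \emph{perfectly complete} QMA protocol for $poly\text{-conditioned-}\mathsf{SINGULAR}$, together with soundness amplification that does not destroy perfect completeness. Ordinary parallel repetition does not amplify the soundness because Merlin may entangle the copies of $\ket{\psi}$, so one must use the in-place amplification, which (via Jordan's lemma) preserves completeness $1$, drives soundness $1 - \delta$ down to $2^{-\mathrm{poly}}$ with $\mathrm{poly}(1/\delta)$ repetitions, and needs only an $O(\log n)$-qubit phase/counter register so it stays in logspace and unitary. One must also be careful that the honest acceptance probability is \emph{exactly} $1$ despite the finite universal gate set — a null vector of $A$ need not have amplitudes realizable exactly — which is handled by the conventions and techniques of \cite{fefferman2018complete}. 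Granting these, the logarithmic-space chain closes, and padding yields the theorem for all space-constructible $s(n) = \Omega(\log n)$.
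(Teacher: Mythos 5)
Your overall route is the same as the paper's: prove the logspace statement $\mathsf{RQ_U MAL}=\mathsf{RQ_U L}\subseteq\mathsf{RQL}\subseteq\mathsf{coQ_U MAL_1}$ and then pad; handle $\mathsf{RQ_U MAL}\subseteq\mathsf{RQ_U L}$ by replacing Merlin's proof with the maximally mixed state (which preserves perfect soundness) and amplifying the resulting $1/poly$ one-sided gap; and handle $\mathsf{RQL}\subseteq\mathsf{coQ_U MAL_1}$ by observing that perfect soundness makes the acceptance probability an \emph{exact} zero test on an entry of the product of natural representations, i.e.\ $\mathsf{coRQL}$-hardness of $poly\text{-conditioned-}\mathsf{vITMATPROD}$, then chaining \Cref{lemma:reduction:vIterMatProdToVMatPow,lemma:reduction:vMatPowToVMatInv,lemma:reduction:vMatInvToSingular} to $poly\text{-conditioned-}\mathsf{SINGULAR}$ and invoking membership of that problem in $\mathsf{Q_U MAL_1}$. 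This is exactly the decomposition of \Cref{lemma:vIterMatPowIsCorqlHard,lemma:rqmalInRqul,lemma:singularIsQMAL1Complete,thm:rqlVsRqulVsQmal1Etc}. The two places where you substitute your own machinery are the amplification for $\mathsf{RQ_U L}$ (the paper simply cites \cite[Lemma 5.1]{watrous2001quantum}, which already gives $\mathsf{Q_U SPACE}(\log n)_{1/poly,0}=\mathsf{RQ_U L}$; your fixed-point amplification sketch is fine in spirit, though it needs $O(\log n)$ recursion levels and $poly(n)$ total invocations, not ``$poly(n)$ recursive levels'') and the verification protocol for $poly\text{-conditioned-}\mathsf{SINGULAR}$.

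That last step is where there is a genuine gap. Your protocol has Arthur apply ``a block encoding of $A$'' and accept iff the flag indicates failure, claiming acceptance probability exactly $1$ when $A$ is singular. But Arthur's circuit must be built from a fixed finite universal gate set, and an \emph{exact} block encoding of an arbitrary matrix with entries in $\mathbb{Q}[i]_n$ is in general not implementable over such a set; what Arthur actually implements encodes some $\widetilde{A}\neq A$, and $\widetilde{A}$ need not be singular at all, so no witness is accepted with probability exactly $1$ --- you only get completeness $1-2^{-poly}$, which is not membership in $\mathsf{Q_U MAL_1}$. You flag an exactness worry, but you misattribute it to the witness (Merlin's state is arbitrary, so realizability of the null vector's amplitudes is irrelevant); the obstruction is Arthur's circuit, and the two-sided phase-estimation machinery of \cite{fefferman2018complete} does not by itself repair it. This is precisely why \Cref{lemma:singularIsQMAL1Complete} routes through Childs' quantum-walk construction: there, singularity of the input corresponds to an exact eigenphase of the (actually implemented) walk unitary, which can then be tested with one-sided error, rather than to an approximate numerical quantity as in your flag test. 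Your witness-preserving (Jordan's lemma) amplification step is fine and does preserve completeness exactly $1$ when the base verifier has it, but it cannot create perfect completeness that the base protocol lacks, so as written your chain does not close at $poly\text{-conditioned-}\mathsf{SINGULAR}\in\mathsf{Q_U MAL_1}$.
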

	
	This theorem, which is very much the one-sided error analogue of \Cref{thm:intro:bquspaceEqualsBqspace}, has a proof which follows the same general structure. For those promise problems $\mathcal{P}$ given in \Cref{def:singularPromiseProblems}, we define $poly\text{-conditioned-}\mathcal{P}$ as in \Cref{def:polyConditionedDetPromiseProblems}. 
	
	\begin{lemma}\label{lemma:vIterMatPowIsCorqlHard}
		$poly\text{-conditioned-}\mathsf{vITMATPROD}$ is $\mathsf{coRQL}$-hard.
	\end{lemma}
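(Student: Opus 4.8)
The plan is to mimic the proof of \Cref{lemma:itMatProdIsBqlHard}, but to exploit the perfect soundness of one-sided error rather than a two-sided bound. Suppose $\mathcal{P}=(\mathcal{P}_1,\mathcal{P}_0)\in\mathsf{coRQL}$; then the complementary problem $(\mathcal{P}_0,\mathcal{P}_1)$ lies in $\mathsf{RQL}=\mathsf{QSPACE}(\log n)_{\frac12,0}$, so there is a $\mathsf{DSPACE}(\log n)$-uniform family of general quantum circuits $\{\Phi_w=(\Phi_{w,1},\ldots,\Phi_{w,t_w}):w\in\mathcal{P}\}$, with $\Phi_w$ on $h_w=O(\log\lvert w\rvert)$ qubits and $t_w=\lvert w\rvert^{O(1)}$ gates, such that $w\in\mathcal{P}_0\Rightarrow\Pr[\Phi_w\text{ accepts }w]\geq\frac12$ and $w\in\mathcal{P}_1\Rightarrow\Pr[\Phi_w\text{ accepts }w]=0$. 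First I would carry over, essentially verbatim, the setup from \Cref{lemma:itMatProdIsBqlHard}: assume $\Phi_w$ cleans up its workspace at the end of the computation, set $d_w=2^{2h_w}=\lvert w\rvert^{O(1)}$, define $A(w)_j=K(\Phi_{w,t_w-j+1})\in\widehat{\mat}(d_w)$ for $j\in[t_w]$ (each $\mathsf{DSPACE}(\log\lvert w\rvert)$-constructible), and put $x_w=\ket{10^{h_w-1}}\bra{10^{h_w-1}}$ and $y_w=\ket{0^{h_w}}\bra{0^{h_w}}$, so that $A(w)_{1,t_w}[x_w,y_w]=\Pr[\Phi_w\text{ accepts }w]$ and, by \cite[Theorem 1]{roga2013entropic}, $\sigma_1(A(w)_{j_1,j_2})\leq\sqrt{d_w}=\lvert w\rvert^{O(1)}$ for every partial product.

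The only genuinely new ingredient is the choice of target value: I would take $b=0\in\mathbb{Q}[i]_n$ and $\widehat{s}=x_w$, $\widehat{t}=y_w$. Because $A(w)_{1,t_w}[x_w,y_w]$ is an acceptance probability, it is a real number in $[0,1]$, so $\lvert A(w)_{1,t_w}[x_w,y_w]-b\rvert=A(w)_{1,t_w}[x_w,y_w]$; this equals exactly $0$ when $w\in\mathcal{P}_1$, and it lies in $[\frac12,\sqrt{d_w}]\subseteq[\epsilon(n),2\kappa(n)]$ with $\epsilon(n)=\frac12=n^{-O(1)}$ and $\kappa(n)=\lceil\sqrt{d_w}\rceil=n^{O(1)}$ when $w\in\mathcal{P}_0$. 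Hence the gap promise $\lvert A(w)_{1,t_w}[x_w,y_w]-b\rvert\in\{0\}\cup[\epsilon(n),2\kappa(n)]$ defining $\mathsf{vITMATPROD}$ is satisfied, $t_w,\kappa(n),\epsilon(n)^{-1}$ are all $n^{O(1)}$, and $\mathsf{vITMATPROD}(\langle A(w)_1,\ldots,A(w)_{t_w},x_w,y_w,0\rangle)=1\iff A(w)_{1,t_w}[x_w,y_w]=0\iff w\in\mathcal{P}_1$, i.e.\ this value equals $\mathcal{P}(w)$. Since the whole instance is logspace-computable from $w$, this yields $\mathcal{P}\leq^m_{\mathsf{L}}poly\text{-conditioned-}\mathsf{vITMATPROD}$, which is exactly the required hardness.

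I do not expect a serious obstacle here; the points that need care are (i) getting the orientation of $\mathsf{coRQL}$ right — perfect soundness of the $\mathsf{RQL}$ machine for the complement is precisely the statement that the relevant matrix entry vanishes on $\mathcal{P}_1$-instances, which is what $\mathsf{vITMATPROD}$ with target $0$ reports as its ``$1$''-answer — and (ii) observing that this entry is genuinely real and nonnegative, so that only the one-sided promise $\{0\}\cup[\epsilon,2\kappa]$, rather than a two-sided gap, has to be verified. Both \Cref{lemma:reduction:vIterMatProdToVMatPow} and \Cref{lemma:reduction:vMatPowToVMatInv} (whose proofs mirror \Cref{lemma:reduction:iterMatProdToMatPow,lemma:reduction:matPowToMatInv}) then transport this hardness along the $\mathsf{vITMATPROD}\to\mathsf{vMATPOW}\to\mathsf{vMATINV}$ chain when needed.
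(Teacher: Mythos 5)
Your proposal is correct and follows exactly the route the paper intends: the paper's own proof of this lemma is literally the single sentence ``precisely analogous to the proof of \Cref{lemma:itMatProdIsBqlHard},'' and you have filled in that analogy faithfully, correctly identifying the only new ingredient (taking $b=0$ so that perfect soundness of the $\mathsf{RQL}$ machine for the complement becomes the exact-equality branch of the $\mathsf{vITMATPROD}$ promise, with $\epsilon(n)=\tfrac12$ covering the completeness branch).
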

	\begin{proof}
		Precisely analogous to the proof of \Cref{lemma:itMatProdIsBqlHard}.
	\end{proof}
	
	\begin{lemma}\label{lemma:rqmalInRqul}
		$\mathsf{RQ_U MAL} \subseteq \mathsf{RQ_U L}$
	\end{lemma}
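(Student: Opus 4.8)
The plan is to run the one-sided analogue of the argument behind $\mathsf{Q_U MAL}=\mathsf{BQ_U L}$ used in \Cref{lemma:qmalInQumalBqul}: identify the relevant matrix $M$ whose spectrum encodes the maximum acceptance probability of the verifier, observe that for a \emph{unitary} verifier with \emph{perfect soundness} the only thing to decide is whether $M\neq 0$, and boost the (polynomially small) detection probability to a constant by amplitude amplification, which is itself a unitary, $O(\log)$-space procedure and automatically preserves perfect soundness.

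In detail, let $\mathcal{P}=(\mathcal{P}_1,\mathcal{P}_0)\in\mathsf{RQ_U MAL}$, witnessed by a $\mathsf{DSPACE}(\log n)$-uniform family of unitary verification circuits $\{V_w\}$, where $V_w$ acts on $m_w+h_w=O(\log\lvert w\rvert)$ qubits with $\lvert w\rvert^{O(1)}$ gates, so that $w\in\mathcal{P}_1\Rightarrow\exists\,\ket\psi\in\Psi_{m_w},\ \Pr[V_w\text{ accepts }w,\ket\psi]\geq\tfrac12$ and $w\in\mathcal{P}_0\Rightarrow\forall\,\ket\psi,\ \Pr[V_w\text{ accepts }w,\ket\psi]=0$. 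Exactly as in \Cref{lemma:qmalInQumalBqul}, set $M=(I\otimes\bra{0^{h_w}})\,V_w^{\dagger}\Pi_1 V_w\,(I\otimes\ket{0^{h_w}})\in\pos(2^{m_w})$, where $\Pi_1$ projects onto ``first qubit $=1$''; then $\bra\psi M\ket\psi=\Pr[V_w\text{ accepts }w,\ket\psi]$ for all $\ket\psi$, so $w\in\mathcal{P}_0\Rightarrow M=0$, while $w\in\mathcal{P}_1\Rightarrow\tr(M)\geq\lambda_1(M)=\max_{\ket\psi}\bra\psi M\ket\psi\geq\tfrac12$. Define a single-run unitary subroutine $U_w$: it prepares the $m_w$-qubit proof register in the maximally mixed state by entangling it maximally with $m_w$ fresh auxiliary qubits that are thereafter never touched, initialises the remaining $h_w$ qubits to $\ket{0^{h_w}}$, and applies $V_w$. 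Since the auxiliary register is never reused it acts as a classical source feeding $V_w$ a uniformly random computational-basis proof, so the probability that the first output qubit is measured to be $1$ equals $\tfrac{1}{2^{m_w}}\sum_r\bra r M\ket r=\tr(M)/2^{m_w}$; this is $0$ when $w\in\mathcal{P}_0$ and (using $\tr(M)\geq\tfrac12$) at least $p_0:=2^{-(m_w+1)}=1/\lvert w\rvert^{O(1)}$ when $w\in\mathcal{P}_1$. Moreover $U_w$ is a unitary circuit on $2m_w+h_w=O(\log\lvert w\rvert)$ qubits with $\lvert w\rvert^{O(1)}$ gates, and it already has perfect soundness, so only its completeness probability needs amplifying.

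Next, apply amplitude amplification to $U_w$ with respect to the ``first qubit $=1$'' subspace. The reflections needed are a single-qubit phase gate on the first qubit (reflection about the good subspace) and $U_w\bigl(I-2\ket{0^{2m_w+h_w}}\bra{0^{2m_w+h_w}}\bigr)U_w^{\dagger}$ (reflection about the starting vector), where $U_w^{\dagger}$ is $U_w$ executed in reverse and the inner operator is a multiply-controlled phase gate; the composite circuit is $\mathsf{DSPACE}(\log n)$-uniform, acts on $O(\log\lvert w\rvert)$ qubits, and, using $O(1/\sqrt{p_0})=\lvert w\rvert^{O(1)}$ iterations, has $\lvert w\rvert^{O(1)}$ gates. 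Because $p_0$ is an \emph{explicitly known} lower bound on the success probability, use the fixed-point variant of amplitude amplification (e.g.\ the Yoder--Low--Chuang composite-pulse scheme): with $O(\sqrt{1/p_0})$ iterations it forces the squared amplitude on the good subspace to at least $\tfrac12$ whenever the true success probability is at least $p_0$, so the final acceptance probability is $\geq\tfrac12$ when $w\in\mathcal{P}_1$. Crucially, every reflection used either leaves the ``first qubit $=0$'' subspace pointwise fixed or multiplies the starting vector by a phase; hence if $w\in\mathcal{P}_0$ the starting vector $U_w\ket{0^{2m_w+h_w}}$ lies entirely in that subspace and is carried to a scalar multiple of itself at every step, so the acceptance probability stays exactly $0$. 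This unitary circuit therefore witnesses $\mathcal{P}\in\mathsf{RQ_U L}$. (The same argument incidentally shows $\mathsf{RQ_U L}$ is unchanged if the completeness constant $\tfrac12$ is replaced by any $1/\mathrm{poly}$ quantity.)

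The step demanding the most care is the boosting: it must simultaneously preserve \emph{perfect} soundness and cope with the fact that the true completeness probability is only pinned down to the interval $[p_0,1]$. Plain amplitude amplification with a fixed iteration count fails the second requirement (it can overshoot when the true probability is large), so one must either use the fixed-point scheme --- whose generalized reflections, as noted, act trivially on a pure ``reject'' state up to phase --- or, equivalently, a uniformly random iteration count in the style of Boyer--Brassard--H{\o}yer--Tapp together with the robustness of $\mathsf{RQ_U L}$ to the choice of completeness constant. Everything else (uniformity of the construction, the $O(\log\lvert w\rvert)$ space and $\lvert w\rvert^{O(1)}$ gate bounds) is routine.
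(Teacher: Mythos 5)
Your proposal is correct and follows essentially the same route as the paper: replace Merlin's proof with the maximally mixed state (via entanglement with $m_w$ fresh ancillas) to get a unitary circuit with perfect soundness and completeness $2^{-(m_w+1)}=1/\mathrm{poly}$, then amplify the one-sided gap. The only difference is that the paper discharges the amplification step by citing Watrous's result that $\mathsf{Q_U SPACE}(\log n)_{1/\mathrm{poly}(n),\,0}=\mathsf{RQ_U L}$, whereas you carry it out explicitly with fixed-point amplitude amplification and correctly verify that the reflections preserve perfect soundness.
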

	\begin{proof}[Proof (sketch)]
		Apply the well-known technique of replacing Merlin's proof with the totally mixed state \cite{marriott2005quantum}, which preserves perfect soundness \cite{kobayashi2003quantum}; then use space-efficient probability amplification for one-sided bounded-error (unitary) quantum logspace \cite{watrous2001quantum}. We present the details in \Cref{sec:appendix:proofOfRqumalInRqul}.
	\end{proof}
	
	\begin{lemma}\label{lemma:singularIsQMAL1Complete}
		$poly\text{-conditioned-}\mathsf{SINGULAR}$ is $\mathsf{Q_U MAL_1}$-complete.
	\end{lemma}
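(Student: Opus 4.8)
The plan is to prove the two halves of the completeness statement separately; the bulk of the work is the membership $poly\text{-conditioned-}\mathsf{SINGULAR}\in\mathsf{Q_U MAL_1}$ (this is where perfect completeness enters in an essential way) and the $\mathsf{Q_U MAL_1}$-hardness, which I would obtain by adapting the Feynman--Kitaev circuit-to-Hamiltonian construction to the logarithmic-space regime.

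For membership I would exhibit a unitary quantum Merlin--Arthur protocol with perfect completeness. On input $A\in\widehat{\herm}(n)$ with $\sigma_1(A)\le 1$ and $\sigma_n(A)\in\{0\}\cup[\epsilon(n),1]$, Merlin sends a $\lceil\log n\rceil$-qubit state $\ket\psi$, purportedly a unit vector of $\ker(A)$. Arthur constructs in deterministic logspace a block-encoding unitary $U_A$ on $\lceil\log n\rceil+O(\log n)$ qubits with $(\bra{0}\otimes I)\,U_A\,(\ket0\otimes I)=A/\gamma$ for some $\gamma=\mathrm{poly}(n)$ (a standard construction from the Gaussian-rational entries of $A$), applies $U_A$ to $\ket0\ket\psi$, measures the ancilla register, and accepts iff the outcome is not the all-zeros string. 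If $A$ is singular, taking $\ket\psi\in\ker(A)$ forces $A\ket\psi=0$, so the all-zeros outcome has probability $0$ and Arthur accepts with probability $1$. If $\sigma_n(A)\ge\epsilon(n)$, then $\|A\ket\psi\|\ge\epsilon(n)$ for every $\ket\psi$, so the all-zeros outcome has probability at least $\epsilon(n)^2/\gamma^2=1/\mathrm{poly}(n)$ and Arthur accepts with probability at most $1-1/\mathrm{poly}(n)$. This yields perfect completeness and soundness $1-1/\mathrm{poly}(n)$; to reach soundness $\tfrac12$ one applies space-efficient soundness amplification in the style of \cite{fefferman2018complete}, checking that the in-place (Marriott--Watrous) amplification preserves perfect completeness --- the only reflection it needs is about the ancilla register being $\ket0$, which Arthur can implement, and an optimal perfect-completeness witness lies in the intersection of the accept and initialization subspaces, so it is still accepted with probability $1$.

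For $\mathsf{Q_U MAL_1}$-hardness, let $\mathcal P=(\mathcal P_1,\mathcal P_0)\in\mathsf{Q_U MAL_1}$ be decided by logspace-uniform unitary verifiers $V_w=V_{w,T_w}\cdots V_{w,1}$ on $m_w+h_w=O(\log|w|)$ qubits with $T_w=\mathrm{poly}(|w|)$ gates. I would build, in deterministic logspace, a positive semidefinite Hermitian clock Hamiltonian $H_w=H_{\mathrm{in}}+H_{\mathrm{clock}}+\sum_t H_{\mathrm{prop},t}+H_{\mathrm{out}}$ acting on the $O(\log|w|)$ data qubits together with an $O(\log T_w)$-qubit \emph{binary} clock register, so that $H_w$ is a $\mathrm{poly}(|w|)\times\mathrm{poly}(|w|)$ matrix whose entries are computable from the gate descriptions. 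When $w\in\mathcal P_1$, the witness accepted with probability exactly $1$ yields a history state annihilated by every term of $H_w$ (a formal identity in the gate matrices, together with the vanishing of the rejection amplitude for that witness), so $\sigma_n(H_w)=\lambda_{\min}(H_w)=0$; when $w\in\mathcal P_0$, Kitaev's geometric lemma, applied with the verifier's completeness--soundness gap $1-\tfrac12$, gives $\lambda_{\min}(H_w)\ge 1/\mathrm{poly}(|w|)$. Rescaling $\widehat H_w=H_w/c$ by a suitable integer $c=\mathrm{poly}(|w|)$ so that $\sigma_1(\widehat H_w)\le 1$, the map $w\mapsto\langle\widehat H_w\rangle$ is a logspace reduction showing $\mathcal P\le^m_{\mathsf L}poly\text{-conditioned-}\mathsf{SINGULAR}$.

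The main obstacle, in both directions, is keeping everything \emph{exact} while staying in logarithmic space. For hardness this forces (i) a compact binary clock so that $H_w$ has polynomial dimension, which makes the propagation terms act on $O(\log|w|)$ clock qubits but keeps their entries logspace-computable; and (ii) the more delicate requirement that the history state be an exact null vector of $H_w$, which means $H_w$ must be assembled from the very same finite-precision gates used by the verifier, and that those gates have entries in $\mathbb Q[i]$ (so that $\widehat H_w\in\widehat{\herm}$ and the zero of $\sigma_n$ is genuine rather than merely $2^{-\Omega(\log|w|)}$). I would therefore fix the underlying universal gate set of $\mathsf{Q_U MAL_1}$ to one with Gaussian-rational entries and note the class is insensitive to this choice. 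For membership, the corresponding delicate point is the soundness amplification: ordinary amplitude amplification is unavailable --- Arthur cannot reflect about Merlin's unknown proof --- so one must go through the in-place amplification and check that it preserves perfect completeness. Granting these two directions, $poly\text{-conditioned-}\mathsf{SINGULAR}$ is $\mathsf{Q_U MAL_1}$-complete.
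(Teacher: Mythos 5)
Your overall strategy matches the paper's. For the hardness half your construction is essentially identical to the paper's: apply Kitaev's clock-Hamiltonian construction directly to the $\mathsf{Q_U MAL_1}$ verifier, use perfect completeness to get an exact null history state, invoke the geometric lemma with the constant completeness--soundness gap to get a $1/\mathrm{poly}$ lower bound on the smallest singular value in the no case, and rescale. The only real difference is the clock: the paper takes the clock space to be $\mathbb{C}^{t_w+1}$ outright, so there are no illegal clock states and no clock-penalty term, while your binary clock forces you to add $H_{\mathrm{clock}}$ to kill spurious null vectors on unused clock values; this buys nothing here, since the output is an explicit polynomial-size matrix rather than a local Hamiltonian. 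For membership you diverge from the paper: there, Merlin sends a kernel vector and Arthur runs Childs' quantum-walk-based simulation plus phase estimation, exploiting that eigenvalue $0$ maps to an exactly representable eigenphase, so perfect completeness and constant soundness come out in one shot with no amplification; you instead use a block-encoding of $A$, accept iff the ancilla is not all zeros (completeness $1$, soundness $1-1/\mathrm{poly}$), and then appeal to in-place Marriott--Watrous amplification preserving perfect completeness. Your route can be made to work---a probability-$1$ witness is a common fixed point of the two reflections, both reflections are logspace-implementable, and the phase-estimation implementation of the amplification detects the zero rotation angle exactly---but note that this amplification step is precisely the machinery the paper's walk-plus-phase-estimation protocol builds in from the start, and it is where all the delicacy of your argument concentrates, so you are not actually simplifying the membership proof.

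Two cautions. First, in both your basic protocol and the amplification, perfect completeness survives only if Arthur's unitaries (the block-encoding $U_A$, the reflections, the controlled powers used in phase estimation) are implemented \emph{exactly} over the fixed universal gate set; a Solovay--Kitaev approximation turns acceptance probability $1$ into $1-2^{-\mathrm{poly}}$ and breaks the claim. The paper's protocol carries the same burden and omits the details, so this is not a gap relative to the paper, but it should be stated rather than hidden inside ``a standard construction.'' Second, your fix on the hardness side---``fix the gate set to one with Gaussian-rational entries and note the class is insensitive to this choice''---is not safe as stated: for perfect-completeness classes gate-set robustness is exactly the problematic issue (cf.\ the discussion of $\mathsf{QMA}$ vs.\ $\mathsf{QMA_1}$ in \Cref{sec:discussion}), since changing the gate set changes which acceptance probabilities are exactly $1$. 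The cleaner statement is that $\mathsf{Q_U MAL_1}$ is taken to be defined relative to a gate set whose matrices lie in $\mathbb{Q}[i]$---an assumption the paper itself implicitly needs when it asserts $H_w\in\widehat{\pos}(d_w)$---without claiming equivalence across gate sets.
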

	\begin{proof}
		We first show $\mathsf{Q_U MAL_1}$-hardness. Suppose $\mathcal{P}=(\mathcal{P}_1,\mathcal{P}_0) \in \mathsf{Q_U MAL_1}$. By definition, there is a uniform family of (unitary) quantum circuits $\{V_w=(V_{w,1},\ldots,V_{w,t_w}):w \in \mathcal{P}\}$, where $V_w$ acts on $m_w+h_w=O(\log \lvert w \rvert)$ qubits and has $t_w=poly(\lvert w \rvert)$ gates, such that  $w \in \mathcal{P}_1 \Rightarrow \exists \ket{\psi} \in \Psi_{m_w}, \Pr[V_w \text{ accepts } w,\ket{\psi}] \geq c= 1$, and $w \in \mathcal{P}_0 \Rightarrow \forall \ket{\psi} \in \Psi_{m_w}, \Pr[V_w \text{ accepts } w,\ket{\psi}]\leq k=\frac{1}{2}$, where $\Pr[V_w \text{ accepts } w,\ket{\psi}]= \lVert \Pi_1 V_w (\ket{\psi} \otimes \ket{0^{h_w}}) \rVert^2$. 
		
		We make use of the Kitaev clock Hamiltonian construction \cite[Section 14.4]{kitaev1997quantum}, in a manner similar to \cite[Lemma 21]{fefferman2018complete} (though, without the need to first apply space-efficient probability amplification techniques). Let $d_w=2^{m_w +h_w}(t_w+1)=poly(\lvert w \rvert)$, define the $d_w$-dimensional Hilbert space $\mathcal{H}_w=\mathbb{C}^{2^{m_w}} \otimes \mathbb{C}^{2^{h_w}} \otimes \mathbb{C}^{t_w+1}$, and let $\Pi_b=I_{2^{b-1}} \otimes \ket{1}\bra{1} \otimes I_{2^{m_w +h_w-b}}\in \widehat{\proj}(2^{m_w+h_w})$ denote the projection onto the subspace of $\mathbb{C}^{2^{m_w}} \otimes \mathbb{C}^{2^{h_w}}$ spanned by states in which the $b^{\text{th}}$ qubit is $1$. We define the Hamiltonians $H_w^{prop},H_w^{in},H_w^{out},H_w \in \widehat{\pos}(d_w)$ on $\mathcal{H}_w$ as follows: 
		$$H_w^{prop}=\frac{1}{2}\sum_{j=1}^{t_w} \left(-V_{w,j} \otimes \ket{j}\bra{j-1}-V_{w,j}^{\dagger} \otimes \ket{j-1}\bra{j}+I_{2^{m_w +h_w}} \otimes (\ket{j}\bra{j}+\ket{j-1}\bra{j-1}) \right)$$
		$$H_w^{in}=\sum_{b=m_w+1}^{m_w+h_w}\Pi_b \otimes \ket{0}\bra{0}, \ \  H_w^{out}=\Pi_1 \otimes \ket{t_w}\bra{t_w}, \ \ \text{and} \ \ H_w=H_w^{in}+H_w^{prop}+H_w^{out}.$$
		
		By \cite[Section 14.4]{kitaev1997quantum}, $\exists r_0,r_1 \in \mathbb{R}_{>0}$, such that $\forall w \in \mathcal{P}$ we have: $\sigma_1(H_w) \leq r_0$, $w \in \mathcal{P}_1 \Rightarrow \sigma_{d_w}(H_w) \leq \frac{1-c}{t_w+1}=0$, and $w \in \mathcal{P}_0 \Rightarrow \sigma_{d_w}(H_w) \geq r_1 \frac{1-\sqrt{k}}{t_w+1}=1/poly(d_w)$. Therefore, $\mathcal{P}(w)=\mathsf{SINGULAR}(\langle H_w \rangle)$ and $\langle H_w \rangle \in poly\text{-conditioned-}\mathsf{SINGULAR}$, which implies $poly\text{-conditioned-}\mathsf{SINGULAR}$ is $\mathsf{Q_U MAL_1}$-hard.
		
		Finally, $poly\text{-conditioned-}\mathsf{SINGULAR} \in \mathsf{Q_U MAL_1}$ follows from using the quantum walk based Hamiltonian simulation technique of Childs \cite{childs2010relationship,berry2014exponential} to allow the phase estimation of \cite[Lemma 19]{fefferman2018complete} to be carried out with one-sided error, in the style of \cite[Proposition 32]{fefferman2018complete}, we omit the straightforward details.
	\end{proof}
	
	\begin{lemma}\label{thm:rqlVsRqulVsQmal1Etc}
		$\mathsf{RQ_U MAL} = \mathsf{RQ_U L} \subseteq \mathsf{RQL} \subseteq \mathsf{coQ_U MAL_1}$.
	\end{lemma}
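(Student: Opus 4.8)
The plan is to follow the template of the proof of \Cref{thm:bqlEqualsBqulEqualsQmal}, substituting the one-sided-error (perfect-soundness) ingredients developed above for their two-sided-error analogues, and to reduce everything to the lemmas already in hand.

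The two easy containments come first. For $\mathsf{RQ_U L} \subseteq \mathsf{RQ_U MAL}$, an Arthur who simply ignores the purported proof realizes any $\mathsf{RQ_U L}$ computation verbatim, preserving both the completeness bound $\tfrac12$ and perfect soundness; combined with \Cref{lemma:rqmalInRqul} this yields $\mathsf{RQ_U MAL} = \mathsf{RQ_U L}$. For $\mathsf{RQ_U L} \subseteq \mathsf{RQL}$, a unitary quantum circuit is a special case of a general quantum circuit, and the acceptance probability---hence the completeness and soundness parameters---is unchanged.

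The substantive step is $\mathsf{RQL} \subseteq \mathsf{coQ_U MAL_1}$. Take $\mathcal{P} = (\mathcal{P}_1,\mathcal{P}_0) \in \mathsf{RQL}$; its complement $\overline{\mathcal{P}} = (\mathcal{P}_0,\mathcal{P}_1)$ lies in $\mathsf{coRQL}$, so it suffices to prove $\overline{\mathcal{P}} \in \mathsf{Q_U MAL_1}$. By \Cref{lemma:vIterMatPowIsCorqlHard}, $poly\text{-conditioned-}\mathsf{vITMATPROD}$ is $\mathsf{coRQL}$-hard, hence $\overline{\mathcal{P}} \leq_{\mathsf{L}}^m poly\text{-conditioned-}\mathsf{vITMATPROD}$. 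The reductions of \Cref{lemma:reduction:vIterMatProdToVMatPow,lemma:reduction:vMatPowToVMatInv,lemma:reduction:vMatInvToSingular} pass to the $poly$-conditioned versions (using $\mathcal{Q} \leq_{\mathsf{L}}^m \mathcal{Q}' \Rightarrow poly\text{-conditioned-}\mathcal{Q} \leq_{\mathsf{L}}^m poly\text{-conditioned-}\mathcal{Q}'$, and that $\leq_{\mathsf{AC}^0}^m$ reductions imply $\leq_{\mathsf{L}}^m$ reductions, which compose) and, composed with the above, give $\overline{\mathcal{P}} \leq_{\mathsf{L}}^m poly\text{-conditioned-}\mathsf{SINGULAR}$. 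By \Cref{lemma:singularIsQMAL1Complete}, $poly\text{-conditioned-}\mathsf{SINGULAR} \in \mathsf{Q_U MAL_1}$, and since $\mathsf{Q_U MAL_1}$ is closed under logspace many-one reductions---recompute the (polynomially long) reduced instance on demand and feed it to the logspace-uniform verification circuit family---we conclude $\overline{\mathcal{P}} \in \mathsf{Q_U MAL_1}$, i.e.\ $\mathcal{P} \in \mathsf{coQ_U MAL_1}$.

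I expect essentially no obstacle in this assembly itself; the genuine work is packaged into the cited lemmas, and the points there that need care are (i) \Cref{lemma:rqmalInRqul}, where one must check that replacing Merlin's message by the maximally mixed state keeps perfect soundness intact before amplifying one-sided error in unitary quantum logspace, and (ii) the ``membership'' half of \Cref{lemma:singularIsQMAL1Complete}, where the Fefferman--Lin phase-estimation verifier must be executed with one-sided error via the quantum-walk Hamiltonian-simulation technique. Within the present argument, the only thing to verify is that every reduction in the cycle preserves the singular-value bound defining ``$poly$-conditioned'' (which is exactly why the $\mathsf{v}$-versions are proved separately), together with the routine closure of $\mathsf{Q_U MAL_1}$ under $\leq_{\mathsf{L}}^m$.
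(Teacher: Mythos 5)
Your proposal matches the paper's proof essentially step for step: the trivial containments, the use of \Cref{lemma:rqmalInRqul} for $\mathsf{RQ_U MAL}=\mathsf{RQ_U L}$, and the chain $\mathsf{coRQL}$-hardness of $poly\text{-conditioned-}\mathsf{vITMATPROD}$ composed with \Cref{lemma:reduction:vIterMatProdToVMatPow,lemma:reduction:vMatPowToVMatInv,lemma:reduction:vMatInvToSingular} and the membership half of \Cref{lemma:singularIsQMAL1Complete} to get $\mathsf{coRQL}\subseteq\mathsf{Q_U MAL_1}$. You merely make explicit the closure of $\mathsf{Q_U MAL_1}$ under $\leq_{\mathsf{L}}^m$, which the paper leaves implicit; the argument is correct.
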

	\begin{proof}
		Clearly, $\mathsf{RQ_U L} \subseteq \mathsf{RQ_U MAL}$. By \Cref{lemma:rqmalInRqul}, $\mathsf{RQ_U MAL} \subseteq \mathsf{RQ_U L}$, which implies $\mathsf{RQ_U MAL} = \mathsf{RQ_U L}$. Trivially, $\mathsf{RQ_U L} \subseteq \mathsf{RQL}$ By \Cref{lemma:vIterMatPowIsCorqlHard}, $poly\text{-conditioned-}\mathsf{vITMATPROD}$ is $\mathsf{coRQL}$-hard; thus, by  \Cref{lemma:reduction:vIterMatProdToVMatPow,lemma:reduction:vMatPowToVMatInv,lemma:reduction:vMatInvToSingular}, $poly\text{-conditioned-}\mathsf{SINGULAR}$ is $\mathsf{coRQL}$-hard. Finally, by \Cref{lemma:singularIsQMAL1Complete}, we have $poly\text{-conditioned-}\mathsf{SINGULAR} \in \mathsf{Q_U MAL_1}$, which implies $\mathsf{coRQL} \subseteq \mathsf{Q_U MAL_1}$; thus, $\mathsf{RQL} \subseteq \mathsf{coQ_U MAL_1}$.
	\end{proof}
	
	The main theorem stated at the beginning of this section now follows immediately.
	
	\begin{proof}[Proof of \Cref{thm:rquspaceVsRqspaceVsRqmaspaceEtc}]
		Follows from \Cref{thm:rqlVsRqulVsQmal1Etc} and a padding argument analogous to \Cref{thm:intro:bquspaceEqualsBqspace}.
	\end{proof}
	
	\subsection{NQSPACE vs. N\texorpdfstring{Q\textsubscript{U}}{QU}SPACE vs. N\texorpdfstring{Q\textsubscript{U}}{QU}SPACE}\label{sec:singular:nqlQma}
	
	We next consider variants of the well-conditioned problems of \Cref{def:singularPromiseProblems}, in which $\epsilon(n)=0 \, \forall n \in \mathbb{N}$ and $\kappa(n)=poly(n)$, which we denote by $\mathsf{PreciseSINGULAR}$, $\mathsf{vPreciseITMATPROD}$, etc. By arguments precisely analogous to those of \Cref{sec:singular:rqlQma}, we obtain the following three lemmas; we omit the proofs.
	
	\begin{lemma}\label{lemma:vPreciseMatPowIsConqlHard}
		$poly\text{-conditioned-}\mathsf{vPreciseITMATPROD}$ is $\mathsf{coNQL}$-hard.
	\end{lemma}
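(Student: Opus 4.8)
The plan is to imitate the proof of \Cref{lemma:itMatProdIsBqlHard} (equivalently, of \Cref{lemma:vIterMatPowIsCorqlHard}), replacing the bounded-error hypothesis by perfect soundness. I would start from a promise problem $\mathcal{P}=(\mathcal{P}_1,\mathcal{P}_0)\in\mathsf{coNQL}$ and pass to its complement, so that $\overline{\mathcal{P}}=(\mathcal{P}_0,\mathcal{P}_1)\in\mathsf{NQL}=\mathsf{NQSPACE}(\log n)=\bigcup_c\mathsf{QSPACE}(\log n)_{c,0}$. By definition this yields a $\mathsf{DSPACE}(\log\lvert w\rvert)$-uniform family of general quantum circuits $\{\Phi_w=(\Phi_{w,1},\ldots,\Phi_{w,t_w}):w\in\mathcal{P}\}$ on $h_w=O(\log\lvert w\rvert)$ qubits with $t_w=\lvert w\rvert^{O(1)}$ gates such that $w\in\mathcal{P}_0\Rightarrow\Pr[\Phi_w\text{ accepts }w]>0$ while $w\in\mathcal{P}_1\Rightarrow\Pr[\Phi_w\text{ accepts }w]=0$; as in \Cref{lemma:itMatProdIsBqlHard} I would first arrange, without loss of generality, that $\Phi_w$ cleans up its workspace at the end of the computation.

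Next I would set $d_w=2^{2h_w}=\lvert w\rvert^{O(1)}$, $A(w)_j=K(\Phi_{w,t_w-j+1})\in\widehat{\mat}(d_w)$ (logspace constructible by \Cref{def:generalQuantumCircuit}), $x_w=\ket{10^{h_w-1}}\bra{10^{h_w-1}}$, and $y_w=\ket{0^{h_w}}\bra{0^{h_w}}$, and then repeat the two estimates from \Cref{lemma:itMatProdIsBqlHard} verbatim: every partial composition $\Phi_{w,t_w-j_2+1}\circ\cdots\circ\Phi_{w,t_w-j_1+1}$ is again a quantum channel on $2^{h_w}$ qubits, so \cite[Theorem 1]{roga2013entropic} bounds $\sigma_1(A(w)_{j_1,j_2})=\sigma_1(K(\Phi_{w,t_w-j_2+1}\circ\cdots\circ\Phi_{w,t_w-j_1+1}))\leq\sqrt{d_w}=\lvert w\rvert^{O(1)}$ for all $1\leq j_1\leq j_2\leq t_w$, and the acceptance probability equals the single entry $A(w)_{1,t_w}[x_w,y_w]$. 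I would then output the instance $\langle A(w)_1,\ldots,A(w)_{t_w},x_w,y_w,0\rangle$ of $poly\text{-conditioned-}\mathsf{vPreciseITMATPROD}$, i.e.\ with target value $b=0$ and gap parameter $\epsilon\equiv 0$.

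To finish, I would observe that the promise of the \emph{precise} problem imposes nothing beyond the singular-value bound: since $0\leq A(w)_{1,t_w}[x_w,y_w]=\Pr[\Phi_w\text{ accepts }w]\leq 1\leq 2\sqrt{d_w}$, the condition $\lvert A(w)_{1,t_w}[x_w,y_w]-0\rvert\in\{0\}\cup[0,2\sqrt{d_w}]$ holds automatically, so every produced instance satisfies the promise. Perfect soundness then gives $A(w)_{1,t_w}[x_w,y_w]=0\Leftrightarrow w\in\mathcal{P}_1\Leftrightarrow\mathcal{P}(w)=1$, so $\mathsf{vPreciseITMATPROD}$ returns $\mathcal{P}(w)$ on this instance, and the whole map $w\mapsto\langle A(w)_1,\ldots,A(w)_{t_w},x_w,y_w,0\rangle$ is computable in deterministic logspace, witnessing $\mathcal{P}\leq^m_{\mathsf{L}}poly\text{-conditioned-}\mathsf{vPreciseITMATPROD}$. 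I do not expect a genuine obstacle here; the only points that need care are the direction of the complementation (a $\mathsf{coNQL}$ instance must be matched with an $\mathsf{NQL}$ circuit whose acceptance probability is \emph{exactly} zero precisely on the $\mathcal{P}$-yes instances) and the remark that, unlike in the bounded-error case, no promise gap is available or needed — which is exactly why one lands in the \emph{precise} variant $\mathsf{vPreciseITMATPROD}$ rather than in $\mathsf{vITMATPROD}$ with a $poly(n)$-sized gap.
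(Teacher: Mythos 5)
Your proposal is correct and matches the paper's intended argument: the paper omits this proof, stating only that it is ``precisely analogous'' to the proof of \Cref{lemma:itMatProdIsBqlHard}, which is exactly the reduction you carry out (natural representations of the channel gates in reverse order, the singular-value bound from \cite[Theorem 1]{roga2013entropic}, and the acceptance probability as a single entry of the iterated product). Your handling of the complementation direction and the observation that perfect soundness plus the strictly positive acceptance on $\overline{\mathcal{P}}$-yes instances yields the exact-zero characterization, with $b=0$ and $\epsilon\equiv 0$, is precisely what makes the instance land in the precise variant.
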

	
	\begin{lemma}\label{lemma:nqmalInNqul}
		$\mathsf{NQ_U MAL} \subseteq \mathsf{NQ_U L}$
	\end{lemma}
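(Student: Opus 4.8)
The plan is to follow the proof of \Cref{lemma:rqmalInRqul}, with one simplification: since $\mathsf{NQ_U L}$ only demands that the completeness probability be \emph{strictly positive}, the space-efficient probability amplification step of \cite{watrous2001quantum} is unnecessary, and only the mixed-state substitution is needed. First I would take $\mathcal{P}=(\mathcal{P}_1,\mathcal{P}_0)\in\mathsf{NQ_U MAL}$, witnessed by a $\mathsf{DSPACE}(\log n)$-uniform family of unitary verification circuits $\{V_w\}$ acting on $m_w+h_w=O(\log\lvert w\rvert)$ qubits with $\mathrm{poly}(\lvert w\rvert)$ gates, where $w\in\mathcal{P}_1\Rightarrow\exists\ket{\psi}\in\Psi_{m_w},\ \Pr[V_w\text{ accepts }w,\ket{\psi}]\geq c(\lvert w\rvert)>0$ and $w\in\mathcal{P}_0\Rightarrow\forall\ket{\psi}\in\Psi_{m_w},\ \Pr[V_w\text{ accepts }w,\ket{\psi}]=0$ (perfect soundness). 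As in the proof of \Cref{lemma:qmalInQumalBqul} (specialized to unitary circuits), set $M_w=(I\otimes\bra{0^{h_w}})V_w^{\dagger}\Pi_1 V_w(I\otimes\ket{0^{h_w}})\in\pos(2^{m_w})$, so that $\Pr[V_w\text{ accepts }w,\ket{\psi}]=\bra{\psi}M_w\ket{\psi}$; hence $w\in\mathcal{P}_1\Leftrightarrow\lambda_1(M_w)\geq c(\lvert w\rvert)$ and $w\in\mathcal{P}_0\Leftrightarrow M_w=0$.

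Next I would replace Merlin's proof register by the totally mixed state, following \cite{marriott2005quantum}. Concretely, adjoin $m_w$ fresh ancilla qubits, prepare the state $2^{-m_w/2}\sum_{i\in\{0,1\}^{m_w}}\ket{i}\ket{i}$ on the proof register together with these ancillas (by a layer of Hadamards followed by CNOTs, a trivially uniform $O(\log\lvert w\rvert)$-space preparation), and then run $V_w$ on the proof register and its $h_w$ workspace qubits, never touching the ancillas. The resulting unitary circuit $V_w'$ acts on $m_w+h_w+m_w=O(\log\lvert w\rvert)$ qubits, still has $\mathrm{poly}(\lvert w\rvert)$ gates, and a short computation gives $\Pr[V_w'\text{ accepts }w]=2^{-m_w}\tr(M_w)$. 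On no-instances this equals $0$, so perfect soundness is preserved (this is exactly the point of \cite{kobayashi2003quantum}); on yes-instances it is at least $2^{-m_w}\lambda_1(M_w)\geq c(\lvert w\rvert)\,2^{-m_w}$, which is strictly positive, and indeed $1/\mathrm{poly}(\lvert w\rvert)$ since $m_w=O(\log\lvert w\rvert)$.

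Therefore $\{V_w'\}$ is a uniform family of unitary quantum logspace circuits with perfect soundness and positive completeness $c'(\lvert w\rvert)=c(\lvert w\rvert)\,2^{-m_w}$, i.e. $\mathcal{P}\in\mathsf{Q_U SPACE}(\log n)_{c',0}\subseteq\mathsf{NQ_U L}$. I do not anticipate a genuine obstacle: in contrast to the two-sided bounded-error setting, no completeness must be recovered, and perfect soundness is precisely the property that the mixed-state substitution leaves intact. The only points that deserve a line of care are verifying that the preparation layer keeps the family $\mathsf{DSPACE}(\log n)$-uniform and within $O(\log n)$ space, and noting that $2^{m_w}=\mathrm{poly}(\lvert w\rvert)$ so that $c'$ is an admissible completeness function for the definition of $\mathsf{NQ_U L}=\bigcup_{c:\mathbb{N}\to(0,1]}\mathsf{Q_U SPACE}(\log n)_{c,0}$.
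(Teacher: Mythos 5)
Your proof is correct and is essentially the proof the paper intends: the lemma is stated with its proof omitted as ``precisely analogous'' to \Cref{lemma:rqmalInRqul}, and your argument is exactly that analogue --- substitute the totally mixed state for Merlin's proof via a purification, observe that perfect soundness is preserved while the completeness drops only by a factor of $2^{-m_w}$, and skip the amplification step of \cite{watrous2001quantum} since $\mathsf{NQ_U L}$ only requires strictly positive acceptance probability. (One immaterial slip: $c'=c\cdot 2^{-m_w}$ need not be $1/\mathrm{poly}(\lvert w\rvert)$, since the completeness function of an $\mathsf{NQ_U MAL}$ protocol is an arbitrary function into $(0,1]$; but, as you note yourself, positivity is all that is needed.)
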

	
	\begin{lemma}\label{lemma:preciseSingularIsPreciseQMAL1Complete}
		$\mathsf{PreciseSINGULAR}$ is $\mathsf{PreciseQ_U MAL_1}$-complete.
	\end{lemma}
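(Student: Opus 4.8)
The plan is to mirror the proof of \Cref{lemma:singularIsQMAL1Complete}, adapting it to the ``precise'' regime in two ways: on the hardness side, perfect completeness is now available from the start, so on YES instances the relevant singular value is \emph{exactly} $0$; on the membership side, there is no longer a $1/poly$ promise gap, so the verifier need only have soundness error bounded away from $1$ (by a possibly exponentially small amount), which is exactly what $\mathsf{PreciseQ_U MAL_1}$ permits.

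For $\mathsf{PreciseQ_U MAL_1}$-hardness, take $\mathcal{P}=(\mathcal{P}_1,\mathcal{P}_0)\in\mathsf{Q_U MASPACE}(\log n)_{1,k}$ for some $\mathsf{DSPACE}(\log n)$-computable $k:\mathbb{N}\to[0,1)$, with verification circuits $V_w$ on $m_w+h_w=O(\log\lvert w\rvert)$ qubits and $t_w=poly(\lvert w\rvert)$ gates. I would build the Kitaev clock Hamiltonian $H_w\in\widehat{\pos}(d_w)$ on $\mathcal{H}_w=\mathbb{C}^{2^{m_w}}\otimes\mathbb{C}^{2^{h_w}}\otimes\mathbb{C}^{t_w+1}$, $d_w=poly(\lvert w\rvert)$, exactly as in the proof of \Cref{lemma:singularIsQMAL1Complete} (following \cite[Section 14.4]{kitaev1997quantum} and \cite[Lemma 21]{fefferman2018complete}), so that $\sigma_1(H_w)\leq r_0$ for an absolute constant $r_0$; after rescaling by $1/r_0$ one has $\sigma_1(H_w)\leq 1$, while $w\in\mathcal{P}_1\Rightarrow\sigma_{d_w}(H_w)=0$ (using completeness $c=1$) and $w\in\mathcal{P}_0\Rightarrow\sigma_{d_w}(H_w)\geq r_1\tfrac{1-\sqrt{k}}{t_w+1}>0$ for an absolute constant $r_1$. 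Since $\mathsf{PreciseSINGULAR}$ has $\epsilon(n)=0$, the instance $\langle H_w\rangle$ satisfies its promise trivially, $\mathsf{SINGULAR}(\langle H_w\rangle)=\mathcal{P}(w)$, and $w\mapsto\langle H_w\rangle$ is logspace computable; the only change from \Cref{lemma:singularIsQMAL1Complete} is that $1-\sqrt{k}$ may now be exponentially small, which is harmless precisely because we no longer require a $1/poly$ gap.

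For membership $\mathsf{PreciseSINGULAR}\in\mathsf{PreciseQ_U MAL_1}$, I would have Merlin send a purported kernel vector $\ket{\psi}$ of the input $A\in\widehat{\herm}(n)$ (using $O(\log n)$ proof qubits) and have Arthur run the quantum-walk Hamiltonian simulation of Childs \cite{childs2010relationship,berry2014exponential} for $A$ together with $\Theta(\log n)$-bit phase estimation, in the style of \cite[Lemma 19, Proposition 32]{fefferman2018complete}, accepting iff the phase estimate encodes $\tfrac14$ or $\tfrac34$ in units of $2\pi$ (the walk phases $\pm\pi/2$ attached to eigenvalue $0$ of $A$). Two points drive the analysis. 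First, the walk realizes the \emph{exact} spectral map $\lambda\mapsto e^{\pm i\arccos\lambda}$ with no Trotter error, so an exact $0$ eigenvalue of $A$ produces walk phase exactly $\pm\pi/2$; since $\tfrac14$ and $\tfrac34$ are dyadic rationals (with denominator $4$), phase estimation with $\geq 2$ output bits reports them \emph{deterministically} on the corresponding eigenspace, so when $A$ is singular and $\ket{\psi}\in\ker A$ the verifier accepts with probability $1$. Second, if $A$ is nonsingular then, as $A\in\widehat{\herm}(n)$ has Gaussian-rational entries with $2^{O(n)}$-bounded numerators and denominators and $\sigma_1(A)\leq 1$, the nonzero real $\lvert\det(A)\rvert=\prod_j\lvert\lambda_j(A)\rvert$ is at least $2^{-poly(n)}$, and since $\lvert\lambda_j(A)\rvert\leq\sigma_1(A)\leq 1$ this forces $\sigma_n(A)=\min_j\lvert\lambda_j(A)\rvert\geq\lvert\det(A)\rvert\geq 2^{-poly(n)}$; hence, for every $\ket{\psi}$, every eigenvalue of $A$ in its support lies at distance $\geq 2^{-poly(n)}$ from $0$, its walk phase lies at distance $\geq 2^{-poly(n)}$ from $\pm\pi/2$, and the probability that the $\Theta(\log n)$-bit phase estimate lands in an accepting outcome is \emph{strictly below} $1$ (of the form $1-2^{-poly(n)}$, uniformly over NO instances of a given length). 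This yields a unitary verifier running in $O(\log n)$ space and $poly(n)$ time with perfect completeness and soundness $k(n)<1$. (Unlike in \Cref{lemma:singularIsQMAL1Complete}, this cannot be boosted to soundness $\tfrac12$: that would require resolving phases to exponential precision, which an $O(\log n)$-space, $poly(n)$-time phase estimation cannot do, and the $poly$-conditioned promise is precisely what removes this obstruction there.)

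The main obstacle is the membership direction, specifically the \emph{perfect completeness} requirement: this is what forces the use of Childs' quantum walk, whose eigenvalue relation is exact, rather than a Trotter-type simulation, together with the arithmetic fact that the target phase $\pi/2$ is a dyadic multiple of $2\pi$ and is therefore pinned down exactly by finite-bit phase estimation. Everything else — logspace-uniformity of the walk and phase-estimation circuits, handling the two $\pm\pi/2$ eigenbranches produced by the walk, and the $2^{-poly(n)}$ lower bound on $\sigma_n(A)$ for a nonsingular $O(n)$-bit Hermitian matrix — is routine and parallels \cite{fefferman2018complete} almost verbatim, which is why the lemma is stated with its proof omitted.
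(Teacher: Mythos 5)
Your proposal is correct and follows essentially the route the paper intends: the paper omits this proof as ``precisely analogous'' to \Cref{lemma:singularIsQMAL1Complete}, and your argument is exactly that adaptation --- the Kitaev clock Hamiltonian for hardness (now with a possibly exponentially small but strictly positive smallest singular value on NO instances, which is harmless since $\mathsf{PreciseSINGULAR}$ has $\epsilon=0$), and the Childs-walk-plus-phase-estimation verifier for membership with perfect completeness. Your observation that a nonsingular $A\in\widehat{\herm}(n)$ with $O(n)$-bit Gaussian-rational entries satisfies $\sigma_n(A)\geq 2^{-poly(n)}$, giving soundness $1-2^{-poly(n)}<1$, is precisely the point that lets the analogous argument go through in the precise regime.
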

	
	We then obtain the following relationships between the various one-sided unbounded-error classes.
	
	\begin{lemma}\label{thm:nqlEqualsNqulEqualsNqmalEqualsEtc}
		$\mathsf{NQ_U MAL}=\mathsf{NQ_U L}=\mathsf{NQL}=\mathsf{coPreciseQ_U MAL_1}=\mathsf{coC_{=}L}$.
	\end{lemma}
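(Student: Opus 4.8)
The plan is to follow the template of the proof of \Cref{thm:rqlVsRqulVsQmal1Etc}, with the classical counting class $\mathsf{C_{=}L}$ playing the role that $\mathsf{coQ_U MAL_1}$ played there. The easy block of relations comes first: $\mathsf{NQ_U L}\subseteq\mathsf{NQ_U MAL}$ (Arthur ignores Merlin), $\mathsf{NQ_U MAL}\subseteq\mathsf{NQ_U L}$ (\Cref{lemma:nqmalInNqul}), and $\mathsf{NQ_U L}\subseteq\mathsf{NQL}$ (immediate from the definitions), which already give $\mathsf{NQ_U MAL}=\mathsf{NQ_U L}\subseteq\mathsf{NQL}$.

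The crux is to pin down $\mathsf{PreciseSINGULAR}$ — which, since its $\epsilon$ is identically $0$ and it carries no conditioning parameter, is simply the problem ``is the Hermitian matrix $A$, with $\sigma_1(A)\le 1$, singular?'' — as being both $\mathsf{C_{=}L}$-complete and (by \Cref{lemma:preciseSingularIsPreciseQMAL1Complete}) $\mathsf{PreciseQ_U MAL_1}$-complete. For membership $\mathsf{PreciseSINGULAR}\in\mathsf{C_{=}L}$: clearing denominators (in logspace, as the entries have $O(n)$ bits) replaces $A$ by an integer matrix $A'$ with $\det A'=c\,\det A$ for a nonzero rational $c$, so $\sigma_n(A)=0\iff\det A'=0$, and the vanishing of an integer determinant is decidable in $\mathsf{C_{=}L}$ because $\det$ over $\mathbb{Z}$ is a $\mathsf{GapL}$ function. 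For hardness, the standard $\mathsf{C_{=}L}$-complete singularity problem for integer matrices $M$ \cite{santha1998verifying} reduces in $\mathsf{AC}^0$ to $\mathsf{PreciseSINGULAR}$ by passing to the Hermitian matrix $\bigl(\begin{smallmatrix}0&M\\ M^{\dagger}&0\end{smallmatrix}\bigr)$ and rescaling it by an explicit $2^{-poly(n)}$ factor so its largest singular value drops to at most $1$, which preserves (non)singularity. Since both $\mathsf{C_{=}L}$ and $\mathsf{PreciseQ_U MAL_1}$ are closed under $\leq^m_{\mathsf{L}}$ reductions, having a single problem complete for both forces $\mathsf{PreciseQ_U MAL_1}=\mathsf{C_{=}L}$, hence $\mathsf{coPreciseQ_U MAL_1}=\mathsf{coC_{=}L}$.

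It then remains to close the cycle with $\mathsf{NQL}\subseteq\mathsf{coC_{=}L}$ and $\mathsf{coC_{=}L}\subseteq\mathsf{NQ_U L}$. For the first, \Cref{lemma:vPreciseMatPowIsConqlHard} gives that $poly\text{-conditioned-}\mathsf{vPreciseITMATPROD}$ is $\mathsf{coNQL}$-hard, and the reductions of \Cref{lemma:reduction:vIterMatProdToVMatPow,lemma:reduction:vMatPowToVMatInv,lemma:reduction:vMatInvToSingular} carry over to the $\mathsf{Precise}$ versions essentially verbatim (each maps an exact-equality instance to an exact-equality instance, so $\epsilon\equiv 0$ is preserved while the conditioning parameters change only polynomially), whence $\mathsf{PreciseSINGULAR}$ is $\mathsf{coNQL}$-hard; combined with $\mathsf{PreciseSINGULAR}\in\mathsf{C_{=}L}$ and closure of $\mathsf{C_{=}L}$ under $\leq^m_{\mathsf{L}}$ this yields $\mathsf{coNQL}\subseteq\mathsf{C_{=}L}$, i.e.\ $\mathsf{NQL}\subseteq\mathsf{coC_{=}L}$. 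For the second, I would invoke the space-bounded analogue of the Fenner--Green--Homer--Pruim / Fortnow--Rogers inclusion $\mathsf{coC_{=}P}\subseteq\mathsf{NQP}$: a $\mathsf{GapL}$-complete quantity — say a designated entry of an iterated product of integer matrices, equivalently the complement $\overline{\mathsf{PreciseSINGULAR}}$ of deciding nonsingularity — can be realized, up to an explicit $2^{-poly(n)}$ normalization, as the amplitude of a designated computational-basis state reached by a \emph{unitary} logspace quantum computation, so that accepting on that state gives a unitary quantum logspace algorithm with perfect soundness whose acceptance probability is nonzero exactly when the quantity is; thus $\overline{\mathsf{PreciseSINGULAR}}\in\mathsf{NQ_U L}$, and since $\overline{\mathsf{PreciseSINGULAR}}$ is $\mathsf{coC_{=}L}$-complete, $\mathsf{coC_{=}L}\subseteq\mathsf{NQ_U L}$.

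Putting the inclusions together, $\mathsf{NQ_U L}\subseteq\mathsf{NQL}\subseteq\mathsf{coC_{=}L}\subseteq\mathsf{NQ_U L}$, so these coincide, and with $\mathsf{NQ_U MAL}=\mathsf{NQ_U L}$ and $\mathsf{coPreciseQ_U MAL_1}=\mathsf{coC_{=}L}$ the fivefold equality follows. I expect the one step that is not pure reduction-chasing on top of \Cref{lemma:vPreciseMatPowIsConqlHard,lemma:nqmalInNqul,lemma:preciseSingularIsPreciseQMAL1Complete} and the double completeness of $\mathsf{PreciseSINGULAR}$, namely $\mathsf{coC_{=}L}\subseteq\mathsf{NQ_U L}$, to be the main obstacle: one must actually exhibit the quantum procedure, encoding a $\mathsf{GapL}$-complete quantity as an amplitude within a logspace (hence $poly(n)$-time, $O(\log n)$-qubit) \emph{unitary} computation with an explicitly controlled normalization so that perfect soundness is retained.
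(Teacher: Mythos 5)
Your proposal follows essentially the same route as the paper's proof: the easy block $\mathsf{NQ_U MAL}=\mathsf{NQ_U L}\subseteq\mathsf{NQL}$ via \Cref{lemma:nqmalInNqul}, the $\mathsf{coNQL}$-hardness of $\mathsf{PreciseSINGULAR}$ through the chain of v-reductions with $\epsilon\equiv 0$, the double completeness of $\mathsf{PreciseSINGULAR}$ forcing $\mathsf{PreciseQ_U MAL_1}=\mathsf{C_{=}L}$, and closing the cycle with $\mathsf{coC_{=}L}\subseteq\mathsf{NQ_U L}$. The only difference is that the two steps you sketch by hand --- the $\mathsf{C_{=}L}$-completeness of $\mathsf{PreciseSINGULAR}$ and the inclusion $\mathsf{coC_{=}L}\subseteq\mathsf{NQ_U L}$ --- are handled in the paper by citing Allender--Ogihara \cite{allender1996relationships} and Watrous's $\mathsf{NQ_U L}=\mathsf{coC_{=}L}$ \cite{watrous1999space} (precisely the space-bounded analogue of $\mathsf{coC_{=}P}\subseteq\mathsf{NQP}$ that you identify as the main obstacle), respectively.
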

	\begin{proof}
		Trivially, $\mathsf{NQ_U L} \subseteq \mathsf{NQL}$ and $\mathsf{NQ_U L} \subseteq \mathsf{NQ_U MAL}$. By \Cref{lemma:nqmalInNqul}, $\mathsf{NQ_U MAL} \subseteq \mathsf{NQ_U L}$, which implies $\mathsf{NQ_U MAL} = \mathsf{NQ_U L}$. By \Cref{lemma:vPreciseMatPowIsConqlHard}, $poly\text{-conditioned-}\mathsf{vPreciseITMATPROD}$ is $\mathsf{coNQL}$-hard; thus, by  \Cref{lemma:reduction:vIterMatProdToVMatPow,lemma:reduction:vMatPowToVMatInv,lemma:reduction:vMatInvToSingular}, $\mathsf{PreciseSINGULAR}$ is $\mathsf{coNQL}$-hard. By \Cref{lemma:preciseSingularIsPreciseQMAL1Complete}, $poly\text{-conditioned-}\mathsf{SINGULAR} \in \mathsf{PreciseQ_U MAL_1}$, which implies $\mathsf{NQL} \subseteq \mathsf{coPreciseQ_U MAL_1}$. By \cite[Theorem 14]{allender1996relationships}, $\mathsf{PreciseSINGULAR}$ is $\mathsf{C_{=}L}$-complete, and by \Cref{lemma:preciseSingularIsPreciseQMAL1Complete}, $\mathsf{PreciseSINGULAR}$ is $\mathsf{PreciseQ_U MAL_1}$-complete; therefore, $\mathsf{C_{=}L}=\mathsf{PreciseQ_U MAL_1}$. Thus far, we have shown $\mathsf{NQ_U MAL}=\mathsf{NQ_U L} \subseteq \mathsf{NQL} \subseteq \mathsf{coPreciseQ_U MAL_1}=\mathsf{coC_{=}L}$. To complete the proof, note that, by \cite[Theorem 4.14]{watrous1999space}, $\mathsf{NQ_U L}=\mathsf{coC_{=}L}$.
	\end{proof}
		
	\begin{theorem}\label{thm:nquspaceEqualsNqspaceEqualsNqmaspaceEtc}
	For any space-constructible function $s: \mathbb{N} \rightarrow \mathbb{N}$, where $s(n)=\Omega(\log n)$, we have $$\mathsf{NQ_U MASPACE}(s)=\mathsf{NQ_U SPACE}(s)=\mathsf{NQSPACE}(s)=\mathsf{coPreciseQ_U MA_1 SPACE}(s)=\mathsf{coC_{=}SPACE}(s).$$
\end{theorem}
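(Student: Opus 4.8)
The plan is to obtain \Cref{thm:nquspaceEqualsNqspaceEqualsNqmaspaceEtc} from its logarithmic-space instance, \Cref{thm:nqlEqualsNqulEqualsNqmalEqualsEtc}, by a padding argument of exactly the kind used for \Cref{thm:intro:bquspaceEqualsBqspace} (spelled out in \Cref{sec:appendix:proofOfTheorem1}) and for \Cref{thm:rquspaceVsRqspaceVsRqmaspaceEtc}. The inclusions $\mathsf{NQ_U SPACE}(s)\subseteq\mathsf{NQSPACE}(s)$ and $\mathsf{NQ_U SPACE}(s)\subseteq\mathsf{NQ_U MASPACE}(s)$ are immediate from the definitions (a unitary circuit is a general one, and an $\mathsf{NQ_U}$-computation is an $\mathsf{NQ_U MA}$-verifier that ignores its proof), so all the work is in transporting a space-$s$ computation down to logspace, invoking \Cref{thm:nqlEqualsNqulEqualsNqmalEqualsEtc}, and transporting it back up.

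First I would fix a promise problem $\mathcal{P}=(\mathcal{P}_1,\mathcal{P}_0)$ lying in one of the five space-$s$ classes in the statement and introduce its padded version $\mathcal{P}^{\mathrm{pad}}$, which sends an instance $w$ of size $n$ to $\langle w,1^{2^{s(n)}}\rangle$, an instance of size $N=\Theta(2^{s(n)})$. Because $s$ is space-constructible with $s(n)=\Omega(\log n)$, both the map $w\mapsto\langle w,1^{2^{s(n)}}\rangle$ and the predicate ``is this string of the padded form'' are computable in deterministic space $O(s(n))=O(\log N)$; consequently a space-$O(s(n))$, time-$2^{O(s(n))}$ computation for $\mathcal{P}$ — of whichever of the five types — becomes a space-$O(\log N)$, time-$poly(N)$ computation of the same type for $\mathcal{P}^{\mathrm{pad}}$, with the $\mathsf{DSPACE}(s)$-uniformity of the underlying circuit family turning into $\mathsf{DSPACE}(\log N)$-uniformity. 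Thus $\mathcal{P}^{\mathrm{pad}}$ lands in the corresponding logspace class, and \Cref{thm:nqlEqualsNqulEqualsNqmalEqualsEtc} then places $\mathcal{P}^{\mathrm{pad}}$ simultaneously in $\mathsf{NQ_U MAL}$, $\mathsf{NQ_U L}$, $\mathsf{NQL}$, $\mathsf{coPreciseQ_U MAL_1}$, and $\mathsf{coC_{=}L}$.

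Next I would un-pad. Given a logspace computation of any of those five types deciding $\mathcal{P}^{\mathrm{pad}}$, I would build a space-$O(s(n))$, time-$2^{O(s(n))}$ computation of the same type for $\mathcal{P}$ by simulating it on the \emph{virtual} input $\langle w,1^{2^{s(n)}}\rangle$: whenever the simulated machine reads an input position, the symbol is either a bit of $w$ (available directly) or the padding symbol $1$, and deciding which — together with the observation that any such position index fits in $O(s(n))$ bits — only requires comparing the index against $n$ and against $2^{s(n)}$, which space-constructibility of $s$ makes possible within $O(s(n))$ space. This simulation spends $O(\log N)=O(s(n))$ work space and $poly(N)=2^{O(s(n))}$ time, preserves the form of acceptance (perfect soundness for the $\mathsf{NQ}$ and $\mathsf{Precise}$ variants, the Merlin-Arthur structure for the $\mathsf{MA}$ variants), and preserves uniformity, so $\mathcal{P}$ lies in the space-$s$ version of that class. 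Chaining pad-up, \Cref{thm:nqlEqualsNqulEqualsNqmalEqualsEtc}, and pad-down then yields every equality asserted.

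I do not anticipate a real obstacle; the only care needed is bookkeeping — checking that each of the five machine models survives simulation on a virtual padded input without exceeding its space or time budget, and that the two ``$\mathsf{co}$''-classes behave correctly under padding (they do, since complementation commutes with the padding map and the exponential-time restriction holds on both sides). The single place the hypotheses are essential is the space-constructibility of $s$ with $s(n)=\Omega(\log n)$: this is exactly what lets both the pad-up and the pad-down machines lay out and index the padding region within $O(s(n))$ space, and without it the reduction fails, just as in \Cref{thm:intro:bquspaceEqualsBqspace}.
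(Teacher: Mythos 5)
Your proposal is correct and matches the paper's argument: the paper proves this theorem exactly by invoking \Cref{thm:nqlEqualsNqulEqualsNqmalEqualsEtc} together with a padding argument analogous to that of \Cref{thm:intro:bquspaceEqualsBqspace} (spelled out in \Cref{sec:appendix:proofOfTheorem1}), which is precisely the pad-up/apply-logspace-result/pad-down scheme you describe.
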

	\begin{proof}
		Follows from \Cref{thm:nqlEqualsNqulEqualsNqmalEqualsEtc} and a padding argument analogous to that of \Cref{thm:intro:bquspaceEqualsBqspace}.
	\end{proof}
	
	\section{Discussion}\label{sec:discussion}
	
	We conclude by stating a few interesting open problems related to our work. In \Cref{thm:intro:bquspaceEqualsBqspace} we established the equivalence of unitary quantum space, general quantum space, and space-bounded quantum Merlin-Arthur proof systems, in the two-sided bounded-error case. We obtained an analogous equivalence for one-sided unbounded-error in \Cref{thm:nquspaceEqualsNqspaceEqualsNqmaspaceEtc}. However, in the case of one-sided bounded-error, we only have the partial results of \Cref{thm:rquspaceVsRqspaceVsRqmaspaceEtc}. In particular, specializing to the case of logspace, we have $\mathsf{BQL} = \mathsf{BQ_U L} = \mathsf{Q_U MAL}$ in the two-sided bounded-error case (\Cref{thm:bqlEqualsBqulEqualsQmal}), and we have 	$\mathsf{RQ_U MAL} = \mathsf{RQ_U L} \subseteq \mathsf{RQL} \subseteq \mathsf{coQ_U MAL_1}$ in the one-sided bounded-error case (\Cref{thm:rqlVsRqulVsQmal1Etc}). It is naturally to ask if the analogues of results known to hold for two-sided bounded-error also hold for one-sided bounded-error.
	
	\begin{openProb} 
		Is $\mathsf{RQ_U L} = \mathsf{RQL}$? Is $\mathsf{RQL} = \mathsf{coQ_U MAL_1}$? 
	\end{openProb}
	
	By the well-known result of Zachos and F\"{u}rer \cite{zachos1987probabilistic}, $\mathsf{MA}=\mathsf{MA_1}$; that is to say, it is possible to achieve perfect completeness for \textit{classical} (polynomial time) Merlin-Arthur proof systems. On the other hand, the question of whether or not it is possible to achieve perfect completeness for \textit{quantum} (polynomial time) Merlin-Arthur proof systems (i.e., is $\mathsf{QMA}=\mathsf{QMA_1}$?) remains open (see, for instance, \cite{aaronson2009perfect,aharonov2002quantum,bravyi2011efficient,jordan2012achieving} for previous discussion). We next consider the logspace analogue of this question.
	
	\begin{openProb} 
		Is $\mathsf{QMAL}=\mathsf{QMAL_1}$?
	\end{openProb}
	
	A possible explanation for the difficulty of proving $\mathsf{QMA}=\mathsf{QMA_1}$ (if these classes are indeed equal) was provided by Aaronson's result \cite{aaronson2009perfect} that there is a \textit{quantum} oracle $\mathcal{U}$ such that $\mathsf{QMA}^{\mathcal{U}} \neq \mathsf{QMA_1}^{\mathcal{U}}$; therefore, any proof of $\mathsf{QMA}=\mathsf{QMA_1}$ must use a technique that is \textit{quantumly nonrelativizing}. Note that the technique used by Zachos and F\"{u}rer \cite{zachos1987probabilistic} to show $\mathsf{MA}=\mathsf{MA_1}$ is (classically) relativizing. It is not hard to see that Aaronson's argument can also be used to produce a quantum oracle $\mathcal{U}$ such that $\mathsf{QMAL}^{\mathcal{U}} \neq \mathsf{QMAL_1}^{\mathcal{U}}$, and so any proof of $\mathsf{QMAL}=\mathsf{QMAL_1}$ must also use quantumly nonrelativizing techniques. We emphasize that the techniques used in this paper to show our results concerning new inclusions between various complexity classes (i.e., the various reductions between  linear-algebraic problems shown in this paper) are \textit{quantumly nonrelativizing}. Moreover, it is known that it is possible to achieve perfect completeness in quantum Merlin-Arthur proof systems that have a \textit{classical witness}; that is to say, $\mathsf{QCMA}=\mathsf{QCMA_1}$ \cite{jordan2012achieving}. Note that, trivially, $\mathsf{BQ_U L} \subseteq \mathsf{QCMAL} \subseteq \mathsf{QMAL}$. Thus, the known equality $\mathsf{BQ_U L} = \mathsf{QMAL}$ immediately implies $\mathsf{QCMAL}=\mathsf{QMAL}$. Therefore, $\mathsf{QMAL}=\mathsf{QMAL_1} \Leftrightarrow \mathsf{QCMAL}=\mathsf{QMAL_1} \Leftarrow \mathsf{QCMAL}=\mathsf{QCMAL_1}$.
	
 	We conclude with a general question.
	
	\begin{openProb}
		What further relationships can be established between $\mathsf{BQL}$ and other natural logspace complexity classes (e.g., $\mathsf{\# L}, \mathsf{GapL}, \mathsf{L}/poly$, etc.)?
	\end{openProb}
	
	\section*{Acknowledgments}
	
	B.F. and Z.R acknowledge support from AFOSR (YIP number
	FA9550-18-1-0148 and FA9550-21-1-0008). B.F additionally acknowledges support from the National Science Foundation under Grant CCF-2044923 (CAREER). We would like to thank Dieter van Melkebeek, as well as the anonymous reviewers, for many helpful comments on a preliminary draft.
	
	\bibliographystyle{plainurl}
	\let\OLDthebibliography\thebibliography
	\renewcommand\thebibliography[1]{
		\OLDthebibliography{#1}
		\setlength{\parskip}{0pt}
		\setlength{\itemsep}{0pt plus 0.3ex}
	}
	\bibliography{references} 
	
	\appendix
	
	\section{A TM-based Proof of BPL \texorpdfstring{$\subseteq$}{in} B\texorpdfstring{Q\textsubscript{U}}{QU}L=BQL}\label{sec:appendix:turingMachineVersionOfBqlEqualsBqul}
	
	While, trivially, $\mathsf{BPL} \subseteq \mathsf{BQL}$, it is not obvious, \textit{a priori}, that $\mathsf{BPL} \subseteq \mathsf{BQ_U L}$. To the best of our knowledge, the strongest partial result in this direction is the classic result of Watrous \cite[Theorem 4.12]{watrous1999space}, which showed that $\mathsf{BPL}$ is contained in a variant of $\mathsf{BQ_U L}$ in which there is no bound on the running time of the QTM. By \Cref{thm:bqulComplete}, $poly\text{-conditioned-}\mathsf{MATPOW} \in \mathsf{BQ_U L}$. As we next observe, this implies $\mathsf{BPL} \subseteq \mathsf{BQ_U L}$ and, more strongly, $\mathsf{BQL} = \mathsf{BQ_U L}$. Of course, the statement $\mathsf{BQL} = \mathsf{BQ_U L}$ immediately implies $\mathsf{BPL} \subseteq \mathsf{BQ_U L}$; nevertheless, we will first show, directly, that $\mathsf{BPL} \subseteq \mathsf{BQ_U L}$. 
	
	\begin{proposition}\label{prop:bplInBqul}
		$\mathsf{BPL} \subseteq \mathsf{BQ_U L}$.
	\end{proposition}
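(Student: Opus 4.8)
The plan is to reduce $\mathsf{BPL}$ to $poly\text{-conditioned-}\mathsf{MATPOW}$, which lies in $\mathsf{BQ_U L}$ by \Cref{thm:bqulComplete}, and then invoke closure of $\mathsf{BQ_U L}$ under logspace many-one reductions. Concretely, given a promise problem $\mathcal{P}=(\mathcal{P}_1,\mathcal{P}_0)\in\mathsf{BPL}$, decided by a probabilistic TM running in space $O(\log n)$ and time $t(n)=poly(n)$ with the usual $\tfrac23/\tfrac13$ thresholds, I would associate to each input $w$ the $N\times N$ column-stochastic transition matrix $M=M_w$ of the induced Markov chain on configurations, where $N=poly(\lvert w\rvert)$, modified so that accepting and rejecting configurations are absorbing. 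Then $M$ is entrywise a dyadic rational (denominators bounded by the coin-flip depth), so $M\in\widehat{\mat}(N)$, and the probability that the TM accepts $w$ equals $M^{t(\lvert w\rvert)}[c_{\mathrm{acc}},c_{\mathrm{start}}]$ for the appropriate indices.

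Next I would verify the well-conditioning promise. Every power $M^j$ is again column-stochastic, so $\lVert M^j\rVert_1=1$ and $\lVert M^j\rVert_\infty\le N$, whence $\sigma_1(M^j)\le\sqrt{\lVert M^j\rVert_1\lVert M^j\rVert_\infty}\le\sqrt{N}=poly(\lvert w\rvert)$ for all $j\in[t(\lvert w\rvert)]$; thus $\kappa=\sqrt{N}$ satisfies $\sigma_1(M^j)\le\kappa$. In the yes case $M^{t}[c_{\mathrm{acc}},c_{\mathrm{start}}]\ge\tfrac23$ and in the no case it is $\le\tfrac13$, so with $b=\tfrac23$ and $\epsilon=\tfrac13$ the entry lies in $[0,b-\epsilon]\cup[b,\kappa]$ (using $\kappa\ge1$). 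Hence $\langle M_w,c_{\mathrm{acc}},c_{\mathrm{start}},\tfrac23\rangle\in\mathsf{MATPOW}_{N,t,\sqrt{N},3}$ and $\mathsf{MATPOW}$ of this instance equals $\mathcal{P}(w)$; since $t,\sqrt{N}=poly(\lvert w\rvert)$, this is an instance of $poly\text{-conditioned-}\mathsf{MATPOW}$.

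Finally, constructing $M_w$ from $w$ — enumerating configurations, reading off the transition function, and writing out the constantly many nonzero entries of each column — is doable in deterministic logspace, so this is a logspace many-one reduction $\mathcal{P}\leq_{\mathsf{L}}^m poly\text{-conditioned-}\mathsf{MATPOW}$. Since $poly\text{-conditioned-}\mathsf{MATPOW}\in\mathsf{BQ_U L}$ by \Cref{thm:bqulComplete}, and $\mathsf{BQ_U L}$ is closed under logspace many-one reductions (compose the $\mathsf{DSPACE}(\log n)$-uniform circuit family with the logspace-computable reduction), we conclude $\mathcal{P}\in\mathsf{BQ_U L}$. The only points requiring any care are the norm bound $\sigma_1(M^j)\le\sqrt{N}$ for stochastic matrices and the logspace-computability of the reduction, both routine; there is no real obstacle, as the substance is carried entirely by \Cref{thm:bqulComplete}.
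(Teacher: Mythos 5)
Your proposal is correct and follows essentially the same route as the paper's proof: both reduce $\mathcal{P}\in\mathsf{BPL}$ to $poly\text{-conditioned-}\mathsf{MATPOW}$ via the logspace-constructible stochastic configuration-transition matrix, bound $\sigma_1(M^j)\le\sqrt{N}$ using stochasticity of all powers, and invoke \Cref{thm:bqulComplete}. The only cosmetic difference is that you spell out the $\sigma_1(A)\le\sqrt{\lVert A\rVert_1\lVert A\rVert_\infty}$ justification and the absorbing-state modification, which the paper delegates to citations.
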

	\begin{proof}
		Suppose $\mathcal{P}=(\mathcal{P}_1,\mathcal{P}_0) \in \mathsf{BPL}$. Then there is some probabilistic TM $M$ that recognizes $\mathcal{P}$ with two-sided error $\leq \frac{1}{3}$ within time $t(n)=n^{O(1)}$ and space $s(n)=O(\log n)$, for any input $w \in \mathcal{P}$ of length $n=\lvert w \rvert$. Let $\lvert M \rvert$ denote the size of the finite control of $M$, let $\Gamma$ denote the work-tape alphabet of $M$, and let $c(n)=\lvert M \rvert (n+2) (s(n)) \lvert \Gamma \rvert^{s(n)}=n^{O(1)}$ denote the number of possible configurations of $M$ on inputs of length $n$. It is well-known that, for input $w \in \mathcal{P}$, one may construct, in deterministic space $O(\log(\lvert w \rvert))$ a stochastic matrix $A_w \in \widehat{\mat}(c(n))$ and values $x_w,y_w \in [c(n)]$ such that $A_w^t[x_w,y_w]$ is precisely the probability that $M$ accepts $w$ within $t$ steps \cite{nisan1992pseudorandom,doron2017approximating}; this implies $\mathsf{MATPOW}(\langle A_w,x_w,y_w,\frac{2}{3} \rangle)= \mathcal{P}(w)$. Note that, as $A_w$ is stochastic, so is $A_w^t$, $\forall t \in \mathbb{N}$; this implies $\sigma_1(A_w^t) \leq \sqrt{c(n)}=n^{O(1)}$, which then implies $\langle A_w,x_w,y_w,\frac{2}{3} \rangle \in \mathsf{MATPOW}_{c(n),t(n),\sqrt{c(n)},3}$. By \Cref{thm:bqulComplete}, $\mathsf{MATPOW}_{c(n),t(n),\sqrt{c(n)},3} \in \mathsf{BQ_U L}$, which implies $\mathcal{P} \in \mathsf{BQ_U L}$.
	\end{proof}
	
	By applying an analogous argument to general quantum Turing machines (where the stochastic matrix that describes a single step of the computation of a probabilistic TM is replaced by the quantum channel that describes a single step of the computation of a quantum TM), we may then show that $\mathsf{BQL} \subseteq \mathsf{BQ_U L}$ (and, therefore, that $\mathsf{BQL} = \mathsf{BQ_U L}$). Here, $\mathsf{BQL}$ is defined in terms of a logspace \textit{quantum Turing machine} (QTM), as was the case in, for instance \cite{watrous1999space,watrous2001quantum,watrous2003complexity,watrous2009encyclopedia,ta2013inverting,melkebeek2012time,perdrix2006classically,jozsa2010matchgate}, rather than the equivalent model of a uniform family of \textit{general quantum circuits} used in this paper. 
	
	For concreteness, we use the \textit{classically controlled} logspace (general) QTM defined by Watrous \cite{watrous2003complexity} (with the minor alteration that we require all transition amplitudes of the QTM to be computable in $\mathsf{L}$); however, we note that our result would apply equally well to any ``reasonable" logspace QTM model that is classically controlled (this includes all models considered in all of the papers cited above). In brief, such a QTM $M$ consists of a (classical) finite control, an internal quantum register of constant size, a classical ``measurement" register of constant size, and three tapes: (1) a read-only input tape that, on any input $w$, contains the string $\#_L w \#_R$, where $\#_L$ and $\#_R$ are special symbols that serve as left and right end-markers, (2) a read/write classical work tape consisting of $s(\lvert w \rvert)=O(\log \lvert w \rvert)$ cells, each of which holds a symbol from some finite alphabet $\Gamma$, and (3) a read/write quantum work tape, consisting of $s(\lvert w \rvert)=O(\log \lvert w \rvert)$ qubits. Each of the tapes has a single bidirectional head. At the start of the computation, both work-tapes are ``blank" (to be precise, each cell of the classical work tape contains some specified blank-symbol in $\Gamma$ and each qubit of the quantum work tape is in the state $\ket{0}$); each qubit of the internal quantum register is also in the state $\ket{0}$. Each step of the computation of $M$ involves applying a \textit{selective quantum operation} to the combined register consisting of the internal quantum register and the single qubit that is currently under the head of the quantum work tape; the particular choice of which selective quantum operation to perform may depend on the state of the finite control and the symbols currently under the heads of the input tape and classical work tape. The (classical) result of this quantum operation is stored in the measurement register. Then, depending on this result, as well as on the state of the finite control and the symbols currently under the heads of the input tape and classical work tape, the classical configuration of the machine evolves; to be precise, the state of the finite control is updated, a symbol is written on the classical work-tape, and the head of each work tape moves up to one cell in either direction. The machine accepts (resp.) rejects its input by entering a special (classical) accepting (resp. rejecting state). See \cite{watrous2003complexity} for a complete definition.
	
	\begin{proposition}\label{prop:turingMachineVersionOfBqlEqualsBqul}
		$\mathsf{BQL}=\mathsf{BQ_U L}$. 
	\end{proposition}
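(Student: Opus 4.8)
The plan is to follow the proof of \Cref{prop:bplInBqul} almost verbatim, with the column-stochastic transition matrix of a probabilistic TM replaced by the natural representation of the quantum channel that implements a single step of the QTM, and then to invoke \Cref{thm:bqulComplete}, which supplies $poly\text{-conditioned-}\mathsf{MATPOW}\in\mathsf{BQ_U L}$. The containment $\mathsf{BQ_U L}\subseteq\mathsf{BQL}$ is the standard equivalence of the circuit and QTM models: a $\mathsf{DSPACE}(\log n)$-uniform family of unitary circuits is simulated by a classically controlled logspace QTM that uses its uniformity machine to produce the circuit gate by gate, applies each gate to the designated qubits of its quantum work tape, and measures the first qubit at the end (no intermediate measurements, so this QTM is in fact unitary); see \cite{watrous2003complexity,melkebeek2012time}. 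So the real work is the containment $\mathsf{BQL}\subseteq\mathsf{BQ_U L}$.

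For that direction, I would take $\mathcal{P}=(\mathcal{P}_1,\mathcal{P}_0)\in\mathsf{BQL}$ recognized, on inputs $w$ of length $n$, by a classically controlled logspace general QTM $M$ with time bound $t(n)=n^{O(1)}$ and space bound $s(n)=O(\log n)$, and assume as in \Cref{lemma:itMatProdIsBqlHard} that once $M$ halts it ``cleans up'': it records its answer, forces its classical configuration to a canonical accepting or rejecting configuration and every qubit of its quantum work tape and internal register to $\ket{0}$, and then loops, so that running $M$ for exactly $t(n)$ steps is equivalent. The structural point is that, because $M$ is classically controlled, its global state after any number of steps is a probabilistic mixture of pairs (classical configuration, quantum state on the $q(n)=s(n)+O(1)=O(\log n)$ qubits of the quantum work tape and internal register); equivalently it is a density matrix on $\mathcal{K}_n:=\mathbb{C}^{C(n)}\otimes\mathbb{C}^{2^{q(n)}}$ that is block-diagonal with respect to the first (classical) factor, where $C(n)=n^{O(1)}$ counts the classical configurations (finite-control state, classical work-tape contents, head positions, measurement-register value). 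A single step of $M$ --- a selective quantum operation on the internal register together with the qubit under the quantum head, with the choice of operation and the ensuing classical transition dictated by the classical part --- is then a quantum channel $\Phi_w\in\chan(D(n))$ for $D(n):=C(n)\,2^{q(n)}=n^{O(1)}$, it preserves block-diagonality, and its natural representation $K(\Phi_w)\in\widehat{\mat}(D(n)^2)$ is computable in $\mathsf{DSPACE}(\log n)$ from the (logspace-computable) transition function of $M$.

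Letting $x_w$ be the index of $\mathrm{vec}$ of the canonical accepting configuration (as a density matrix) and $y_w$ the index of $\mathrm{vec}$ of the initial configuration, the computation of \Cref{lemma:itMatProdIsBqlHard} gives $\Pr[M\text{ accepts }w]=\big(K(\Phi_w)^{t(n)}\big)[x_w,y_w]$, so $\mathsf{MATPOW}(\langle K(\Phi_w),x_w,y_w,\tfrac{2}{3}\rangle)=\mathcal{P}(w)$. Since $\Phi_w$ is a channel, so is $\Phi_w^{\,j}$ for every $j$, whence by \cite[Theorem 1]{roga2013entropic} $\sigma_1\big(K(\Phi_w)^{j}\big)=\sigma_1\big(K(\Phi_w^{\,j})\big)\le\sqrt{D(n)^2}=D(n)=n^{O(1)}$; thus $\langle K(\Phi_w),x_w,y_w,\tfrac{2}{3}\rangle\in poly\text{-conditioned-}\mathsf{MATPOW}$, which by \Cref{thm:bqulComplete} is in $\mathsf{BQ_U L}$, so $\mathcal{P}\in\mathsf{BQ_U L}$, and combining the two containments gives $\mathsf{BQL}=\mathsf{BQ_U L}$. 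The main obstacle --- the only step that is not routine bookkeeping --- will be verifying that a single step of the classically controlled QTM really is a channel on the polynomial-dimensional space $\mathcal{K}_n$ (and not on a space of dimension exponential in $D(n)$) with a logspace-computable natural representation: this is exactly where the ``classically controlled'' hypothesis enters, since it guarantees that no coherence ever builds up across distinct classical configurations, and it is of the same flavor as Watrous's configuration-space simulations \cite{watrous2003complexity} and the probabilistic argument of \Cref{prop:bplInBqul}.
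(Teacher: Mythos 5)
Your proposal is correct and follows essentially the same route as the paper's proof: both directions are handled identically, with the substantive containment $\mathsf{BQL}\subseteq\mathsf{BQ_U L}$ obtained by viewing one step of the classically controlled QTM as a quantum channel on the polynomial-dimensional configuration space $\mathbb{C}^{C(n)}\otimes\mathbb{C}^{2^{q(n)}}$, reducing acceptance probability to an entry of a power of its natural representation, bounding the singular values of all partial powers via \cite[Theorem 1]{roga2013entropic}, and invoking $poly\text{-conditioned-}\mathsf{MATPOW}\in\mathsf{BQ_U L}$ from \Cref{thm:bqulComplete}. The structural point you flag as the main obstacle (that classical control keeps the state block-diagonal so the single-step channel lives on a polynomial-dimensional space) is exactly the point the paper's proof also rests on.
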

	\begin{proof}
		Trivially, $\mathsf{BQL} \supseteq \mathsf{BQ_U L}$. We next show $\mathsf{BQL} \subseteq \mathsf{BQ_U L}$. Suppose $\mathcal{P}=(\mathcal{P}_1,\mathcal{P}_0) \in \mathsf{BQL}$. By definition, there is some QTM $M$ such that the following conditions are satisfied: (1) on any input $w \in \mathcal{P}$ of length $n=\lvert w \rvert$, $M$ runs in space at most $s(n)=O(\log n)$ (and hence time $t(n)=2^{O(s(n))}$), (2) if $w \in \mathcal{P}_1$, then $\Pr[M \text{ accepts } w]\geq \frac{2}{3}$, and (3) if $w \in \mathcal{P}_0$, then $\Pr[M \text{ accepts } w]\leq \frac{1}{3}$. 
		
		Consider running $M$ on some input of length $n$. At any particular point in time, the configuration of (a single probabilistic branch of) $M$ consists of the current (classical) state of the finite control, the (quantum) contents of the internal quantum register, the (classical) contents of the measurement register, the (classical) positions of the heads on the read only input-tape and the classical and quantum work-tapes, the current (classical) contents of the classical work-tape, and the current (quantum) contents of the quantum work-tape. Let $\lvert M \rvert$ denote the size of the finite control, let $b_m$ denote the number of bits of the measurement register, let $b_q$ denote the number of qubits of the internal quantum register, and let $\Gamma$ denote the classical work-tape alphabet. Let $C_n$ denote the set of all possible classical configurations of $M$ on inputs of length $n$, where $\lvert C_n \rvert=\lvert M \rvert 2^{b_m} (n+2) s(n)^2 \lvert \Gamma \rvert^{s(n)}=n^{O(1)}$. Each classical configuration $c \in C_n$ corresponds to the element $\ket{c}$ in the natural orthonormal basis of the Hilbert space $\mathbb{C}^{C_n}$. Let $Q_n$ denote the set of $\lvert Q_n \rvert=2^{s(n)+b_q}=n^{O(1)}$ quantum basis states corresponding to the quantum work-tape and internal quantum register. The contents of the quantum work-tape and the internal quantum register is then described by some $\ket{\psi} \in \mathbb{C}^{Q_n}$. Then each configuration of $M$ on an input of length $n$ corresponds to an element $\ket{c}\ket{\psi}$ of the Hilbert space $\mathcal{H}_{M,n}=\mathbb{C}^{C_n} \otimes \mathbb{C}^{Q_n}$. Let $d(n)=\dim(\mathcal{H}_{M,n})=\lvert C_n \rvert \lvert Q_n \rvert =n^{O(1)}$.
		
		Consider some input $w \in \mathcal{P}$. Let $n=\lvert w \rvert$ denote the length of $w$, let $\Phi_{M,w} \in \chan(\mathcal{H}_{M,n})$ denote the quantum channel that corresponds to a single step of the computation of $M$ on $w$, and let $K(\Phi_{M,w}) \in \widehat{\mat}(d^2(n))$ denote the natural representation of $\Phi_{M,w}$. For any $t \in \mathbb{N}$, we have $\Phi_{M,w}^t \in \chan(\mathcal{H}_{M,n})$, which implies $\sigma_1((K(\Phi_{M,w}))^t)=\sigma_1(K(\Phi_{M,w}^t))\leq \sqrt{d(n)}=n^{O(1)}$ \cite[Theorem 1]{roga2013entropic}. Let $\ket{\psi_{start}^n} =\ket{c_{start}^n}\ket{q_{start}^n} \in \mathcal{H}_{M,n}$ denote the starting configuration of $M$ on an input of length $n$, where $c_{start}^n \in C_n$ is the classical part of the starting configuration, and $\ket{q_{start}^n}=\ket{0^{s(n)+b_q}} \in \mathbb{C}^{Q_n}$ is the quantum part. Without loss of generality we may, for convenience, assume that $M$ ``cleans-up" its workspace at the end of the computation, by returning both its classical and quantum work tapes to the ``blank" configuration described above; in particular, this implies that $M$ has a unique accepting configuration $\ket{\psi_{accept}^n} =\ket{c_{accept}^n}\ket{q_{start}^n} \in \mathcal{H}_{M,n}$ on any input of length $n$. Let $A_w=K(\Phi_{M,w}) \in \widehat{\mat}(d^2(n))$, $x_w=\text{vec}(\ket{\psi_{accept}^n}\bra{\psi_{accept}^n}) \in [d^2(n)]$, and $y_w=\text{vec}(\ket{\psi_{start}^n}\bra{\psi_{start}^n}) \in [d^2(n)]$. Then $A_w^t[x_w,y_w]$ is precisely the probability that $M$ accepts $w$ within $t$ steps. Thus, $\langle A_w,x_w,y_w,\frac{2}{3} \rangle \in \mathsf{MATPOW}_{d^2(n),t(n),\sqrt{d(n)},3}$ and $\mathsf{MATPOW}(\langle A_w,x_w,y_w,\frac{2}{3} \rangle)= \mathcal{P}(w)$. Finally, by \Cref{thm:bqulComplete}, $\mathsf{MATPOW}_{d^2(n),t(n),\sqrt{d(n)},3} \in \mathsf{BQ_U L}$, which implies $\mathcal{P} \in \mathsf{BQ_U L}$.
	\end{proof}	
	
	\section{Proof of Theorem 1}\label{sec:appendix:proofOfTheorem1}
	
	\begin{proof}[Proof of \Cref{thm:intro:bquspaceEqualsBqspace}]
		Clearly, $\mathsf{BQ_U SPACE}(s(n)) \subseteq \mathsf{BQSPACE}(s(n))$. The containment $\mathsf{BQSPACE}(s(n)) \subseteq \mathsf{BQ_U SPACE}(s(n))$ follows from \Cref{thm:bqlEqualsBqulEqualsQmal} and a standard padding argument, which we now state. 	Suppose $\mathcal{P}=(\mathcal{P}_1,\mathcal{P}_0) \in \mathsf{BQSPACE}(s(n))$. By definition, there is some $\mathsf{DSPACE}(s(n))$-uniform family of general quantum circuits $\{\Phi_w=(\Phi_{w,1},\ldots,\Phi_{w,t_w}):w \in \mathcal{P}\}$, where $\Phi_w$ acts on $h_w=O(s(\lvert w \rvert))$ qubits and has $t_w=2^{O(s(\lvert w \rvert))}$ gates, such that if $w \in \mathcal{P}_1$, then $\Pr[\Phi_w \text{ accepts } w]\geq \frac{2}{3}$, and if $w \in \mathcal{P}_0$, then $\Pr[\Phi_w \text{ accepts } w]\leq \frac{1}{3}$. Let $M$ denote a deterministic Turing machine (DTM) that produces this family of circuits within the stated space bound. Let $\Sigma$ denote the finite alphabet over which $\mathcal{P}$ is defined, and assume, without loss of generality, that $\{0,1\} \subseteq \Sigma$.
		
		We define $\mathcal{P}^{\log}=(\mathcal{P}_1^{\log},\mathcal{P}_0^{\log}) \subseteq \Sigma^*$ such that $\mathcal{P}_j^{\log}=\{w01^{2^{s(\lvert w \rvert)}}:w \in \mathcal{P}_j\}$, for $j \in \{0,1\}$. We next show that $\mathcal{P}^{\log} \in \mathsf{BQL}$, by exhibiting a family of general quantum circuits $\{\Phi_x^{\log}=(\Phi_{x,1}^{\log},\ldots,\Phi_{x,t_x^{\log}}^{\log}):x \in \mathcal{P}^{\log}\}$, with the appropriate parameters, that recognizes $\mathcal{P}^{\log}$. Begin by noticing that, as $s$ is space-constructible, there is a DTM $D$ that uses space $O(\log n)$ on all inputs of length $n$, where, on any input $x \in \Sigma^*$, $D$ checks if $x=w01^{2^{s(\lvert w \rvert)}}$, for some $w \in \Sigma^*$. If $x$ is of this form, then $D$ marks the rightmost symbol of $w$; otherwise, $D$ rejects. We then construct a DTM $M'$ which, on input $x \in \Sigma^*$ produces $\Phi_x^{\log}$, as follows. First, $M'$ runs $D$. If $D$ rejects, then $M'$ outputs a trivial single-gate circuit, which acts on a single qubit and always rejects. Otherwise (i.e., when the input $x$ is of the form $w01^{2^{s(\lvert w \rvert)}}$), $M'$ simulates $M$ on the prefix $w$, producing the circuit $\Phi_w$; note that, in this case, $\lvert x \rvert=\lvert w\rvert+1+2^{s(\lvert w \rvert)}$, which implies $M'$ runs in space $O(s(\lvert w \rvert))=O(\log(2^{s(\lvert w \rvert)}))=O(\log(\lvert x \rvert))$, and that $\Phi_x^{\log}$ acts on $h_x^{\log}=h_w=O(s(\lvert w \rvert))=O(\log(\lvert x \rvert))$ qubits and has $t_x^{\log}=t_w=2^{O(s(\lvert w \rvert))}=\lvert x \rvert^{O(1)}$ gates. Therefore, $\{\Phi_x^{\log}:x \in \mathcal{P}^{\log}\}$ is a $\mathsf{L}$-uniform family of general quantum circuits, with the appropriate parameters, that recognizes $\mathcal{P}^{\log}$ with two-sided bounded-error $\frac{1}{3}$.
		
		Thus, by \Cref{thm:bqlEqualsBqulEqualsQmal}, $\mathcal{P}^{\log} \in \mathsf{BQ_U L}$. Let $\{Q_x^{\log}:x \in \mathcal{P}^{\log}\}$ denote a $\mathsf{L}$-uniform family of (unitary) quantum circuits that recognizes $\mathcal{P}^{\log}$ with two-sided bounded-error $\frac{1}{3}$, where $Q_x^{\log}$ acts on $O(\log(\lvert x \rvert))$ qubits and has $\lvert x \rvert^{O(1)}$ gates, and let $M_U$ denote a logspace DTM that produces this circuit family. We then define a DTM $M_U'$, which, on input $w \in \Sigma^*$, simply simulates $M_U$ on $x=w01^{2^{s(\lvert w \rvert)}}$; in particular, in order to keep track of the simulated head of $M_U$ when it is in the suffix $01^{2^{s(\lvert w \rvert)}}$, $M_U'$ marks $s(\lvert w \rvert)+1$ cells on its work tape (recall that $s$ is space-constructible), which is then used as a classical counter that can count up to $2^{s(\lvert w \rvert)+1}-1$. By an analysis similar to that of the previous paragraph, we see that $M_U'$ runs in space $O(s(n))$ and that the circuit family $\{Q_w:w \in \mathcal{P}\}$ that it produces recognizes $\mathcal{P}$ and has the correct parameters. 
	\end{proof}	

	\section{Proof of Lemma 34}\label{sec:appendix:proofOfRqumalInRqul}
	
	\begin{proof}[Proof of \Cref{lemma:rqmalInRqul}]
		Suppose $\mathcal{P}=(\mathcal{P}_1,\mathcal{P}_0) \in \mathsf{RQ_U MAL}$. By definition, there is a $\mathsf{L}$-uniform family of (unitary) quantum circuits $\{V_w:w \in \mathcal{P}\}$, where $V_w$ acts on $m_w+h_w=O(\log \lvert w \rvert)$ qubits and has $t_w=poly(\lvert w \rvert)$ gates, such that  $w \in \mathcal{P}_1 \Rightarrow \exists \ket{\psi} \in \Psi_{m_w}, \Pr[V_w \text{ accepts } w,\ket{\psi}] \geq c= \frac{1}{2}$, and $w \in \mathcal{P}_0 \Rightarrow \forall \ket{\psi} \in \Psi_{m_w}, \Pr[V_w \text{ accepts } w,\ket{\psi}]= k=0$, where $\Pr[V_w \text{ accepts } w,\ket{\psi}]= \lVert \Pi_1 V_w (\ket{\psi} \otimes \ket{0^{h_w}}) \rVert^2$. 
		
		As in the proof of \cite[Theorem 3.8]{marriott2005quantum}, let $A_w=(I_{2^{m_w}} \otimes \bra{0^{h_w}})V_w^{\dagger} \Pi_1 V_w (I_{2^{m_w}} \otimes \ket{0^{h_w}}) \in \pos(2^{m_w})$; then $w \in \mathcal{P}_1 \Rightarrow \tr(A_w) \geq c=\frac{1}{2}$ and $w \in \mathcal{P}_0 \Rightarrow \tr(A_w) \leq 2^{m_w} k=0 \Rightarrow \tr(A_w)=0$. Similar to the proof of \cite[Theorem 14]{kobayashi2003quantum} (cf. \cite[Theorem 3.10]{marriott2005quantum}), we define a $\mathsf{L}$-uniform family of (unitary) quantum circuits $\{Q_w:w \in \mathcal{P}\}$, where $Q_w$ acts on $2m_w+h_w=O(\log \lvert w \rvert)$ qubits and has $t_w+O(m_w)=poly(\lvert w \rvert)$ gates, such that, when $Q_w$ is applied to the state $\ket{0^{2m_w+h_w}}$, it simulates $V_w$ on $\ket{q} \otimes \ket{0^{h_w}}$, where $\ket{q} \in \Psi_{m_w}$ is drawn uniformly at random from the $2^{m_w}$ standard basis elements of $\Psi_{m_w}$. We have $\Pr[Q_w \text{ accepts } w]=\tr(A_w 2^{-m_w} I_{2^{m_w}})=2^{-m_w}\tr(A_w)$; thus, $$w \in \mathcal{P}_1 \Rightarrow \Pr[Q_w \text{ accepts } w]=2^{-m_w}\tr(A_w)\geq 2^{-(m_w+1)}=1/poly(\lvert w \rvert),$$ 
		$$\text{and } w \in \mathcal{P}_0 \Rightarrow \Pr[Q_w \text{ accepts } w]=2^{-m_w}\tr(A_w)=0.$$ 
		
		Therefore, $\mathcal{P} \in \mathsf{Q_U SPACE}(\log n)_{\frac{1}{poly(n)},0}$. By \cite[Lemma 5.1]{watrous2001quantum} $\mathsf{Q_U SPACE}(\log n)_{\frac{1}{poly(n)},0}=\mathsf{RQ_U L}$, which implies $\mathcal{P} \in \mathsf{RQ_U L}$.
	\end{proof}
\end{document}